\providecommand{\U}[1]{\protect\rule{.1in}{.1in}}
\newtheorem{theorem}{Theorem}
\newtheorem{corollary}[theorem]{Corollary}
\newtheorem{definition}[theorem]{Definition}
\newtheorem{example}[theorem]{Example}
\newtheorem{lemma}[theorem]{Lemma}
\newtheorem{proposition}[theorem]{Proposition}
\newtheorem{remark}[theorem]{Remark}
\newenvironment{proof}[1][Proof]{\noindent\textbf{#1.} }{\ \rule{0.5em}{0.5em}}
\begin{document}

\title{\textbf{Equivariant differential characters and Chern-Simons bundles}}
\date{}
\author{\textsc{Roberto Ferreiro P\'{e}rez}\\Departamento de Econom\'{\i}a Financiera y Actuarial y Estad\'{\i}stica\\Facultad de Ciencias Econ\'omicas y Empresariales, UCM\\Campus de Somosaguas, 28223-Pozuelo de Alarc\'on, Spain\\\emph{E-mail:} \texttt{roferreiro@ccee.ucm.es}}
\maketitle

\begin{abstract}
We construct Chern-Simons bundles as $\mathrm{Aut}^{+}P$-equivariant
$U(1)$-bundles with connection over the space of connections $\mathcal{A}_{P}%
$\ on a principal $G$-bundle $P\rightarrow M$. We show that the Chern-Simons
bundles are determined up to isomorphisms by their equivariant holonomy. The
space of equivariant holonomies is shown to coincide with the space of
equivariant differential characteres of order $2$. Furthermore, we prove that
the Chern-Simons theory provides, in a natural way, an equivariant
differential character that determines the Chern-Simons bundles. Our
construction can be applied in the case in which $M$ is a compact manifold of
even dimension and for arbitrary bundle \thinspace$P$ and group $G$.

The results are also generalized to the case of the action of diffeomorphisms
on the space of Riemannian metrics. In particular, in dimension $2$ a
Chern-Simons bundle over the Teichm\"{u}ller space is obtained.

\end{abstract}

\noindent\emph{Mathematics Subject Classification 2010:\/}70S15; 53C08, 58J28,
53C29, 55N91.

\smallskip

\noindent\emph{Key words and phrases:\/}equivariant differential character,
equivariant holonomy, Chern-Simons bundle, space of connections, space of
Riemannian metrics.

\medskip

\noindent\emph{Acknowledgments:\/} Supported by Ministerio de Ciencia,
Innovaci\'{o}n y Universidades of Spain under grant PGC2018-098321-B-I00.

\section{Introduction}

We introduce a geometric definition of Chern-Simons bundles valid for
arbitrary Lie groups and principal bundles over even dimensional compact
manifolds. Our definition is based on the concept of equivariant holonomy
introduced in \cite{AnomaliesG} and \cite{EquiHol}. We show that the
Chern-Simons bundles can be obtained from its equivariant holonomy, and that
the equivariant holonomy is determined in a natural way from the
Cheeger-Chern-Simons differential characters introduced in \cite{Cheeger}.

First we recall the classical construction of the Chern-Simons bundles. If $M
$ is a closed $2$-manifold, then the space of connections $\mathcal{A}_{P}$ on
the trivial principal $SU(2)$-bundle $P=M\times SU(2)\rightarrow M$ is a
symplectic manifold with the Atiyah-Bott symplectic structure. It is a
classical construction in Chern-Simons theory (e.g. see \cite{RSW}) that this
symplectic manifold admits a $\mathrm{Gau}(P)$-equivariant prequantization
bundle $\mathcal{U}_{P}\rightarrow\mathcal{A}_{P}$ with connection $\Theta
_{P}$. By symplectic reduction, a prequantization of the Atiyah-Bott
symplectic structure on the moduli space of flat connections is obtained.
Furthermore, if $\widetilde{M}$ is a compact 3-manifold with boundary
$\partial\widetilde{M}=M$, $\widetilde{P}=\widetilde{M}\times SU(2) $ and
$r\colon\mathcal{A}_{\widetilde{P}}\rightarrow\mathcal{A}_{P}$ is the
restriction map, then the Chern-Simons action on $\widetilde{M}$ can be
considered as a section of the bundle $r^{\ast}\mathcal{U}_{P}^{-1}%
\rightarrow\mathcal{A}_{\widetilde{P}}$ (see also \cite{Freed1}). For this
reason the bundle with connection $(\mathcal{U}_{P},\Theta_{P})$ is called the
Chern-Simons bundle of $M$. We also recall that it is possible (e.g. see
\cite{Andersen}) to lift the action of the group of orientation preserving
diffeomorphisms $\mathcal{D}_{M}^{+}$ to $\mathcal{U}_{P}$ preserving the
connection $\Theta_{P}$.

The construction of the Chern-Simons bundle in \cite{RSW} can be easily
extended to trivial bundles with arbitrary group $G$ (e.g. see
Section\ \ref{SectEquiHolCS}). In this case the bundle is constructed by using
the Chern-Simons action associated to a Weil polynomial of $G$. For connected
and simply connected group $G$ any bundle over a $2$ or $3$-manifold is
trivial, and hence the preceding construction can be applied. However, for
nontrivial bundles and in higher dimensions this construction cannot be applied.

It is shown in \cite{CSconnections} that it is possible to define the
Chern-Simons bundle for an arbitrary principal $G$-bundle $P\rightarrow M$
with base a closed manifold $M$ of dimension $2k-2$, $k\geq2$. The
Chern-Simons bundle is associated to a Weil polynomial $p\in I^{k}(G)$, a
compatible characteristic class $\Upsilon\in H^{2k}(\mathbf{B}G)$ and a
background connection $A_{0}$. Moreover, the bundle is equivariant with
respect to the action of the group $\mathrm{Aut}^{+}P$ of automorphisms of $P
$ (and not only under gauge transformations as in \cite{RSW}). It is also
proved in \cite{CSconnections} that the bundles associated to different
background connections $A_{0}$ are canonically isomorphic. The construction of
the Chern-Simons bundle in \cite{CSconnections}\ is rather technical and hard
to interpret in geometrical terms. As commented above, in this paper we
clarify the construction of Chern-Simons bundle by using the concept of
equivariant holonomy that we recall (see \cite{EquiHol} for details). Let a
Lie group $\mathcal{G}$ act on a manifold $N$. The ordinary holonomy is
defined for closed curves. The equivariant analogue of a closed curve is a
curve $\gamma$ such that $\gamma(1)=\phi_{N}(\gamma(0))$ for an element
$\phi\in\mathcal{G}$. Note that in this case, if $\pi\colon N\rightarrow
N/\mathcal{G}$ is the projection then $\pi\circ\gamma$ is a closed curve on
$N/\mathcal{G}$. Let $\mathcal{U}\rightarrow N$ be a $\mathcal{G}$-equivariant
principal $U(1)$-bundle with a $\mathcal{G}$-invariant connection $\Theta$. We
define the $\phi$-equivariant log-holonomy $\mathrm{hol}_{\phi}^{\Theta
}(\gamma)\in\mathbb{R}/\mathbb{Z}$ of $\gamma$ by the property $\overline
{\gamma}(1)=\phi_{\mathcal{U}}(\overline{\gamma}(0))\cdot\exp(2\pi
i\mathrm{hol}_{\phi}^{\Theta}(\gamma))$ for any $\Theta$-horizontal lift
$\overline{\gamma}\colon\lbrack0,1]\rightarrow\mathcal{U}$ of $\gamma$.
Moreover, the equivariant holonomy determines the $U(1)$-bundle with
connection up to $\mathcal{G}$-equivariant isomorphisms (see \cite{EquiHol}).

In the case of the Chern-Simons bundle $(\mathcal{U}_{P,}\Theta_{P})$ over the
space of connections of a trivial $SU(2)$-bundle it is possible to\ compute
the equivariant holonomy of $\Theta_{P}$ (see Section \ref{SectEquiHolCS}) and
the result is as follows. If $\phi\in\mathrm{Aut}P$, then a curve $\gamma$ on
$\mathcal{A}_{P}$ determines a connection $A^{\gamma}$ on $P\times
I\rightarrow M\times I$ such that $A^{\gamma}|_{P\times\{t\}}=\gamma(t)$. We
denote by $P_{\phi}=(P\times I)/\sim_{\phi}$ the mapping torus of $P$, where
$(y,0)\sim_{\phi}(\phi_{P}(y),1) $ for any $y\in P$. The condition
$\gamma(1)=\phi_{\mathcal{A}_{P}}(\gamma(0))$ implies that $A^{\gamma}$
projects onto a connection $A_{\phi}^{\gamma}$ on $P_{\phi}\rightarrow
M_{\phi}$. We prove in Section \ref{SectEquiHolCS}\ that if $\phi
\in\mathrm{Gau}P$, then we have$\ $%
\begin{equation}
\mathrm{hol}_{\phi}^{\Theta_{P}}(\gamma)=-\mathrm{CS}(A_{\phi}^{\gamma
})\label{holCS}%
\end{equation}
where $\mathrm{CS}\colon\mathcal{A}_{P_{\phi}}\rightarrow\mathbb{R}%
/\mathbb{Z}$ is the usual Chern-Simons action ($\mathrm{CS}$ is well defined
because $P_{\phi}\rightarrow M\times S^{1}$ is a principal $SU(2)$-bundle over
a $3$-manifold, and hence trivializable).

As the equivariant holonomy determines the equivariant bundle up to an
isomorphism, we can use equation (\ref{holCS}) to define the Chern-Simons
bundle for arbitrary bundles and dimensions. We recall (e.g. see \cite{DW})
that the Chern-Simons action can be extended to arbitrary bundles. We define a
characteristic pair as a pair $\vec{p}=(p,\Upsilon)$, where $p\in
I_{\mathbb{Z}}^{r}(G)$ is Weil polynomial\ and $\Upsilon\in H^{2r}%
(\mathbf{B}G,\mathbb{Z})$ a compatible characteristic class. Given the
characteristic pair $\vec{p}$, for any principal $G$-bundle $Q\rightarrow N$
over a closed manifold $N$ of dimension $2r-1$ the Chern-Simons action is
defined $\mathrm{CS}^{\vec{p}}\colon\mathcal{A}_{Q}\rightarrow\mathbb{R}%
/\mathbb{Z}$ (see Section \ref{CCS}). Hence, if $M$ is a closed manifold of
dimension $2r-2$ and $P\rightarrow M$ is a principal $G$-bundle, then for any
$\phi\in\mathrm{Gau}P$ we can define
\begin{equation}
\Xi_{P}^{\vec{p}}(\phi,\gamma)=\mathrm{CS}^{\vec{p}}(A_{\phi}^{\gamma
})\label{CSChar}%
\end{equation}
and we know that there is at most one (up to an isomorphism) $\mathrm{Gau}%
P$-equivariant $U(1)$-bundle with connection whose $\mathrm{Gau}P$-equivariant
holonomy is given by $\Xi_{P}^{\vec{p}}$. Furthermore, we can compute
(\ref{CSChar}) also for any $\phi\in\mathrm{Aut}^{+}P$.

Now the remaining question is if the Chern-Simons bundle exists, i.e., if
there exists a $\mathrm{Aut}^{+}P$-equivariant $U(1)$-bundle with connection
$(\mathcal{U}_{P}^{\vec{p}},\Theta_{P}^{\vec{p}})$ such that $\mathrm{hol}%
_{\phi}^{\Theta_{P}^{\vec{p}}}(\gamma)=\Xi_{P}^{\vec{p}}(\phi,\gamma)$. We
prove that it exists by introducing the concept of equivariant differential character.

We recall that in the non-equivariant setting, the set of log-holonomies of
connections on principal $U(1)$-bundles over $N$ is known to coincide with the
space $\hat{H}^{2}(N)$ of Cheeger-Simons differential characters of degree $2$
(see Section \ref{Other def} for details). Furthermore, it is a classical
result that if $\mathrm{Bund}_{U(1)}^{\nabla}(N)$ denotes the set of principal
$U(1)$-bundles with connection over a manifold $N$ modulo isomorphisms, then
the map that assigns to a connection its holonomy induces a bijection
\begin{equation}
\mathrm{Bund}_{U(1)}^{\nabla}(N)\simeq\hat{H}^{2}(N).\label{iso}%
\end{equation}

In Section \ref{EDC}\ we define the space of equivariant differential
characters $\hat{H}_{\mathcal{G}}^{2}(N)$ and we prove an equivariant version
of the isomorphism (\ref{iso}) in the case in which $N$ is contractible (see
Section \ref{Other def} for the case of arbitrary $N$).

Finally, in the case of the Chern-Simons bundle, we prove in Section
\ref{SectEquiCS}\ that we have $\Xi_{P}^{\vec{p}}\in\hat{H}_{\mathrm{Aut}%
^{+}P}^{2}(\mathcal{A}_{P})$, where $\Xi_{P}^{\vec{p}}$ is defined by equation
(\ref{CSChar}). We conclude that the Chern-Simons bundle exists as a
$\mathrm{Aut}^{+}P$-equivariant bundle for any principal $G$-bundle
$P\rightarrow M$. It is important to note that the equivariant differential
character $\Xi_{P}^{\vec{p}}$ is determined only by $\vec{p}$, but to obtain a
concrete bundle and connection additional information is necessary. It is
shown in Section \ref{SectEquiCS} that if we fix a background connection
$A_{0}\in\mathcal{A}_{P}$, it is possible to obtain a concrete $\mathrm{Aut}%
^{+}P$-equivariant bundle with connection. In this way we recover the result
proved in \cite{CSconnections}, but our proofs are simpler and more conceptual.

\subsection{Further applications}

The equivariant differential character $\Xi_{P}^{\vec{p}}$ is the fundamental
geometrical object from which other geometric constructions can be derived.
For example, let $\mathcal{F}_{P}\subset\mathcal{A}_{P}$ be the space of flat
connections and let $\mathcal{G}$ be a subgroup of $\mathrm{Aut}^{+}P$ acting
freely on $\mathcal{F}_{P}$. Then $\Xi_{P}^{\vec{p}}$ projects onto a
differential character $\underline{\Xi}_{P}^{\vec{p},\mathcal{F}}\in\hat
{H}_{\mathrm{Aut}^{+}P}^{2}(\mathcal{F}_{P}/\mathcal{G})$, and hence
determines a $U(1)$-bundle $\underline{\mathcal{U}}^{\vec{p}}\rightarrow
\mathcal{F}_{P}/\mathcal{G}$ with connection $\underline{\Theta}^{\vec{p}}$
(defined up to an isomorphism). In the case in which $G=SU(2)$, $\vec{p}$ is
the characteristic pair corresponding to the second Chern class, $M$ is a
Riemann surface and $P=M\times SU(2)$ the bundle $(\underline{\mathcal{U}%
}^{\vec{p}},\underline{\Theta}^{\vec{p}})$ is isomorphic to Quillen's
determinant line bundle. In Section \ref{SectGau} we study the restriction of
$\Xi_{P}^{\vec{p}}$ to the action of the Gauge group, and\ it is\ shown how
the classical constructions in Chern-Simons theory can be generalized to
arbitrary bundles. In Section \ref{Sec Metrics} we consider the action of the
orientation preserving diffeomorphisms group $\mathcal{D}_{M}^{+}$ on the
space of Riemannian metrics $\mathcal{M}_{M}$ on a compact oriented manifold
$M$\ of dimension $4k-2$. Precisely, let $FM\rightarrow M$ be the frame bundle
of $M$ and let $\vec{p}$ be the characteristic pair corresponding to the
$k$-th Pontryagin class. Then we can pull-back the character $\Xi_{FM}%
^{\vec{p}}$ by the Levi-Civita map and we obtain a equivariant differential
character $\Sigma_{\mathcal{M}_{M}}^{\vec{p}}\in\hat{H}_{\mathcal{D}_{M}^{+}%
}^{2}(\mathcal{M}_{M})$. In the case $k=1$, if $M$ is a Riemann surface of
genus $g>1$, then $\Sigma_{\mathcal{M}_{M}}^{\vec{p}}$ projects onto a
equivariant differential character $\underline{\Sigma}_{\mathcal{M}_{M}}%
^{\vec{p}}\in\hat{H}_{\Gamma_{M}}^{2}(\mathcal{T}_{M})$, where $\mathcal{T}%
_{M}$ is the Teichm\"{u}ller space of $M $ and $\Gamma_{M}$ is the mapping
class group of $M$. Furthermore, the curvature of $\underline{\Sigma
}_{\mathcal{M}_{M}}^{\vec{p}}$ is $\frac{1}{2\pi}\sigma_{\mathrm{WP}},$ where
$\sigma_{\mathrm{WP}}$ is the symplectic form associated to the Weil-Petersson
metric on $\mathcal{T}_{M}$. By the the equivariant version of isomorphisms
(\ref{iso}) we obtain a $\Gamma_{M}$-equivariant prequantization bundle for
$\frac{1}{2\pi}\sigma_{\mathrm{WP}}$ (determined up to an isomorphism).

Furthermore, there are other important constructions in gauge theory that can
be interpreted as equivariant differential characters. One important\ example
is Witten global gravitational anomaly formula \cite{WittenGlobAn}. We recall
that in \cite{WittenGlobAn}, in order to study global gravitational anomalies,
Witten studies the variation of the path integral\footnote{$Z$ is defined as
the regularized determinant of a $\mathcal{D}_{M} $-equivariant family of
Dirac operators $\delta_{g}$ parametrized by Riemannian metrics $g\in
\mathcal{M}_{M}$} $Z$ under the action of the diffeomorphisms group
$\mathcal{D}_{M}$. He defines a number $w(\phi,\gamma)\in\mathbb{R}%
/\mathbb{Z}$ that measures the variation of $Z$ along a curve $\gamma\colon
I\rightarrow\mathcal{M}_{M}$ such that $\gamma(1)=\phi(\gamma(0))$ for a
diffeomorphism $\phi\in\mathcal{D}_{M}$. In more detail $w(\phi,\gamma
)=\lim\eta(\delta_{\phi})$, where $\eta$ denotes the Atiyah-Patodi-Singer
$\eta$-invariant, $\delta_{\phi}$ is an elliptic operator on the mapping torus
$M_{\phi}$ and $\lim$ denotes adiabatic limit\footnote{Note the similarity
between the definitions of $w$ and $\Xi_{P}^{\vec{p}}$.}. Later Witten's
formula was interpreted (e.g. see \cite{freed}) as a computation of the
holonomy of the Bismut-Freed connection $\Theta^{\delta} $ on the quotient
determinant line bundle $\det\delta/\mathcal{D}_{M}\rightarrow\mathcal{M}%
_{M}/\mathcal{D}_{M}$, or more precisely, as a computation of the
$\mathcal{D}_{M}$-equivariant holonomy on the equivariant determinant line
bundle $\det\delta\rightarrow\mathcal{M}_{M}$ (see \cite{LocUni},
\cite{EquiHol}). In particular $w\in\hat{H}_{\mathcal{D}_{M}}^{2}%
(\mathcal{M}_{M})$.

\section{Equivariant cohomology in the Cartan model}

First, we recall the definition of equivariant cohomology in the Cartan model
(\emph{e.g. }see \cite{GS}). Suppose that we have a left action of a connected
Lie group $G$ on a manifold $M$. If $\phi\in G$ and $x\in M$, we denote by
$\phi_{M}(x)$ or simply by $\phi\cdot x$ the action of $\phi$ on $x$. In a
similar way, for $X\in\mathfrak{g}$ the fundamental vector field $X_{M}%
\in\mathfrak{X}(M)$\ is defined by $X_{M}(x)=\left.  \frac{d}{dt}\right\vert
_{t=0}\exp(-tX)_{M}(x)$.

We denote by $\Omega^{k}(M)^{G}$ the space of $G$-invariant $k$-forms on $M$.
Let $\Omega_{G}^{\bullet}(M)=\mathcal{P}^{\bullet}(\mathfrak{g},\Omega
^{\bullet}(M))^{G}$ be the space of $G$-invariant
polynomials\footnote{Continuous polynomials in the infinite dimensional case.}
on $\mathfrak{g}$ with values in $\Omega^{\bullet}(M)$, with the graduation
$\deg(\alpha)=2k+r$ if $\alpha$ is a polynomial of degree $k$ with values on
the space $\Omega^{r}(M)$. Let $D\colon\Omega_{G}^{q}(M)\rightarrow\Omega
_{G}^{q+1}(M)$ be the Cartan differential, $(D\alpha)(X)=d(\alpha
(X))-\iota_{X_{M}}\alpha(X)$ for $X\in\mathfrak{g}$. On $\Omega_{G}^{\bullet
}(M)$ we have $D^{2}=0$, and the equivariant cohomology (in the Cartan model)
of $M$ with respect of the action of $G$ is defined as the cohomology of this
complex. If $\varpi\in\Omega_{G}^{2}(M)$ is a $G$-equivariant $2$-form, then
we have $\varpi=\omega+\mu$ where $\omega\in\Omega^{2}(M)$ is $G$-invariant
and $\mu\in\mathrm{Hom}\left(  \mathfrak{g},\Omega^{0}(M)\right)  ^{G}$. We
have $D\omega=0\;$if\ and only if $d\omega=0$, and $\iota_{X_{M}}\omega
=d(\mu_{X})\;$for every $X\in\mathfrak{g}$. Hence $\mu$ is a comoment map for
$\omega$.

Let a group $G$ act on a manifold $M$ and let $\rho\colon H\rightarrow G$ be a
homomorphism. We denote by $d\rho\colon\mathfrak{h}\rightarrow\mathfrak{g}$
the induced map on Lie algebras. If $H$ acts on another manifold $N$ we say
that $f\colon N\rightarrow M$ is $\rho$-equivariant if $f(\phi_{N}%
(x))=\rho(\phi)_{M}(f(x))$ for any $x\in N$ and $\phi\in H$. In this case, we
have a map $(f,\rho)^{\ast}\colon\Omega_{G}^{2}(M)\rightarrow\Omega_{H}%
^{2}(N)$ defined by $((f,\rho)^{\ast}\alpha)(X)=f^{\ast}(\alpha(d\rho(X)))$
for $X\in\mathfrak{h}$ and $\alpha\in\Omega_{G}^{2}(M)$.

We recall the definition of equivariant characteristic classes (see
\cite{BV2}). Let $H\ $be a group that acts on a principal $G$-bundle
$P\rightarrow M${\ }and let $A$ be a connection on $P$ invariant under the
action of $H$. It can be proved that for every $X\in\mathfrak{h}$
the{\ $\mathfrak{g}$-valued function $A(X_{P})$ is of adjoint type and defines
a section of the adjoint bundle} $v_{A}(X)\in\Omega^{0}(M,\mathrm{ad}P)$. We
denote by $I^{r}(G)$\ the space of Weil polynomials of degree $r$. For every
$p\in I^{r}(G)$ the $H$-equivariant characteristic form $p_{H}^{A}\in
\Omega_{H}^{2r}(M{)}$\ associated to $p$ and $A$, is defined{\ by} $p_{H}%
^{A}(X)=p(F_{A}-v_{A}(X))$ for every $X\in\mathfrak{h}$ and we have
$Dp_{H}^{A}=0$.

If $\alpha\in\Omega^{k}(M\times N)$ with $M$ compact and oriented we define
$\int\nolimits_{M}\alpha\in\Omega^{k-d}(N)$ by $\left(  \int\nolimits_{M}%
\alpha\right)  _{y}(X_{1},\ldots,X_{k-d})=\int\nolimits_{M}\iota_{X_{k-d}%
}\cdots\iota_{X_{1}}\alpha$ for $y\in N$, $X_{1},\ldots,X_{d}\in T_{y}N$. If
$k<d$ we define $\int\nolimits_{M}\alpha=0$. We have $\int\nolimits_{N}%
\int\nolimits_{M}\alpha=\int\nolimits_{M\times N}\alpha$ and Stokes theorem
$d\int\nolimits_{M}\alpha=\int\nolimits_{M}d\alpha-(-1)^{k-d}\int
\nolimits_{\partial M}\alpha$. Furthermore, if a group $G$ acts on $M$ and $N$
then the integration map is extended to equivariant differential forms $%
%TCIMACRO{\tint \nolimits_{M}}%
%BeginExpansion
{\textstyle\int\nolimits_{M}}
%EndExpansion
\colon\Omega_{G}^{k}(M\times N)\rightarrow\Omega_{G}^{k-d}(N)$ by setting
$\left(
%TCIMACRO{\tint \nolimits_{M}}%
%BeginExpansion
{\textstyle\int\nolimits_{M}}
%EndExpansion
\alpha\right)  (X)=%
%TCIMACRO{\tint \nolimits_{M}}%
%BeginExpansion
{\textstyle\int\nolimits_{M}}
%EndExpansion
(\alpha(X))$ for $X\in\mathfrak{g}$, and we have $D\left(  \int\nolimits_{M}%
\alpha\right)  =\int\nolimits_{M}D\alpha-(-1)^{k-d}\int\nolimits_{\partial
M}\alpha$.

\section{Equivariant holonomy}

In this section we recall the definition and properties of equivariant
holonomy introduced in \cite{AnomaliesG} for bundles with contractible base
and in \cite{EquiHol} for arbitrary bundles. Let $G$ be a Lie group with Lie
algebra $\mathfrak{g}$ and let $M$ be a connected and oriented manifold. A $G
$-equivariant $U(1)$-bundle is a principal $U(1)$-bundle $\mathcal{U}%
\rightarrow M$ in which $G$ acts (on the left) by principal bundle
automorphisms. We denote by $I$ the interval $[0,1]$. If $\gamma\colon
I\rightarrow M$ is a curve, we define the inverse curve $\overleftarrow
{\gamma}\colon I\rightarrow M$ by $\overleftarrow{\gamma}(t)=\gamma(1-t)$.
Moreover, if $\gamma_{1}$ and $\gamma_{2}\mathcal{\ }$are curves with
$\gamma_{1}(1)=\gamma_{2}(0)$ we define $\gamma_{1}\ast\gamma_{2}\colon
I\rightarrow\mathbb{R}$ by $\gamma_{1}\ast\gamma_{2}(t)=\gamma_{1}(2t)$ for
$t\in\lbrack0,1/2]$ and $\gamma_{1}\ast\gamma_{2}(t)=\gamma_{2}(2t-1)$ for
$t\in\lbrack1/2,1]$. For any $\phi\in G$ we define%
\[
\mathcal{C}^{\phi}(M)=\{\gamma\colon I\rightarrow M\text{ }|\text{ }%
\gamma\text{ is piecewise smooth and }\gamma(1)=\phi_{M}(\gamma(0))\},
\]
and $\mathcal{C}_{x}^{\phi}(M)=\{\gamma\in\mathcal{C}^{\phi}(M)$ $|$
$\gamma(0)=x\}$, $\mathcal{C}^{G}(M)=\{(\phi,\gamma)$ $|$ $\phi\in G$,
$\gamma\in\mathcal{C}^{\phi}(M)\}$. Note that if $e\in G$ is the identity
element, then $\mathcal{C}_{x}^{e}(M)=\mathcal{C}_{x}(M)$ is the space of
loops based at $x$. If $\phi\in G$ and $\gamma\in\mathcal{C}_{x}^{\phi
^{\prime}}(M)$ then we define $\phi\cdot\gamma\in\mathcal{C}_{\phi_{M}%
(x)}^{\phi\cdot\phi^{\prime}\cdot\phi^{-1}}(M)$ by $(\phi\cdot\gamma
)(t)=\phi_{M}(\gamma(t))$. We say that two curves $\gamma_{1}$,$\gamma_{2}$ on
$M$ differ by a reparametrization if $\gamma_{1}=\gamma_{2}\circ\sigma$, for a
piecewise smooth function $\sigma\colon I\rightarrow I\ $such that
$\sigma(0)=0$, $\sigma(1)=1$.

Let $\Theta$ be a $G$-invariant connection on a $G$-equivariant $U(1)$-bundle
$\mathcal{U}\rightarrow M$. If $\phi\in G$ and $\gamma\in\mathcal{C}^{\phi
}(M)$, the $\phi$-equivariant holonomy $\mathrm{Hol}_{\phi}^{\Theta}%
(\gamma)\in U(1)$ of $\gamma$ is characterized by the property $\overline
{\gamma}(1)=\phi_{\mathcal{U}}(\overline{\gamma}(0))\cdot\mathrm{Hol}_{\phi
}^{\Theta}(\gamma)$ for any $\Theta$-horizontal lift $\overline{\gamma}\colon
I\rightarrow\mathcal{U}$ of $\gamma$. We define the $\phi$-equivariant
log-holonomy $\mathrm{hol}_{\phi}^{\Theta}(\gamma)\in\mathbb{R}/\mathbb{Z}$ by
$\mathrm{Hol}_{\phi}^{\Theta}(\gamma)=\exp(2\pi i\mathrm{hol}_{\phi}^{\Theta
}(\gamma))$. Note that if $\gamma\in\mathcal{C}_{x}^{e}(M)$ is a loop on $M$,
then $\mathrm{Hol}_{e}^{\Theta}(\gamma)$ is the ordinary holonomy of $\gamma
$.\ Furthermore, if $\gamma_{1}$,$\gamma_{2}\in\mathcal{C}^{\phi}(M)$ differ
by a reparametrization then we have $\mathrm{hol}_{\phi}^{\Theta}(\gamma
_{1})=\mathrm{hol}_{\phi}^{\Theta}(\gamma_{2})$. The following results are
proved in \cite{EquiHol}.

\begin{proposition}
\label{PropHol}If $\mathcal{U}\rightarrow M$ is a $G$-equivariant principal
$U(1)$-bundle, and $\Theta$ is a $G$-invariant connection on $\mathcal{U}$,
then for any $\phi$, $\phi^{\prime}\in G$,\ $\gamma\in\mathcal{C}^{\phi}%
(M)$\ and $x\in M$ we have

a) $\mathrm{hol}_{\phi^{\prime}\cdot\phi\cdot(\phi^{\prime})^{-1}}^{\Theta
}(\phi^{\prime}\cdot\gamma)=\mathrm{hol}_{\phi}^{\Theta}(\gamma)$.

b) If $\gamma^{\prime}\in\mathcal{C}_{\gamma(1)}^{\phi^{\prime}}(M)$, then we
have $\mathrm{hol}_{\phi^{\prime}\cdot\phi}^{\Theta}(\gamma\ast\gamma^{\prime
})=\mathrm{hol}_{\phi}^{\Theta}(\gamma)+\mathrm{hol}_{\phi^{\prime}}^{\Theta
}(\gamma^{\prime}).$

c) $\mathrm{hol}_{\phi^{-1}}^{\Theta}(\overleftarrow{\gamma})=-\mathrm{hol}%
_{\phi}^{\Theta}(\gamma)$.

d) If $\zeta\colon I\rightarrow M$ is a curve on $M$ such that $\zeta
(0)=\gamma(0)$ then $\mathrm{hol}_{\phi}^{\Theta}(\overleftarrow{\zeta}%
\ast\gamma\ast(\phi\cdot\zeta))=\mathrm{hol}_{\phi}^{\Theta}(\gamma) $.

f) Let $\mathcal{U}^{\prime}\rightarrow M^{\prime}$ be another $G$-equivariant
$U(1)$-bundle with connection and $\Phi\colon\mathcal{U}^{\prime}%
\rightarrow\mathcal{U}$ be a $G$-equivariant $U(1)$-bundle morphism that
covers $\underline{\Phi}\colon M^{\prime}\rightarrow M$. The connection
$\Theta^{\prime}=\Phi^{\ast}\Theta$ is $G$-invariant and we have
$\mathrm{hol}_{\phi}^{\Theta^{\prime}}(\gamma^{\prime})=\mathrm{hol}_{\phi
}^{\Theta}(\underline{\Phi}\circ\gamma^{\prime})$ for any $\phi\in G$ and
$\gamma^{\prime}\in\mathcal{C}^{\phi}(M^{\prime})$.
\end{proposition}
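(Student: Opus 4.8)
The plan is to reduce every item to the single defining relation $\overline{\gamma}(1)=\phi_{\mathcal{U}}(\overline{\gamma}(0))\cdot\mathrm{Hol}_{\phi}^{\Theta}(\gamma)$, together with two structural facts that I would record at the outset. First, since $\mathcal{U}\rightarrow M$ is a principal $U(1)$-bundle, any two $\Theta$-horizontal lifts of the same curve differ by right multiplication by a fixed element of $U(1)$; because each $\phi_{\mathcal{U}}$ is a $U(1)$-bundle automorphism it commutes with this right action, so $\mathrm{Hol}_{\phi}^{\Theta}(\gamma)$ is independent of the chosen lift and is well defined. Second, since $\Theta$ is $G$-invariant, $\phi_{\mathcal{U}}$ maps $\Theta$-horizontal curves to $\Theta$-horizontal curves; hence $\phi_{\mathcal{U}}\circ\overline{\gamma}$ is a horizontal lift of $\phi\cdot\gamma$, and right-multiplying a horizontal curve by a constant of $U(1)$ yields another horizontal curve. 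I would also use that the left action composes as $\phi'_{\mathcal{U}}\circ\phi_{\mathcal{U}}=(\phi'\phi)_{\mathcal{U}}$, so that $(\phi'\phi(\phi')^{-1})_{\mathcal{U}}\circ\phi'_{\mathcal{U}}=\phi'_{\mathcal{U}}\circ\phi_{\mathcal{U}}$, and that reparametrization leaves the holonomy unchanged, as recalled before the statement.

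For (a) and (f) I would simply transport the defining relation through an equivariant map. In (a), starting from a horizontal lift $\overline{\gamma}$ of $\gamma$, the curve $\phi'_{\mathcal{U}}\circ\overline{\gamma}$ lifts $\phi'\cdot\gamma$; applying $\phi'_{\mathcal{U}}$ to the defining relation for $\gamma$ and using $U(1)$-equivariance together with the composition rule rewrites the right-hand side as $(\phi'\phi(\phi')^{-1})_{\mathcal{U}}(\phi'_{\mathcal{U}}(\overline{\gamma}(0)))\cdot\mathrm{Hol}_{\phi}^{\Theta}(\gamma)$, which is exactly the defining relation for $\phi'\cdot\gamma$ with group element $\phi'\phi(\phi')^{-1}$, so comparison gives the claim. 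In (f), I would first check that $\Theta'=\Phi^{\ast}\Theta$ is $G$-invariant from $\Phi\circ\phi_{\mathcal{U}'}=\phi_{\mathcal{U}}\circ\Phi$ and the invariance of $\Theta$; then, because $\Phi$ sends $\Theta'$-horizontal vectors to $\Theta$-horizontal vectors, $\Phi\circ\overline{\gamma'}$ is a $\Theta$-horizontal lift of $\underline{\Phi}\circ\gamma'$, and applying $\Phi$ (which is both $U(1)$- and $G$-equivariant) to the defining relation for $\gamma'$ reproduces the defining relation for $\underline{\Phi}\circ\gamma'$, giving the equality of holonomies.

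The additivity items (b), (c), (d) I would settle by splicing horizontal lifts along concatenations. For (b), take the horizontal lift $\overline{\gamma}$ of $\gamma$ and continue it by the horizontal lift of $\gamma'$ issuing from $\overline{\gamma}(1)$; the concatenation lifts $\gamma\ast\gamma'$, its endpoint is $\phi'_{\mathcal{U}}(\overline{\gamma}(1))\cdot\mathrm{Hol}_{\phi'}^{\Theta}(\gamma')$, and substituting $\overline{\gamma}(1)=\phi_{\mathcal{U}}(\overline{\gamma}(0))\cdot\mathrm{Hol}_{\phi}^{\Theta}(\gamma)$ and using the composition rule yields $(\phi'\phi)_{\mathcal{U}}(\overline{\gamma}(0))\cdot\mathrm{Hol}_{\phi}^{\Theta}(\gamma)\cdot\mathrm{Hol}_{\phi'}^{\Theta}(\gamma')$, whence the holonomies add. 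Item (c) follows either as the special case $\phi'=\phi^{-1}$, $\gamma'=\overleftarrow{\gamma}$ of (b), once one notes that the ordinary holonomy of the backtracking loop $\gamma\ast\overleftarrow{\gamma}$ vanishes, or directly by observing that $\overleftarrow{\overline{\gamma}}$ is a horizontal lift of $\overleftarrow{\gamma}$ and reading off the defining relation. For (d) I would lift $\overleftarrow{\zeta}\ast\gamma\ast(\phi\cdot\zeta)$ by gluing the reverse of a horizontal lift $\overline{\zeta}$ (chosen with $\overline{\zeta}(0)=\overline{\gamma}(0)$), then $\overline{\gamma}$, and finally the horizontal lift of $\phi\cdot\zeta$ starting at $\overline{\gamma}(1)$; the key point is that this last lift is $(\phi_{\mathcal{U}}\circ\overline{\zeta})\cdot\mathrm{Hol}_{\phi}^{\Theta}(\gamma)$, so the whole lift ends at $\phi_{\mathcal{U}}(\overline{\zeta}(1))\cdot\mathrm{Hol}_{\phi}^{\Theta}(\gamma)$ while starting at $\overline{\zeta}(1)$, and the defining relation gives $\mathrm{Hol}_{\phi}^{\Theta}(\overleftarrow{\zeta}\ast\gamma\ast(\phi\cdot\zeta))=\mathrm{Hol}_{\phi}^{\Theta}(\gamma)$.

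None of the steps is conceptually deep; the one place demanding care, and the main obstacle to a clean write-up, is the consistent bookkeeping of the two separate actions on $\mathcal{U}$: the left $G$-action, whose elements need not commute and appear inside the maps $\phi_{\mathcal{U}}$, and the right $U(1)$-action, which is abelian and carries the holonomy factors. The arguments work precisely because $\phi_{\mathcal{U}}$ is $U(1)$-equivariant, letting the holonomy constants pass through the left action unchanged, and because in (d) the mismatch between $\phi_{\mathcal{U}}(\overline{\gamma}(0))$ and $\overline{\gamma}(1)$ is exactly the constant $\mathrm{Hol}_{\phi}^{\Theta}(\gamma)$; I would make sure these two points are stated explicitly rather than absorbed silently into the computations.
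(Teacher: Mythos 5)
Your proposal is correct: every item is reduced soundly to the defining relation for $\mathrm{Hol}_{\phi}^{\Theta}$ via horizontal lifts, and the two structural facts you isolate (independence of the lift, and that both $\phi_{\mathcal{U}}$ and right translation by a constant preserve horizontality, with $U(1)$-equivariance letting holonomy factors slide past the left action) are exactly what makes each computation close. Note, however, that the paper itself gives no proof of this proposition --- it is quoted from the reference \cite{EquiHol} --- so there is no in-paper argument to compare against; your write-up supplies the standard, self-contained argument one would expect that reference to contain, including the slightly delicate point in d) that the horizontal lift of $\phi\cdot\zeta$ issuing from $\overline{\gamma}(1)$ is $(\phi_{\mathcal{U}}\circ\overline{\zeta})\cdot\mathrm{Hol}_{\phi}^{\Theta}(\gamma)$, which is where the conjugation-invariance of the construction actually gets used.
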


If $\mathcal{U}\rightarrow M$, $\mathcal{U}^{\prime}\rightarrow M$\ are\ two
$G$-equivariant $U(1)$-bundles then we write that $\mathcal{U}^{\prime}%
\simeq_{G}\mathcal{U}$ if there exists a $G$-equivariant $U(1)$-bundle
isomorphism $\Phi\colon\mathcal{U}^{\prime}\rightarrow\mathcal{U}$ covering
the identity map of $M$. We say that $\mathcal{U}$ is a trivial $G$%
-equivariant $U(1)$-bundle if $\mathcal{U}\simeq_{G}M\times U(1)$ for an
action of $G$ on $M$ and where $G$ acts trivially on $U(1)$. A $G$-equivariant
$U(1)$-bundle with connection is a pair $(\mathcal{U},\Theta)$, where
$\mathcal{U}\rightarrow M$ is a $G$-equivariant $U(1)$-bundle and $\Theta$ is
a $G$-invariant connection on $\mathcal{U}$. We write that $(\mathcal{U}%
,\Theta)\simeq_{G}(\mathcal{U}^{\prime},\Theta^{\prime})$ if there exists a
$G$-equivariant $U(1)$-bundle isomorphism $\Phi\colon\mathcal{U}^{\prime
}\rightarrow\mathcal{U}$ covering the identity map of $M $ such that
\ $\Phi^{\ast}\Theta=\Theta^{\prime}$.

\begin{theorem}
\label{anomaly} Let $(\mathcal{U},\Theta)$ and $(\mathcal{U}^{\prime}%
,\Theta^{\prime})$ be $G$-equivariant $U(1)$-bundles with connection over $M$.

a) We have $(\mathcal{U},\Theta)\simeq_{G}(\mathcal{U}^{\prime},\Theta
^{\prime})$ if and only if $\mathrm{hol}_{\phi}^{\Theta}(\gamma)=\mathrm{hol}%
_{\phi}^{\Theta^{\prime}}(\gamma)$ for all $\phi\in G$, and $\gamma
\in\mathcal{C}^{\phi}(M)$.

b) The bundle $\mathcal{U}\rightarrow M$ is a trivial $G$-equivariant
$U(1)$-bundle if and only if there exists a $G$-invariant $1$-form $\beta
\in\Omega^{1}(M)^{G}$ such that $\mathrm{hol}_{\phi}^{\Theta}(\gamma
)=\int_{\gamma}\beta\operatorname{mod}\mathbb{Z}$ for any $\phi\in G$ and
$\gamma\in\mathcal{C}^{\phi}(M)$.
\end{theorem}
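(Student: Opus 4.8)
The plan is to deduce everything from a single auxiliary statement together with Proposition \ref{PropHol}: a $G$-equivariant $U(1)$-bundle with connection whose equivariant holonomy vanishes identically is $G$-equivariantly isomorphic, connection included, to the trivial flat bundle.

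For part a), the "only if" direction is immediate: an isomorphism $\Phi\colon\mathcal{U}'\to\mathcal{U}$ over $\mathrm{id}_M$ with $\Phi^{\ast}\Theta=\Theta'$ is exactly the situation of Proposition \ref{PropHol} f) with $\underline{\Phi}=\mathrm{id}_M$, giving $\mathrm{hol}_\phi^{\Theta'}(\gamma)=\mathrm{hol}_\phi^{\Phi^{\ast}\Theta}(\gamma)=\mathrm{hol}_\phi^{\Theta}(\gamma)$. For the "if" direction I would introduce the $U(1)$-bundle $L\to M$ whose fibre $L_x$ is the $U(1)$-torsor of $U(1)$-bundle isomorphisms $\mathcal{U}'_x\to\mathcal{U}_x$; a global section of $L$ is the same datum as a bundle morphism $\Phi\colon\mathcal{U}'\to\mathcal{U}$ over $\mathrm{id}_M$, it is $G$-equivariant iff $\Phi$ is, and the connections $\Theta,\Theta'$ induce a $G$-invariant connection $\Xi$ on $L$ for which the section is horizontal iff $\Phi^{\ast}\Theta=\Theta'$. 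Since horizontal lifts in $L$ are assembled from those in $\mathcal{U}$ and $\mathcal{U}'$, its equivariant holonomy is the difference $\mathrm{hol}_\phi^{\Xi}(\gamma)=\mathrm{hol}_\phi^{\Theta}(\gamma)-\mathrm{hol}_\phi^{\Theta'}(\gamma)$, which vanishes by hypothesis. Thus it suffices to produce a $G$-equivariant $\Xi$-horizontal section of $L$.

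To build such a section I would fix $x_0\in M$ and $u_0\in L_{x_0}$ and set $s(x)=\overline{\alpha}(1)$, where $\alpha$ is any piecewise smooth path from $x_0$ to $x$ and $\overline{\alpha}$ is its $\Xi$-horizontal lift with $\overline{\alpha}(0)=u_0$. Because $M$ is connected and the ordinary holonomy (the case $\phi=e$) vanishes, parallel transport is path-independent, so $s$ is a well-defined horizontal section. The main obstacle is showing $s$ is $G$-equivariant, i.e. $s(\phi_M x)=\phi_L(s(x))$, and this is exactly where the equivariant holonomy for $\phi\neq e$ enters. The key point is that for any path $\delta$ from $x_0$ to $\phi_M x_0$ one has $\delta\in\mathcal{C}^{\phi}_{x_0}(M)$, and the defining property of $\mathrm{Hol}_\phi^{\Xi}$ applied to $\delta$ gives $\overline{\delta}(1)=\phi_L(u_0)\cdot\mathrm{Hol}_\phi^{\Xi}(\delta)=\phi_L(u_0)$; hence $\phi_L(u_0)$ is precisely the parallel transport of $u_0$ along $\delta$. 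Since $\Xi$ is $G$-invariant, $\phi_L\circ\overline{\alpha}$ is the horizontal lift of $\phi\cdot\alpha$ starting at $\phi_L(u_0)$, so $\phi_L(s(x))$ is the parallel transport of $u_0$ along $\delta\ast(\phi\cdot\alpha)$, a path from $x_0$ to $\phi_M x$; by path-independence this equals $s(\phi_M x)$. This yields the desired equivariant isomorphism and completes part a).

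For part b), the "only if" direction follows by writing $\Theta$ in the given $G$-equivariant trivialization $\mathcal{U}\simeq_G M\times U(1)$: since $G$ acts trivially on the $U(1)$-factor and $\Theta$ is $G$-invariant, $\Theta$ corresponds to the connection determined by a $G$-invariant $\beta\in\Omega^{1}(M)^{G}$, and integrating the horizontal lift of $\gamma\in\mathcal{C}^{\phi}(M)$, using that $\phi_{\mathcal{U}}$ acts trivially on the fibre, gives $\mathrm{hol}_\phi^{\Theta}(\gamma)=\int_\gamma\beta\bmod\mathbb{Z}$. Conversely, given such a $\beta$, I would equip $M\times U(1)$ with the trivial $G$-action on the $U(1)$-factor and the connection $\Theta_\beta$ defined by $\beta$; by the computation just made its equivariant holonomy is also $\int_\gamma\beta\bmod\mathbb{Z}$, so $\mathrm{hol}_\phi^{\Theta}=\mathrm{hol}_\phi^{\Theta_\beta}$ for all $\phi$ and $\gamma$, and part a) gives $(\mathcal{U},\Theta)\simeq_G(M\times U(1),\Theta_\beta)$, in particular $\mathcal{U}$ is a trivial $G$-equivariant $U(1)$-bundle.
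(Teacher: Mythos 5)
Your proposal is correct, but there is nothing in the paper to compare it against: the paper does not prove Theorem \ref{anomaly} at all, stating only that it is proved in the reference \cite{EquiHol}. Taken on its own terms, your argument is sound and is the natural one. The ``only if'' directions are immediate (Proposition \ref{PropHol} f) for a), and the description of invariant connections on the trivial equivariant bundle for b)), and the substantive content is your reduction of a) to the case of identically vanishing equivariant holonomy via the difference bundle $L=\mathrm{Hom}(\mathcal{U}',\mathcal{U})\simeq\mathcal{U}\otimes(\mathcal{U}')^{-1}$ with its induced $G$-invariant connection $\Xi$: the identity $\mathrm{hol}_{\phi}^{\Xi}=\mathrm{hol}_{\phi}^{\Theta}-\mathrm{hol}_{\phi}^{\Theta'}$ is easily checked from the definition of horizontal lifts in $L$, vanishing ordinary holonomy gives path-independence of parallel transport and hence a global horizontal section $s$, and your use of a single path $\delta\in\mathcal{C}_{x_{0}}^{\phi}(M)$ to identify $\phi_{L}(u_{0})$ with the parallel transport of $u_{0}$ along $\delta$ is exactly the point where equivariant (rather than ordinary) holonomy is needed to make $s$ equivariant. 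Deducing b) from a) by comparing with the model $(M\times U(1),\vartheta-2\pi i\beta)$ is also clean and avoids any separate construction. Two steps are left implicit and deserve a sentence each in a full write-up: that the parallel-transport section $s$ is smooth (standard, from local existence of horizontal sections for a connection with trivial holonomy, and valid as well in the Fr\'{e}chet setting of $\mathcal{A}_{P}$ where the theorem is later applied), and that $\Xi$ is indeed a well-defined $G$-invariant connection on $L$ for which a section is horizontal precisely when the corresponding morphism intertwines $\Theta'$ and $\Theta$; both are routine.
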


\subsection{Equivariant Curvature}

If $\Theta$ is a $G$-invariant connection on a principal $U(1)$ bundle
$\mathcal{U}\rightarrow M$ then $\frac{i}{2\pi}D(\Theta)$ is the pull-back of
a closed $G$-equivariant $2$-form $\mathrm{curv}_{G}(\Theta)\in\Omega_{G}%
^{2}(M)$ called the $G$-equivariant curvature of $\Theta$. If $X\in
\mathfrak{g}$ then we have $\mathrm{curv}_{G}(\Theta)(X)=\mathrm{curv}%
(\Theta)+\mu_{X}^{\Theta}$, where $\mu_{X}^{\Theta}=-\frac{i}{2\pi}%
\Theta(X_{\mathcal{U}})$ is called the momentum of $\Theta$. As it is well
known, for bundles with arbitrary group the curvature of $\Theta$\ measures
the infinitesimal holonomy. For $U(1)$-bundles we have a more precise result
that is a generalization of the classical Gauss-Bonnet Theorem for surfaces

\begin{proposition}
\label{HolonomyInt}If $\gamma\in\mathcal{C}^{e}(M)$ and $\gamma=\partial
\sigma$ for $\sigma\in C_{2}(M)$ then\ we have $\mathrm{hol}^{\Theta}%
(\gamma)=\int_{\sigma}\mathrm{curv}(\Theta)\operatorname{mod}\mathbb{Z}$.
\end{proposition}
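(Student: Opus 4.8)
The plan is to reduce the statement, via the elementary properties of the log-holonomy collected in Proposition \ref{PropHol}, to the case of a single smooth singular $2$-simplex, and then to settle that case by trivializing $\mathcal{U}$ over the contractible simplex and applying Stokes' theorem. First, since $\phi=e$ we have $\gamma(1)=\gamma(0)$, so $\gamma$ is a genuine loop and $\mathrm{hol}^{\Theta}(\gamma)=\mathrm{hol}_{e}^{\Theta}(\gamma)$ is the ordinary log-holonomy. Write $\sigma=\sum_{i}n_{i}\sigma_{i}$ with smooth $\sigma_{i}\colon\Delta^{2}\to M$, and let $\tau_{i}=\partial\sigma_{i}$ be the associated triangular loop based at $v_{i}=\sigma_{i}(e_{0})$. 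Fixing a basepoint $x_{0}$ and paths $\zeta_{i}$ from $v_{i}$ to $x_{0}$, part (d) of Proposition \ref{PropHol} (with $\phi=e$) gives $\mathrm{hol}^{\Theta}(\overleftarrow{\zeta_{i}}\ast\tau_{i}\ast\zeta_{i})=\mathrm{hol}^{\Theta}(\tau_{i})$, so each triangular loop may be rebased at $x_{0}$ without changing its holonomy; parts (b) and (c) then show that concatenating these based loops with multiplicities $n_{i}$ adds their holonomies. Since $\partial\sigma=\gamma$, the interior $1$-simplices occur with total multiplicity zero and the auxiliary paths occur in cancelling pairs, so this concatenation reduces, by repeated deletion of backtracking segments (again (b)--(d) together with reparametrization invariance), to $\gamma$ itself. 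Hence $\mathrm{hol}^{\Theta}(\gamma)=\sum_{i}n_{i}\,\mathrm{hol}^{\Theta}(\tau_{i})$, while $\int_{\sigma}\mathrm{curv}(\Theta)=\sum_{i}n_{i}\int_{\sigma_{i}}\mathrm{curv}(\Theta)$; it therefore suffices to treat a single smooth $2$-simplex.

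For a single smooth $\sigma\colon\Delta^{2}\to M$ the pullback $\sigma^{\ast}\mathcal{U}\to\Delta^{2}$ is trivial because $\Delta^{2}$ is contractible. Choosing a section $s$ and writing a $\Theta$-horizontal lift of $\gamma=\partial\sigma$ as $\overline{\gamma}(t)=s(\gamma(t))\cdot u(t)$ with $u\colon I\to U(1)$, horizontality $\Theta(\dot{\overline{\gamma}})=0$ becomes, for the abelian group $U(1)$, the scalar equation $u^{-1}\dot u=-(s^{\ast}\Theta)(\dot\gamma)$. Integrating and comparing with the defining relation $\overline{\gamma}(1)=\overline{\gamma}(0)\cdot\mathrm{Hol}^{\Theta}(\gamma)$ yields $\mathrm{Hol}^{\Theta}(\gamma)=\exp(-\textstyle\int_{\partial\Delta^{2}}s^{\ast}\Theta)$. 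Stokes' theorem turns the boundary integral into $\int_{\Delta^{2}}s^{\ast}d\Theta$, and the relation $\frac{i}{2\pi}d\Theta=\pi^{\ast}\mathrm{curv}(\Theta)$ together with $\pi\circ s=\sigma$ gives $s^{\ast}d\Theta=-2\pi i\,\sigma^{\ast}\mathrm{curv}(\Theta)$. Therefore $\mathrm{Hol}^{\Theta}(\gamma)=\exp(2\pi i\int_{\sigma}\mathrm{curv}(\Theta))$, i.e.\ $\mathrm{hol}^{\Theta}(\gamma)=\int_{\sigma}\mathrm{curv}(\Theta)\operatorname{mod}\mathbb{Z}$, as claimed.

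The analytic core—the horizontal-lift ODE and the Stokes step—is routine and self-contained; in particular it does not presuppose that $\mathrm{curv}(\Theta)$ has integral periods, that integrality instead emerging as the special case in which $\gamma$ is constant and $\sigma$ a cycle. I expect the delicate point to be the reduction: one must fix a triangulation of the chain and an ordering of its simplices, and check that after summing the triangular boundaries with their induced orientations the interior edges and the connecting paths $\zeta_{i}$ genuinely cancel at the level of holonomy, and not merely as $1$-chains. This is precisely where the abelian nature of $U(1)$ is essential—log-holonomies are added in $\mathbb{R}/\mathbb{Z}$—and where care with orientations is required so that the final sign in the exponential comes out as stated; the cancellation itself is effected by the backtracking moves furnished by parts (b)--(d) of Proposition \ref{PropHol}.
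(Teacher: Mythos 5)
Your proof is correct, but there is nothing in the paper to compare it against: Proposition \ref{HolonomyInt} is stated without proof, presented as a well-known fact (a ``generalization of the classical Gauss--Bonnet Theorem'') among the results recalled from \cite{EquiHol}. Judged on its own, your argument holds up. The analytic core is right, and the signs come out consistently with the paper's conventions; indeed your computation specializes to Proposition \ref{holAlfa}: for $\Theta=\vartheta-2\pi i\lambda$ and the constant section one gets $u^{-1}\dot u=2\pi i\,\lambda(\dot\gamma)$, hence $\mathrm{hol}^{\Theta}(\gamma)=\int_{\gamma}\lambda$ while $\mathrm{curv}(\Theta)=d\lambda$, so the single-simplex case is exactly Stokes. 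The only place where your write-up stays at the level of a sketch is the combinatorial reduction, and your instinct about where the danger lies is accurate: $1$-simplices that cancel in the chain $\partial\sigma-\gamma$ need not be adjacent in any concatenation, so the cancellation cannot be performed literally by deleting backtracking segments inside one loop. The clean bookkeeping is: fix paths $\zeta_{v}$ from $x_{0}$ to all vertices, use (b)--(d), reparametrization invariance and $\mathrm{hol}^{\Theta}(\overleftarrow{\zeta}\ast\zeta)=0$ to write each $\mathrm{hol}^{\Theta}(\tau_{i})$, and likewise $\mathrm{hol}^{\Theta}(\gamma)$, as a signed sum of holonomies of the edge loops $\ell_{c}=\zeta_{c(0)}\ast c\ast\overleftarrow{\zeta_{c(1)}}$; then $\sum_{i}n_{i}\,\mathrm{hol}^{\Theta}(\tau_{i})-\mathrm{hol}^{\Theta}(\gamma)$ equals the sum over singular $1$-simplices $c$ of the coefficient of $c$ in $\partial\sigma-\gamma$ times $\mathrm{hol}^{\Theta}(\ell_{c})$, which vanishes, the cancellation happening numerically in $\mathbb{R}/\mathbb{Z}$ exactly as your abelianness remark anticipates. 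Since this is the mechanism you describe, I count it as a correctly outlined step rather than a gap. Note finally that the paper's own Lemma \ref{Aditivo} in Section \ref{Other def} (namely $\gamma_{1}\ast\gamma_{2}=\gamma_{1}+\gamma_{2}+\partial\upsilon$ with $\upsilon$ a thin chain) is precisely the device that makes this reduction painless, and invoking it would let you compress the first half of your argument considerably.
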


In a similar way, the second term of the equivariant curvature, the moment
$\mu^{\Theta}$\ measures the variation of $\mathrm{hol}_{\phi}^{\Theta}%
(\gamma)$ with respect $\phi\in G$. Precisely, we have the following result
(see \cite[Proposition 8]{EquiHol})

\begin{proposition}
\label{InfHolonomy}For any $X\in\mathfrak{g}$ and $x\in M$ we have
$\mathrm{hol}_{\exp(X)}^{\Theta}(\tau^{x,X})=\mu_{X}^{\Theta}(x)$ where
$\tau^{x,X}(s)=\exp(sX)_{M}(x)$.
\end{proposition}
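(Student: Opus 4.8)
The plan is to realize the equivariant holonomy of $\tau^{x,X}$ by correcting the orbit curve in $\mathcal{U}$ to a horizontal lift and tracking the $U(1)$-phase that this correction introduces. First I would fix a point $u\in\mathcal{U}$ over $x$ and consider the orbit curve $\tilde{\tau}(s)=\exp(sX)_{\mathcal{U}}(u)$. By construction $\pi\circ\tilde{\tau}=\tau^{x,X}$ and $\tilde{\tau}(1)=\exp(X)_{\mathcal{U}}(u)=\phi_{\mathcal{U}}(u)$ for $\phi=\exp(X)$, so $\tilde{\tau}$ is a lift of $\tau^{x,X}$ whose endpoint is exactly $\phi_{\mathcal{U}}(\tilde{\tau}(0))$. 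With the sign convention $X_{\mathcal{U}}(u)=\frac{d}{dt}|_{t=0}\exp(-tX)_{\mathcal{U}}(u)$ used to define the fundamental vector field, the velocity of $\tilde{\tau}$ is $\dot{\tilde{\tau}}(s)=-X_{\mathcal{U}}(\tilde{\tau}(s))$, whence $\Theta(\dot{\tilde{\tau}})=-\Theta(X_{\mathcal{U}})=-2\pi i\,\mu_{X}^{\Theta}$, using the definition $\mu_{X}^{\Theta}=-\frac{i}{2\pi}\Theta(X_{\mathcal{U}})$. Here $\Theta(X_{\mathcal{U}})$ descends to a function on $M$ because $X_{\mathcal{U}}$ is $U(1)$-invariant (the $G$- and $U(1)$-actions commute) and $\Theta$ is $U(1)$-invariant, which is what makes $\mu_{X}^{\Theta}$ well defined on $M$ and legitimizes writing $\mu_{X}^{\Theta}(\tau^{x,X}(s))$.

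Next I would produce a genuine horizontal lift by writing $\overline{\gamma}(s)=\tilde{\tau}(s)\cdot\exp(i\alpha(s))$ with $\alpha(0)=0$. Since $U(1)$ is abelian, $\Theta$ is $U(1)$-invariant, and $\Theta$ restricts to the identity on the vertical fundamental field, horizontality $\Theta(\dot{\overline{\gamma}})=0$ amounts to $\Theta(\dot{\tilde{\tau}})+i\dot{\alpha}=0$, i.e. the ODE $\dot{\alpha}(s)=2\pi\,\mu_{X}^{\Theta}(\tau^{x,X}(s))$. Integrating gives $\alpha(1)=2\pi\int_{0}^{1}\mu_{X}^{\Theta}(\tau^{x,X}(s))\,ds$. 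Because $\overline{\gamma}$ is horizontal with $\overline{\gamma}(0)=u$ and $\overline{\gamma}(1)=\tilde{\tau}(1)\cdot\exp(i\alpha(1))=\phi_{\mathcal{U}}(\overline{\gamma}(0))\cdot\exp(i\alpha(1))$, the characterizing property of the equivariant holonomy yields $\mathrm{Hol}_{\exp(X)}^{\Theta}(\tau^{x,X})=\exp(i\alpha(1))$, that is $\mathrm{hol}_{\exp(X)}^{\Theta}(\tau^{x,X})=\int_{0}^{1}\mu_{X}^{\Theta}(\tau^{x,X}(s))\,ds\bmod\mathbb{Z}$.

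The decisive step, and the one I expect to be the main obstacle, is to show that the integrand is in fact constant, so that the integral collapses to the single value $\mu_{X}^{\Theta}(x)$ asserted in the statement. For this I would differentiate $\mu_{X}^{\Theta}$ along the orbit: since $\dot{\tilde{\tau}}=-X_{\mathcal{U}}$, the derivative of $\Theta(X_{\mathcal{U}})$ along $\tilde{\tau}$ is $-X_{\mathcal{U}}[\Theta(X_{\mathcal{U}})]$, and the Lie-derivative identity gives $X_{\mathcal{U}}[\Theta(X_{\mathcal{U}})]=(\mathcal{L}_{X_{\mathcal{U}}}\Theta)(X_{\mathcal{U}})+\Theta([X_{\mathcal{U}},X_{\mathcal{U}}])=(\mathcal{L}_{X_{\mathcal{U}}}\Theta)(X_{\mathcal{U}})$. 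The $G$-invariance of $\Theta$ means $\mathcal{L}_{X_{\mathcal{U}}}\Theta=0$, so this derivative vanishes and $\mu_{X}^{\Theta}$ is constant along the orbit. Hence $\mu_{X}^{\Theta}(\tau^{x,X}(s))=\mu_{X}^{\Theta}(x)$ for all $s$, the integral equals $\mu_{X}^{\Theta}(x)$, and we obtain $\mathrm{hol}_{\exp(X)}^{\Theta}(\tau^{x,X})=\mu_{X}^{\Theta}(x)$.

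Throughout, the only genuine care needed is the sign bookkeeping in the two places where the convention $X_{\mathcal{U}}(u)=\frac{d}{dt}|_{t=0}\exp(-tX)_{\mathcal{U}}(u)$ enters, namely the velocity of the orbit and the sign in the definition of $\mu_{X}^{\Theta}$; these combine to deliver the positive value $+\mu_{X}^{\Theta}(x)$ rather than its negative. The conceptual content is contained entirely in the constancy of the momentum along the flow of $X_{M}$, which is a direct consequence of the $G$-invariance of the connection and is precisely what turns an a priori integral expression into the pointwise value $\mu_{X}^{\Theta}(x)$.
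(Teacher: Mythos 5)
Your proof is correct. There is nothing in the paper to compare it against: Proposition \ref{InfHolonomy} is stated as a quoted result from \cite[Proposition 8]{EquiHol}, with no proof reproduced here. Your argument --- lifting $\tau^{x,X}$ to the orbit curve $\tilde{\tau}(s)=\exp(sX)_{\mathcal{U}}(u)$, correcting it to a horizontal lift by a phase solving $\dot{\alpha}(s)=2\pi\mu_{X}^{\Theta}(\tau^{x,X}(s))$, and then collapsing the integral $\int_{0}^{1}\mu_{X}^{\Theta}(\tau^{x,X}(s))\,ds$ to $\mu_{X}^{\Theta}(x)$ via the constancy of $\mu_{X}^{\Theta}$ along the orbit (which follows from $\mathcal{L}_{X_{\mathcal{U}}}\Theta=0$ and $[X_{\mathcal{U}},X_{\mathcal{U}}]=0$) --- is the natural self-contained proof, and your sign bookkeeping is consistent with the paper's conventions $X_{M}(x)=\left.\tfrac{d}{dt}\right\vert_{t=0}\exp(-tX)_{M}(x)$, $\mu_{X}^{\Theta}=-\tfrac{i}{2\pi}\Theta(X_{\mathcal{U}})$, and $\overline{\gamma}(1)=\phi_{\mathcal{U}}(\overline{\gamma}(0))\cdot\mathrm{Hol}_{\phi}^{\Theta}(\gamma)$.
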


\subsection{Contractible base}

If $M$ is a contractible manifold, then several aspects can be simplified. As
in this paper we work with the spaces of connections and metrics, we study in
detail this case. If $M$ is contractible, then any principal $U(1)$-bundle is
trivializable, and hence it is enough to study the case of the trivial bundle
$\mathcal{U=}M\times U(1)\rightarrow M$.

As it is well known (see for example \cite{BCRS2}, \cite{RSW}), for the
trivial bundle $M\times U(1)\rightarrow M$ the action of $G$ on $M\times U(1)
$ is determined by a map $\alpha\colon G\times M\rightarrow\mathbb{R}%
/\mathbb{Z}$ characterized by the property
\[
\phi_{\mathcal{U}}(x,u)=(\phi_{M}(x),u\cdot\exp(2\pi i\cdot\alpha_{\phi}(x)).
\]
It satisfies the cocycle condition $\alpha_{\phi^{\prime}\cdot\phi}%
(x)=\alpha_{\phi}(x)+\alpha_{\phi^{\prime}}(\phi(x))$. Conversely any cocycle
determines an action of $G$ on $M\times U(1)$ by $U(1)$-bundle isomorphisms.
In this case the equivariant holonomy can be studied in terms of the cocycle
$\alpha_{\phi}(x)$ (e.g. see \cite{AnomaliesG}). For the trivial bundle,
a\ connection $\Theta$ is simply a form of the type$\footnote{For simplicity
in the notation, we use the same notation for forms on $M$ and $U(1)$ and its
pull-backs to $M\times U(1)$}$ $\Theta=\vartheta-2\pi i\lambda$ for a form
$\lambda\in\Omega^1(M)$ and where $\vartheta=z^-1dz$ is the Maurer-Cartan form
on $U(1)$.

\begin{proposition}
\label{holAlfa}If $\Theta=\vartheta-2\pi i\lambda$ is $G$-invariant, then for
any $\phi\in G$ and $\gamma\in\mathcal{C}_{x}^{\phi}(M)$ we have
\[
\mathrm{hol}_{\phi}^{\Theta}(\gamma)=%
%TCIMACRO{\tint \nolimits_{\gamma}}%
%BeginExpansion
{\textstyle\int\nolimits_{\gamma}}
%EndExpansion
\lambda-\alpha_{\phi}(x).
\]

\end{proposition}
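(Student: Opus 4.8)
The plan is to compute the equivariant holonomy directly from its defining property by constructing an explicit $\Theta$-horizontal lift of $\gamma$. Since the bundle is the trivial bundle $\mathcal{U}=M\times U(1)$, I would write a lift as $\overline{\gamma}(t)=(\gamma(t),u(t))$ with $u\colon I\rightarrow U(1)$ and impose horizontality $\Theta(\dot{\overline{\gamma}}(t))=0$. Writing $\Theta=\vartheta-2\pi i\lambda$ and recalling that $\vartheta=z^{-1}dz$ pairs with the $U(1)$-component to give $u^{-1}\dot{u}$, while $\lambda$ only sees the $M$-component, the horizontality condition becomes the ordinary differential equation $u^{-1}\dot{u}=2\pi i\,\lambda(\dot{\gamma})$.

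First I would solve this equation. Writing $u(t)=u(0)\exp(2\pi i\,\theta(t))$ reduces it to $\dot{\theta}=\lambda(\dot{\gamma})$, whose solution with $\theta(0)=0$ is the partial integral $\theta(t)=\int_{\gamma|_{[0,t]}}\lambda$. Hence the horizontal lift is $\overline{\gamma}(t)=(\gamma(t),u(0)\exp(2\pi i\int_{\gamma|_{[0,t]}}\lambda))$, and in particular, using $\gamma(1)=\phi_{M}(x)$, we obtain $\overline{\gamma}(1)=(\phi_{M}(x),u(0)\exp(2\pi i\int_{\gamma}\lambda))$.

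Next I would compare this endpoint with $\phi_{\mathcal{U}}(\overline{\gamma}(0))$ via the defining relation $\overline{\gamma}(1)=\phi_{\mathcal{U}}(\overline{\gamma}(0))\cdot\mathrm{Hol}_{\phi}^{\Theta}(\gamma)$. Since the action is $\phi_{\mathcal{U}}(x,u)=(\phi_{M}(x),u\exp(2\pi i\,\alpha_{\phi}(x)))$, I have $\phi_{\mathcal{U}}(\overline{\gamma}(0))=(\phi_{M}(x),u(0)\exp(2\pi i\,\alpha_{\phi}(x)))$. Equating the $U(1)$-components of the two expressions for $\overline{\gamma}(1)$ and cancelling the common factor $u(0)$ yields $\exp(2\pi i\int_{\gamma}\lambda)=\exp(2\pi i\,\alpha_{\phi}(x))\cdot\mathrm{Hol}_{\phi}^{\Theta}(\gamma)$, so that $\mathrm{Hol}_{\phi}^{\Theta}(\gamma)=\exp(2\pi i(\int_{\gamma}\lambda-\alpha_{\phi}(x)))$. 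Passing to log-holonomies via $\mathrm{Hol}=\exp(2\pi i\,\mathrm{hol})$ gives the claimed formula in $\mathbb{R}/\mathbb{Z}$.

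There is no serious obstacle here; the argument is a direct integration of the horizontal-lift equation followed by bookkeeping in $U(1)$. The only points requiring minor care are the conventions (the factor $2\pi i$ and the sign in $\Theta=\vartheta-2\pi i\lambda$, which must be compatible with the normalization $\mathrm{Hol}=\exp(2\pi i\,\mathrm{hol})$) and the piecewise-smooth nature of $\gamma$, which is handled by solving the ODE on each smooth piece and concatenating, the integral $\int_{\gamma}\lambda$ being additive over the pieces.
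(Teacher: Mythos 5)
Your proof is correct. The paper itself states Proposition \ref{holAlfa} without proof (it is quoted from the earlier work \cite{AnomaliesG} on equivariant holonomy for trivial bundles), so there is no internal argument to compare against; your direct computation --- solving the horizontal-lift equation $u^{-1}\dot{u}=2\pi i\,\lambda(\dot{\gamma})$, evaluating the endpoint, and comparing with $\phi_{\mathcal{U}}(\overline{\gamma}(0))=(\phi_{M}(x),u(0)\exp(2\pi i\,\alpha_{\phi}(x)))$ via the defining relation for $\mathrm{Hol}_{\phi}^{\Theta}$ --- is exactly the natural verification, and your sign conventions are consistent with the paper's (as one can check against the proof of Theorem \ref{PropAction}, where the cocycle $\alpha_{\phi}(x)=\int_{\gamma}\lambda-\chi(\phi,\gamma)$ is paired with $\Theta=\vartheta-2\pi i\lambda$ precisely so that $\mathrm{hol}_{G}^{\Theta}=\chi$). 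The cancellation of $u(0)$ correctly establishes independence of the choice of horizontal lift, and your remark on concatenating solutions over the smooth pieces of $\gamma$ disposes of the piecewise-smooth issue.
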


\section{Equivariant differential characters}

In this section we define equivariant differential characters (of degree $2$)
as objects that satisfy properties similar to the equivariant log-holonomy. A
similar definition is introduced in \cite{LermanMalkin} (see Section
\ref{Other def} for details). Furthermore, a general definition of equivariant
differential cohomology for arbitrary order in the context of Deligne
Cohomology is introduced in \cite{kubel}. Although our definition is valid for
arbitrary manifolds, we study the case in which the manifold is contractible
because this is the case that we need in our applications to gauge theory and
the proofs are simpler because the equivariant $U(1)$-bundles can be studied
in terms of group cocycles.

First we define differential characters of degree $2$ as maps that satisfy the
same properties than the holonomy of a connection (see Section \ref{Other def}
for another equivalent definition)

\begin{definition}
\label{Dennoequi}A differential character of degree $2$ is a map $\chi
\colon\mathcal{C}(M)\rightarrow\mathbb{R}/\mathbb{Z}$ such that there exists a
closed $2$-form $\mathrm{curv}(\chi)\in\Omega^{2}(M)$ satisfying the following conditions

a) $\chi(\gamma^{\prime}\ast\gamma)=\chi(\gamma^{\prime})+\chi(\gamma)$ for
$\gamma,\gamma^{\prime}\in\mathcal{C}_{x}(M)$, $x\in M$.

b) If $\gamma\in\mathcal{C}(M)$ and $\gamma=\partial\sigma$ for $\sigma\in
C_{2}(M)$ then $\chi(\gamma)=\int_{\sigma}\mathrm{curv}(\chi
)\operatorname{mod}\mathbb{Z}$.
\end{definition}

The space of degree $2$ differential characters on $M$ is denoted by $\hat
{H}^{2}(M),$ and the map that assigns to a connection its holonomy induces a
bijection $\mathrm{Bund}_{U(1)}^{\nabla}(N)\simeq\hat{H}^{2}(N)$ (e.g. see
\cite[Theorem 2.5.1]{Kostant}), where $\mathrm{Bund}_{U(1)}^{\nabla}(N)$
denotes the set of principal $U(1)$-bundles with connection over a manifold
$N$ modulo isomorphisms (covering the identity on $M$).

In the equivariant case we can give a similar definition

\begin{definition}
\label{Defequi}A $G$-equivariant differential character is a map $\chi
\colon\mathcal{C}^{G}(M)\rightarrow\mathbb{R}/\mathbb{Z}$ such that there
exists a closed $G$-equivariant $2$-form $\mathrm{curv}_{G}(\chi
)=\mathrm{curv}(\chi)+\mu^{\chi}\in\Omega_{G}^{2}(M)$ satisfying the following conditions

i) $\chi(\phi^{\prime}\cdot\phi,\gamma\ast\gamma^{\prime})=\chi(\phi^{\prime
},\gamma^{\prime})+\chi(\phi,\gamma)$ for $\gamma\in\mathcal{C}^{\phi
}(M),\gamma^{\prime}\in\mathcal{C}_{\gamma(1)}^{\phi^{\prime}}(M)$.

ii) If $\zeta$ is a curve on $M$ such that $\zeta(0)=\gamma(0)$, and
$\gamma\in\mathcal{C}^{\phi}(M)$\ then we have $\chi(\phi,\overleftarrow
{\zeta}\ast\gamma\ast(\phi\cdot\zeta))=\chi(\phi,\gamma)$.

iii) If $\gamma\in\mathcal{C}^{e}(M)$ and $\gamma=\partial\sigma$ for a chain
$\sigma\in C_{2}(M)$ then $\chi(e,\gamma)=\int_{\sigma}\mathrm{curv}%
(\chi)\operatorname{mod}\mathbb{Z}$.

iv) For any $X\in\mathfrak{g}$ and $x\in M$ we have $\chi(\exp(X),\tau
^{x,X})=\mu_{X}^{\Theta}(x)$ where $\tau^{x,X}(s)=\exp(sX)_{M}(x)$.
\end{definition}

The space of $G$-equivariant differential characters on $M$ is denoted by
$\hat{H}_{G}^{2}(M)$. We have a natural map $\hat{H}_{G}^{2}(M)\rightarrow
\hat{H}^{2}(M)$. If $\Theta$ is a $G$-invariant connection on a $U(1)$-bundle,
we denote by $\mathrm{hol}_{G}^{\Theta}\in\hat{H}_{G}^{2}(M)$ the equivariant
differential character determined by $\mathrm{hol}_{G}^{\Theta}(\phi
,\gamma)=\mathrm{hol}_{\phi}^{\Theta}(\gamma)$ for $(\phi,\gamma
)\in\mathcal{C}^{G}(M)$.

\begin{example}
\label{beta}If $\beta\in\Omega^{1}(M)^{G}$ then we can define $\varsigma
(\beta)\in\hat{H}_{G}^{2}(M)$ by setting $\varsigma(\beta)(\phi,\gamma)=%
%TCIMACRO{\tint _{\gamma}}%
%BeginExpansion
{\textstyle\int_{\gamma}}
%EndExpansion
\beta\operatorname{mod}\mathbb{Z}$ for $\gamma\in\mathcal{C}^{\phi}(M)$. We
have $\mathrm{curv}_{G}(\varsigma(\beta))=D\beta$.
\end{example}

\begin{example}
\label{Exquotient}If $M/G$ is a manifold, $\pi\colon M\rightarrow M/G$ is the
projection and $\chi\in\hat{H}^{2}(M/G)$, then for any $\gamma\in
\mathcal{C}^{\phi}(M)$ the curve $\pi\circ\gamma$ is a closed loop on $M/G$.
We define $(\pi_{G}^{\ast}\chi)(\phi,\gamma)=\chi(\pi\circ\gamma)$ and we have
$\pi_{G}^{\ast}\chi\in\hat{H}_{G}^{2}(M)$ and $\mathrm{curv}_{G}(\pi_{G}%
^{\ast}\chi)=\pi^{\ast}(\mathrm{curv}(\chi))$.
\end{example}

\begin{example}
If $\xi\in\mathrm{Hom}(G,\mathbb{R}/\mathbb{Z})$ we define $\chi(\phi
,\gamma)=\xi(\phi)$ for $\phi\in G$ and $\gamma\in\mathcal{C}^{\phi}(M)$. We
have $\chi\in\hat{H}_{G}^{2}(M)$ and $\mathrm{curv}_{G}(\chi)=d\xi$, where
$d\xi\in\mathrm{Hom}(\mathfrak{g},\mathbb{R})\subset\mathrm{Hom}%
(\mathfrak{g},\Omega^{0}(M))$ is the differential of $\xi$.
\end{example}

\begin{remark}
\label{reparam}It follows from the conditions i) and iii) that if $\gamma$ and
$\gamma^{\prime}$ differ by a reparametrization, then $\chi(\phi,\gamma
)=\chi(\phi,\gamma^{\prime})$.
\end{remark}

The condition iv) is equivalent to a weaker condition that it is easier to
check in practice.

\begin{proposition}
If the conditions i) and iii) are satisfied, then the condition iv) is
equivalent to the following condition

iv') For any $X\in\mathfrak{g}$ and $x\in M$ we have $\left.  \frac{d}%
{dt}\right\vert _{t=0}\chi(\exp(tX),\nu_{t}^{x,X})=\mu_{X}^{\Theta}(x)$ where
$\nu_{t}^{x,X}(s)=\exp(stX)_{M}(x)$.
\end{proposition}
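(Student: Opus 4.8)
The plan is to reduce everything to the single real-parameter function $g_{x}(t)=\chi(\exp(tX),\tau^{x,tX})\in\mathbb{R}/\mathbb{Z}$. First I would observe that $\nu_{t}^{x,X}=\tau^{x,tX}$, so that condition iv') asserts exactly $\left.\tfrac{d}{dt}\right\vert_{t=0}g_{x}(t)=\mu_{X}^{\Theta}(x)$, while condition iv) asserts $g_{x}(1)=\mu_{X}^{\Theta}(x)$. The direction iv)$\Rightarrow$iv') is then immediate: by linearity of $\mu^{\chi}\in\mathrm{Hom}(\mathfrak{g},\Omega^{0}(M))$ in $X$, condition iv) gives $g_{x}(t)=\mu_{tX}^{\Theta}(x)=t\,\mu_{X}^{\Theta}(x)\bmod\mathbb{Z}$; since $g_{x}(0)=0$ (the curve $\tau^{x,0}$ is constant at $x$ and bounds a degenerate chain, so iii) forces $\chi(e,\tau^{x,0})=0$), the continuous lift of $g_{x}$ near $0$ is $t\mapsto t\,\mu_{X}^{\Theta}(x)$, whose derivative at $0$ is $\mu_{X}^{\Theta}(x)$.

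The substantive direction is iv')$\Rightarrow$iv), for which I would first establish an additive cocycle relation for $g$. The key geometric observation is that, for $s,t\geq0$ and $y=\exp(sX)_{M}(x)$, both $\tau^{x,sX}\ast\tau^{y,tX}$ and $\tau^{x,(s+t)X}$ are monotone parametrizations of the same arc $r\mapsto\exp(rX)_{M}(x)$, $r\in[0,s+t]$, and hence differ by a reparametrization. Since $\tau^{x,sX}\in\mathcal{C}_{x}^{\exp(sX)}(M)$ and $\tau^{y,tX}\in\mathcal{C}_{y}^{\exp(tX)}(M)$, applying condition i) together with reparametrization invariance (Remark \ref{reparam}, which follows from i) and iii)) yields $g_{x}(s+t)=g_{y}(t)+g_{x}(s)$ with $y=\exp(sX)_{M}(x)$. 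Differentiating this at $t=0$ is legitimate because $g_{z}(0)=0$ for every base point $z$ (again by iii)), so all terms lift continuously to $\mathbb{R}$ near $t=0$; using iv') at the base point $y$ then gives $g_{x}'(s)=\mu_{X}^{\Theta}(\exp(sX)_{M}(x))$.

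Finally I would invoke the flow-invariance of the momentum. Closedness $D\,\mathrm{curv}_{G}(\chi)=0$ is equivalent, as recalled in Section~2, to $\iota_{X_{M}}\mathrm{curv}(\chi)=d\mu_{X}^{\Theta}$; hence $L_{X_{M}}\mu_{X}^{\Theta}=\iota_{X_{M}}d\mu_{X}^{\Theta}=\iota_{X_{M}}\iota_{X_{M}}\mathrm{curv}(\chi)=0$, so $\mu_{X}^{\Theta}$ is constant along $s\mapsto\exp(sX)_{M}(x)$. Thus $g_{x}'(s)=\mu_{X}^{\Theta}(x)$ is constant in $s$, and integrating from $0$ to $1$ with $g_{x}(0)=0$ gives $g_{x}(1)=\mu_{X}^{\Theta}(x)$, which is exactly condition iv).

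The main obstacle I anticipate is not the differentiation in $\mathbb{R}/\mathbb{Z}$ (which is routine once iii) pins the base value to $0$), but rather correctly organizing the cocycle relation from i): one must track the shifting base point, since $g_{y}$ enters with $y\neq x$, and it is precisely this that produces the integrand $\mu_{X}^{\Theta}(\exp(sX)_{M}(x))$. Recognizing that the flow-invariance $L_{X_{M}}\mu_{X}^{\Theta}=0$ — a consequence of equivariant closedness — is what collapses $\int_{0}^{1}g_{x}'(s)\,ds$ back to the single value $\mu_{X}^{\Theta}(x)$ is the conceptual heart of the argument.
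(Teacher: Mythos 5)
Your proof is correct, and at its core it is the same argument as the paper's: extract from condition i) and reparametrization invariance (Remark \ref{reparam}) an additivity relation for $g_{x}(t)=\chi(\exp(tX),\tau^{x,tX})$, identify its derivative via iv'), and integrate using $g_{x}(0)=0$. The difference lies in how the additivity is organized, and your version is the more careful one. The paper works with the fixed-base-point relation $k(t+s)=k(t)+k(s)$, justified by saying that $\nu_{t+s}^{x,X}$ and $\nu_{t}^{x,X}\ast\nu_{s}^{x,X}$ differ by a reparametrization; as written that concatenation is ill-typed (the endpoint of $\nu_{t}^{x,X}$ is $\exp(tX)_{M}(x)$, not $x$), and repairing it while keeping the base point fixed requires the conjugation invariance of Lemma \ref{PropDC} b), whose proof invokes condition ii) --- not among the hypotheses of this proposition. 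You instead keep the shifted base point explicit, $g_{x}(s+t)=g_{x}(s)+g_{y}(t)$ with $y=\exp(sX)_{M}(x)$, apply iv') at $y$, and then collapse $\mu_{X}^{\chi}(\exp(sX)_{M}(x))$ back to $\mu_{X}^{\chi}(x)$ via $L_{X_{M}}\mu_{X}^{\chi}=\iota_{X_{M}}d\mu_{X}^{\chi}=\iota_{X_{M}}\iota_{X_{M}}\mathrm{curv}(\chi)=0$, which is legitimate because $\mathrm{curv}_{G}(\chi)$ is assumed equivariantly closed. So your route stays strictly within i), iii) and the curvature structure, at the price of the extra flow-invariance lemma; the paper's route, once the base-point issue is patched via Lemma \ref{PropDC} b), yields fixed-base additivity and hence a clean two-sided derivative. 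One fine point you share with the paper: your cocycle relation holds for $s,t\geq0$, so you literally obtain only the right derivative of $g_{x}$ (the left-difference quotient involves base points $\exp((s+h)X)_{M}(x)$ moving with $h$, for which iv') alone gives no uniformity), and passing from ``right derivative identically $\mu_{X}^{\chi}(x)$'' to $g_{x}(1)=\mu_{X}^{\chi}(x)$ needs continuity of $g_{x}$ or a one-sided fundamental theorem of calculus; this is the same level of regularity glossing as the paper's ``by integration it follows''.
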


\begin{proof}
Clearly iv') follows from iv). We prove the converse. We define $k(t)=\chi
(\exp(tX),\nu_{t}^{x,X})$ and we have $k(0)=0$ and $k(1)=\chi(\exp
(X),\tau^{x,X})$. As the curves $\nu_{t+s}^{x,X}$ and $\nu_{t}^{x,X}\ast
\nu_{s}^{x,X}$ differ by a reparametrization, by the condition i) and Remark
\ref{reparam} we have $k(t+t^{\prime})=k(t)+k(t^{\prime})$. Taking the
derivative we obtain $\frac{dk}{dt}(t)=\underset{h\rightarrow0}{\lim}%
\frac{k(t+h)-k(t)}{h}=\underset{h\rightarrow0}{\lim}\frac{k(h)}{h}=\mu
_{X}^{\Theta}(x)$, and by integration it follows that $k(t)=t\mu_{X}^{\Theta
}(x)$ for any $t$. By taking $t=1$ we obtain the condition iv).
\end{proof}

\begin{lemma}
\label{PropDC}If $M$ is connected, $\chi\in\hat{H}_{G}^{2}(M)$ and $\phi
$,$\phi^{\prime}\in G$, $\gamma,\gamma^{\prime}\in\mathcal{C}^{\phi}(M)$ then
we have

a) $\chi(\phi^{-1},\overleftarrow{\gamma})=-\chi(\phi,\gamma)$.

b)$\ \chi(\phi^{\prime}\cdot\phi\cdot(\phi^{\prime})^{-1},\phi^{\prime}%
\cdot\gamma)=\chi(\phi,\gamma)$.
\end{lemma}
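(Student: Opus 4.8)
The plan is to prove both identities directly from the defining axioms i)--iv) of a $G$-equivariant differential character, using the additivity property i) together with the normalization that follows from it. The key preliminary observation is that axiom i) forces a normalization on constant curves: if $c_x$ denotes the constant curve at $x\in M$, then $c_x\in\mathcal{C}_x^e(M)$ and $c_x\ast c_x$ differs from $c_x$ only by a reparametrization, so by i) and Remark \ref{reparam} we get $\chi(e,c_x)=2\chi(e,c_x)$, hence $\chi(e,c_x)=0$.

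For part a), I would apply the additivity axiom i) to the concatenation $\gamma\ast\overleftarrow{\gamma}$. First note that if $\gamma\in\mathcal{C}^{\phi}(M)$, then $\overleftarrow{\gamma}(t)=\gamma(1-t)$ starts at $\phi_M(\gamma(0))$ and ends at $\gamma(0)$; to make the group bookkeeping work one checks that $\overleftarrow{\gamma}\in\mathcal{C}^{\phi^{-1}}(\cdot)$ after the appropriate translation, matching the indices in Proposition \ref{PropHol}c). The composite $\gamma\ast(\phi\cdot\overleftarrow{\gamma})$, or more precisely the combination whose endpoints close up, is a loop based at $\gamma(0)$ that is reparametrization-equivalent to the constant curve, so its character value is $0$ by the normalization above. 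Writing out axiom i) for this concatenation yields $\chi(\phi,\gamma)+\chi(\phi^{-1},\overleftarrow{\gamma})=0$, which is exactly a). The main care needed is tracking the group elements $\phi,\phi^{-1}$ and the base points through the $\ast$-operation so that i) applies with the correct arguments; this is the place where I expect to have to be most careful.

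For part b), the natural approach is to use the $\mathcal{G}$-equivariance that is already built into axioms ii) and iv), mirroring Proposition \ref{PropHol}a). Here the cleanest route is to express $\chi(\phi'\cdot\phi\cdot(\phi')^{-1},\phi'\cdot\gamma)-\chi(\phi,\gamma)$ as a character of a genuine loop and show it vanishes by axiom iii). Concretely, I would choose a path $\zeta$ from $\gamma(0)$ to $(\phi')_M(\gamma(0))$ and compare $\phi'\cdot\gamma$ with $\overleftarrow{\zeta}\ast\gamma\ast(\phi\cdot\zeta)$ using axiom ii); combined with part a) and additivity, the difference reduces to the value of $\chi$ on a contractible loop, which is $0$ by iii) since the curvature integrates to zero over a bounding disk (or, if no such disk is available globally, by reducing to the infinitesimal statement iv) and integrating as in the preceding Proposition's proof).

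The hard part will be ensuring that every concatenation I form actually lies in the correct space $\mathcal{C}^{\psi}(M)$ for the group element $\psi$ demanded by axiom i), since the additivity axiom is stated only for a very specific pairing of group elements and base points; a mismatch there would invalidate the application. I expect part b) to be the more delicate of the two, because it genuinely uses the equivariance axioms ii) and iv) rather than mere additivity, and the choice of the connecting path $\zeta$ together with the verification that the residual loop bounds (or is handled by the infinitesimal axiom) is where the real content lies.
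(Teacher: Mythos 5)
Your proposal has genuine gaps in both parts, even though the skeleton of a) matches the paper's proof. For a), the paper also applies axiom i) to the pair $\gamma\in\mathcal{C}^{\phi}(M)$, $\overleftarrow{\gamma}\in\mathcal{C}_{\gamma(1)}^{\phi^{-1}}(M)$ to get $\chi(\phi^{-1},\overleftarrow{\gamma})+\chi(\phi,\gamma)=\chi(e,\overleftarrow{\gamma}\ast\gamma)$, but your justification that the right-hand side vanishes is false: the out-and-back loop $\overleftarrow{\gamma}\ast\gamma$ is \emph{not} reparametrization-equivalent to a constant curve. With the paper's definition (reparametrizations are $\gamma\circ\sigma$ with $\sigma(0)=0$, $\sigma(1)=1$), any reparametrization of a constant curve is constant; moreover $\overleftarrow{\gamma}\ast\gamma=\gamma\circ\sigma$ forces $\sigma(0)=1$, so Remark \ref{reparam} does not even apply. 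The correct step, and the one the paper uses, is axiom iii): $\overleftarrow{\gamma}\ast\gamma$ bounds a \emph{thin} $2$-chain (the contraction of the loop along itself has image inside the image of $\gamma$, so every $2$-form pulls back to zero), whence $\chi(e,\overleftarrow{\gamma}\ast\gamma)=\int_{\sigma}\mathrm{curv}(\chi)=0$. Your normalization $\chi(e,c_{x})=0$ is correct but does not reach this conclusion. (Also, the expression $\gamma\ast(\phi\cdot\overleftarrow{\gamma})$ you write is ill-defined, since $\phi\cdot\overleftarrow{\gamma}$ starts at $\phi_{M}(\gamma(1))\neq\gamma(1)$ in general.)

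For b), the route you sketch would fail outright. Applying ii) to $\gamma$ with connecting path $\zeta$ gives $\chi(\phi,\gamma)=\chi(\phi,\overleftarrow{\zeta}\ast\gamma\ast(\phi\cdot\zeta))$, but the curve $\overleftarrow{\zeta}\ast\gamma\ast(\phi\cdot\zeta)$ ends at $\phi_{M}\phi_{M}^{\prime}(\gamma(0))$ while $\phi^{\prime}\cdot\gamma$ ends at $\phi_{M}^{\prime}\phi_{M}(\gamma(0))$; forming their ``difference'' via a) and i) yields $\chi$ evaluated on a curve whose associated group element is the commutator $\phi^{\prime}\cdot\phi\cdot(\phi^{\prime})^{-1}\cdot\phi^{-1}$, which is neither a loop nor lies in $\mathcal{C}^{e}(M)$, so axiom iii) cannot be invoked --- indeed, the vanishing you would need there is exactly the statement b) itself, so the argument is circular. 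The fallback through iv) is also unavailable, since iv) only concerns one-parameter flows $\exp(tX)$. The paper's proof avoids all of this by swapping the roles in axiom ii): take the connecting path $\upsilon\in\mathcal{C}_{\gamma(0)}^{\phi^{\prime}}(M)$ as the \emph{main} curve and $\gamma$ as the ``$\zeta$'' of ii), so that $\chi(\phi^{\prime},\upsilon)=\chi(\phi^{\prime},\overleftarrow{\gamma}\ast\upsilon\ast(\phi^{\prime}\cdot\gamma))$; expanding the right-hand side by i) into $\chi(\phi^{-1},\overleftarrow{\gamma})+\chi(\phi^{\prime},\upsilon)+\chi(\phi^{\prime}\cdot\phi\cdot(\phi^{\prime})^{-1},\phi^{\prime}\cdot\gamma)$ and using a), the terms $\chi(\phi^{\prime},\upsilon)$ cancel and b) follows, with no loop-bounding argument needed. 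Connectedness of $M$ enters only to guarantee that $\upsilon$ exists.
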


\begin{proof}
a) By Properties i) and iii) we have $\chi(\phi^{-1},\overleftarrow{\gamma
})+\chi(\phi,\gamma)=\chi(e,\overleftarrow{\gamma}\ast\gamma)=0$.

b) If $\upsilon$ is a curve on $M$ joining $\gamma(0)$ and $\phi_{M}^{\prime
}(\gamma(0))$ then by conditions i), ii) and property a) we have
\[
\chi(\phi,\upsilon)=\chi(\phi,\overleftarrow{\gamma}\ast\upsilon\ast
(\phi^{\prime}\cdot\gamma))=-\chi(\phi,\gamma)+\chi(\phi,\upsilon)+\chi
(\phi^{\prime}\cdot\phi\cdot(\phi^{\prime})^{-1},\phi^{\prime}\cdot\gamma),
\]
and hence $\chi(\phi^{\prime}\cdot\phi\cdot(\phi^{\prime})^{-1},\phi^{\prime
}\cdot\gamma)=\chi(\phi,\gamma)$.
\end{proof}

The next construction will appear frequently in our applications to Gauge
theory. Let the groups $G$ and $H$ act on the manifolds $M$ and $N$
respectively, let $\rho\colon H\rightarrow G$ be a Lie group homomorphisms and
let $f\colon N\rightarrow M$ be a $\rho$-equivariant map.

\begin{proposition}
\label{pullback}If $f\colon N\rightarrow M$ is $\rho$-equivariant, then any
differential character $\chi\in\hat{H}_{G}^{2}(M)$ defines a $H$-equivariant
differential character $(f,\rho)^{\ast}\chi\in\hat{H}_{H}^{2}(N)$ by
$((f$,$\rho)^{\ast}\chi)(\phi,\gamma)=\chi(\rho(\phi),f\circ\gamma)$. The
$H$-equivariant curvature of $(f,\rho)^{\ast}\chi$ is $(f$,$\rho)^{\ast
}(\mathrm{curv}_{G}(\chi))$.
\end{proposition}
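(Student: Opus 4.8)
The plan is to verify that $(f,\rho)^{\ast}\chi$ satisfies all four conditions of Definition \ref{Defequi}, taking the candidate equivariant curvature to be $(f,\rho)^{\ast}(\mathrm{curv}_{G}(\chi))$. First I would record the elementary compatibility facts that make the definition sensible: since $f$ is $\rho$-equivariant, $f(\phi_{N}(x))=\rho(\phi)_{M}(f(x))$, so if $\gamma\in\mathcal{C}^{\phi}(N)$ then $f\circ\gamma(1)=f(\phi_{N}(\gamma(0)))=\rho(\phi)_{M}(f(\gamma(0)))=\rho(\phi)_{M}(f\circ\gamma(0))$, which shows $f\circ\gamma\in\mathcal{C}^{\rho(\phi)}(M)$. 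Hence $\chi(\rho(\phi),f\circ\gamma)$ is defined and the formula makes sense. I would also note that $f$ commutes with the operations $\ast$, $\overleftarrow{\,\cdot\,}$, and the $\phi$-action on curves in the obvious way (e.g.\ $f\circ(\gamma\ast\gamma')=(f\circ\gamma)\ast(f\circ\gamma')$ and $f\circ(\phi\cdot\zeta)=\rho(\phi)\cdot(f\circ\zeta)$), and that $\rho$ being a homomorphism gives $\rho(\phi'\cdot\phi)=\rho(\phi')\cdot\rho(\phi)$ and $\rho(e)=e$.

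With these observations the four conditions reduce to direct pushforward computations. For condition (i), I would compute
\[
((f,\rho)^{\ast}\chi)(\phi'\cdot\phi,\gamma\ast\gamma')=\chi(\rho(\phi')\cdot\rho(\phi),(f\circ\gamma)\ast(f\circ\gamma'))=\chi(\rho(\phi'),f\circ\gamma')+\chi(\rho(\phi),f\circ\gamma),
\]
applying condition (i) for $\chi$, which is exactly $((f,\rho)^{\ast}\chi)(\phi',\gamma')+((f,\rho)^{\ast}\chi)(\phi,\gamma)$. Condition (ii) follows the same way, pushing $\overleftarrow{\zeta}\ast\gamma\ast(\phi\cdot\zeta)$ forward to $\overleftarrow{f\circ\zeta}\ast(f\circ\gamma)\ast(\rho(\phi)\cdot(f\circ\zeta))$ and invoking condition (ii) for $\chi$ with the curve $f\circ\zeta$ (which starts at $f(\gamma(0))=(f\circ\gamma)(0)$). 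For condition (iii), given $\gamma\in\mathcal{C}^{e}(N)$ with $\gamma=\partial\sigma$, note $\rho(e)=e$ so $f\circ\gamma\in\mathcal{C}^{e}(M)$ and $f_{\ast}\sigma\in C_{2}(M)$ satisfies $\partial(f_{\ast}\sigma)=f\circ\gamma$; then
\[
((f,\rho)^{\ast}\chi)(e,\gamma)=\chi(e,f\circ\gamma)=\int_{f_{\ast}\sigma}\mathrm{curv}(\chi)=\int_{\sigma}f^{\ast}\mathrm{curv}(\chi),
\]
and $f^{\ast}\mathrm{curv}(\chi)$ is precisely the ordinary $2$-form part of $(f,\rho)^{\ast}\mathrm{curv}_{G}(\chi)$ as defined in the paragraph on the map $(f,\rho)^{\ast}$.

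The one condition requiring genuine care—and the step I expect to be the main obstacle—is condition (iv), because it involves the infinitesimal generators and the map $d\rho$ on Lie algebras rather than just group elements and curves. For $X\in\mathfrak{h}$, the curve $\tau^{x,X}(s)=\exp(sX)_{N}(x)$ satisfies $f\circ\tau^{x,X}(s)=f(\exp(sX)_{N}(x))=\rho(\exp(sX))_{M}(f(x))=\exp(s\,d\rho(X))_{M}(f(x))=\tau^{f(x),d\rho(X)}(s)$, using $\rho(\exp(sX))=\exp(s\,d\rho(X))$. Also $\exp(X)$ maps under $\rho$ to $\exp(d\rho(X))$. Hence
\[
((f,\rho)^{\ast}\chi)(\exp(X),\tau^{x,X})=\chi(\exp(d\rho(X)),\tau^{f(x),d\rho(X)})=\mu_{d\rho(X)}^{\chi}(f(x))
\]
by condition (iv) for $\chi$. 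It remains to identify $\mu_{d\rho(X)}^{\chi}(f(x))$ with the moment of the pulled-back equivariant curvature evaluated at $X$ and $x$: by the definition of $(f,\rho)^{\ast}$ on equivariant $2$-forms, $((f,\rho)^{\ast}\mathrm{curv}_{G}(\chi))(X)=f^{\ast}(\mathrm{curv}_{G}(\chi)(d\rho(X)))$, whose moment part is exactly $f^{\ast}(\mu_{d\rho(X)}^{\chi})$, i.e.\ $x\mapsto\mu_{d\rho(X)}^{\chi}(f(x))$. This matches, completing condition (iv). The care here is purely in tracking the $d\rho$ and the evaluation point through the definitions; once the exponential-intertwining identity $f\circ\tau^{x,X}=\tau^{f(x),d\rho(X)}$ is in hand, everything aligns. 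Finally, the curvature claim is immediate since we took $\mathrm{curv}_{G}((f,\rho)^{\ast}\chi)=(f,\rho)^{\ast}\mathrm{curv}_{G}(\chi)$ as the defining closed equivariant $2$-form and verified it is consistent with (iii) and (iv).
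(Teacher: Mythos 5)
Your proof is correct; the paper states Proposition \ref{pullback} without proof, and your verification of conditions i)--iv) of Definition \ref{Defequi} by pushing curves forward (using that $f$ intertwines $\ast$, reversal, and the group actions, together with the identity $f\circ\tau^{x,X}=\tau^{f(x),d\rho(X)}$) is exactly the routine argument the paper leaves to the reader. The only point left implicit is that the candidate curvature $(f,\rho)^{\ast}(\mathrm{curv}_{G}(\chi))$ is a \emph{closed} $H$-equivariant form, which follows since $X_{N}$ and $(d\rho(X))_{M}$ are $f$-related for $X\in\mathfrak{h}$, so that $(f,\rho)^{\ast}$ commutes with the Cartan differential $D$.
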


It is shown in Section \ref{Other def} that $\hat{H}_{G}^{2}(M)$ is isomorphic
to the space $G$-equivariant $U(1)$-bundles with connection over $M$ modulo
isomorphisms (covering the identity on $M$). Next we show that in the
particular case in which $M$ is contractible, it is possible to determine a
concrete bundle with connection that corresponds to a $G$-equivariant
differential character $\chi\in\hat{H}_{G}^{2}(M)$. In Section
\ref{SectEquiCS} we apply this construction in order to define the
Chern-Simons bundles.

\begin{theorem}
\label{PropAction}Let $M$ be a contractible manifold, $\chi\in\hat{H}_{G}%
^{2}(M)$ and let $\lambda\in\Omega^{1}(M)$ be a $1$-form such that
$d\lambda=\mathrm{curv}(\chi)$. Then there exists a unique lift of the action
of $G$ to $M\times U(1)$ by $U(1)$-bundle\ automorphisms such that
$\Theta=\vartheta-2\pi i\lambda$ is $G$-invariant and $\chi=\mathrm{hol}%
_{G}^{\Theta}$. Precisely, the action is defined by the cocycle $\alpha_{\phi
}(x)=\int_{\gamma}\lambda-\chi(\phi,\gamma)$ for any $\gamma\in\mathcal{C}%
_{x}^{\phi}(M)$.
\end{theorem}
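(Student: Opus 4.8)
The plan is to exploit the cocycle description of lifts of the $G$-action to the trivial bundle $M\times U(1)$ together with Proposition \ref{holAlfa}. Since $M$ is contractible it is in particular path-connected, so for every $x\in M$ and $\phi\in G$ the set $\mathcal{C}_{x}^{\phi}(M)$ is nonempty and the proposed formula $\alpha_{\phi}(x)=\int_{\gamma}\lambda-\chi(\phi,\gamma)$ has a chance of defining a cocycle. The argument then splits into five steps: (a) $\alpha_{\phi}(x)$ is independent of the choice of $\gamma\in\mathcal{C}_{x}^{\phi}(M)$; (b) $\alpha$ satisfies the cocycle identity, so it does define a lift by $U(1)$-bundle automorphisms; (c) this lift makes $\Theta=\vartheta-2\pi i\lambda$ invariant; (d) $\mathrm{hol}_{G}^{\Theta}=\chi$; and (e) uniqueness.

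For (a), given $\gamma,\gamma^{\prime}\in\mathcal{C}_{x}^{\phi}(M)$ I form the loop $\gamma\ast\overleftarrow{\gamma^{\prime}}\in\mathcal{C}_{x}^{e}(M)$. Applying condition i) of Definition \ref{Defequi} with the pair $(\phi,\phi^{-1})$ and Lemma \ref{PropDC}a) gives $\chi(e,\gamma\ast\overleftarrow{\gamma^{\prime}})=\chi(\phi,\gamma)-\chi(\phi,\gamma^{\prime})$. Because $M$ is contractible this loop bounds a $2$-chain $\sigma$, so condition iii) together with $\mathrm{curv}(\chi)=d\lambda$ and Stokes yields $\chi(e,\gamma\ast\overleftarrow{\gamma^{\prime}})=\int_{\sigma}d\lambda=\int_{\gamma}\lambda-\int_{\gamma^{\prime}}\lambda$ modulo $\mathbb{Z}$. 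Comparing the two expressions gives $\int_{\gamma}\lambda-\chi(\phi,\gamma)=\int_{\gamma^{\prime}}\lambda-\chi(\phi,\gamma^{\prime})$, as required. For (b), choosing $\gamma\in\mathcal{C}_{x}^{\phi}(M)$ and $\gamma^{\prime}\in\mathcal{C}_{\phi_{M}(x)}^{\phi^{\prime}}(M)$ so that $\gamma\ast\gamma^{\prime}\in\mathcal{C}_{x}^{\phi^{\prime}\phi}(M)$, additivity of $\int\lambda$ over concatenation together with condition i) gives at once $\alpha_{\phi^{\prime}\phi}(x)=\alpha_{\phi}(x)+\alpha_{\phi^{\prime}}(\phi_{M}(x))$.

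Step (c) is the technical heart. A direct computation of $\phi_{\mathcal{U}}^{\ast}\Theta$ for the lift determined by $\alpha$ shows that invariance of $\Theta$ is equivalent to the identity $d\alpha_{\phi}=\phi_{M}^{\ast}\lambda-\lambda$ of ordinary $1$-forms. To verify this I compute the variation of $\alpha_{\phi}$ along a path $c$ from $x_{0}$ to $x_{1}$: starting from $\gamma_{0}\in\mathcal{C}_{x_{0}}^{\phi}(M)$ I build $\gamma_{1}=\overleftarrow{c}\ast\gamma_{0}\ast(\phi\cdot c)\in\mathcal{C}_{x_{1}}^{\phi}(M)$, and condition ii) gives $\chi(\phi,\gamma_{1})=\chi(\phi,\gamma_{0})$. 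Expanding $\int_{\gamma_{1}}\lambda=-\int_{c}\lambda+\int_{\gamma_{0}}\lambda+\int_{c}\phi_{M}^{\ast}\lambda$ then yields $\alpha_{\phi}(x_{1})-\alpha_{\phi}(x_{0})=\int_{c}(\phi_{M}^{\ast}\lambda-\lambda)$, which is precisely $d\alpha_{\phi}=\phi_{M}^{\ast}\lambda-\lambda$.

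Once $\Theta$ is known to be $G$-invariant, Proposition \ref{holAlfa} applies and gives $\mathrm{hol}_{\phi}^{\Theta}(\gamma)=\int_{\gamma}\lambda-\alpha_{\phi}(x)=\chi(\phi,\gamma)$, which settles (d). For uniqueness (e), any lift preserving $\Theta$ is described by some cocycle $\alpha^{\prime}$, and the same application of Proposition \ref{holAlfa} forces $\alpha_{\phi}^{\prime}(x)=\int_{\gamma}\lambda-\mathrm{hol}_{\phi}^{\Theta}(\gamma)=\int_{\gamma}\lambda-\chi(\phi,\gamma)=\alpha_{\phi}(x)$, so the lift is uniquely determined. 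I expect step (c)---establishing $G$-invariance through the $1$-form identity $d\alpha_{\phi}=\phi_{M}^{\ast}\lambda-\lambda$, where condition ii) and the contractibility-enabled constructions must be combined---to be the main obstacle; steps (a), (b), (d) and (e) are formal consequences of the defining properties of $\chi$ and the cocycle formalism, with contractibility entering crucially only in the $2$-chain argument of (a).
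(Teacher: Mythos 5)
Your proof is correct and follows essentially the same route as the paper's: well-definedness of $\alpha_{\phi}$ via contractibility and conditions i), iii) together with Lemma \ref{PropDC} a), the cocycle identity via condition i), the key identity $d\alpha_{\phi}=\phi_{M}^{\ast}\lambda-\lambda$ via condition ii), the resulting $G$-invariance of $\Theta$, and Proposition \ref{holAlfa} to get $\mathrm{hol}_{G}^{\Theta}=\chi$. The only differences are cosmetic: you make the uniqueness step explicit (the paper leaves it implicit in Proposition \ref{holAlfa}) and your cocycle verification avoids Lemma \ref{PropDC} b) by starting the second curve at $\phi_{M}(x)$, while you omit the paper's one-sentence observation that differentiability of $\alpha_{\phi}(x)$ with respect to $\phi$ follows from condition iv).
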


\begin{proof}
First we show that $\alpha_{\phi}(x)=\int_{\gamma}\lambda-\chi(\phi,\gamma)$
does not depend on the curve $\gamma\in\mathcal{C}_{x}^{\phi}(M)$. If
$\gamma,\gamma^{\prime}\in\mathcal{C}_{x}^{\phi}(M)$ then $\overleftarrow
{\gamma}\ast\gamma^{\prime}$ is a closed loop on $M$. If $\Sigma$ is a
submanifold of dimension $2$ such that $\partial\Sigma=\overleftarrow{\gamma
}\ast\gamma^{\prime}$ (it exists by the contractibility of $M$) then by Lemma
\ref{PropDC} a)$\ $we have%
\[
\chi(\phi,\gamma^{\prime})-\chi(\phi,\gamma)=\chi(\phi,\overleftarrow{\gamma
}\ast\gamma^{\prime})=%
%TCIMACRO{\tint \nolimits_{\Sigma}}%
%BeginExpansion
{\textstyle\int\nolimits_{\Sigma}}
%EndExpansion
d\lambda=%
%TCIMACRO{\tint \nolimits_{\overleftarrow{\gamma}\ast\gamma^{\prime}}}%
%BeginExpansion
{\textstyle\int\nolimits_{\overleftarrow{\gamma}\ast\gamma^{\prime}}}
%EndExpansion
\lambda=%
%TCIMACRO{\tint \nolimits_{\gamma^{\prime}}}%
%BeginExpansion
{\textstyle\int\nolimits_{\gamma^{\prime}}}
%EndExpansion
\lambda-%
%TCIMACRO{\tint \nolimits_{\gamma}}%
%BeginExpansion
{\textstyle\int\nolimits_{\gamma}}
%EndExpansion
\lambda,
\]
and hence $\int_{\gamma}\lambda-\chi(\phi,\gamma)=\int_{\gamma^{\prime}%
}\lambda-\chi(\phi,\gamma^{\prime})$.

Next we prove that $\alpha$ satisfies the cocycle condition. If $\gamma
\in\mathcal{C}_{x}^{\phi}(M)$ and $\gamma^{\prime}\in\mathcal{C}_{x}%
^{\phi^{\prime}}(M)$ we have%
\begin{align*}
\alpha_{\phi^{\prime}\cdot\phi}(x) &  =%
%TCIMACRO{\tint \nolimits_{(\phi\cdot\gamma^{\prime})\ast\gamma}}%
%BeginExpansion
{\textstyle\int\nolimits_{(\phi\cdot\gamma^{\prime})\ast\gamma}}
%EndExpansion
\lambda-\chi(\phi^{\prime}\cdot\phi,(\phi\cdot\gamma^{\prime})\ast\gamma)\\
&  =%
%TCIMACRO{\tint \nolimits_{\phi\cdot\gamma^{\prime}}}%
%BeginExpansion
{\textstyle\int\nolimits_{\phi\cdot\gamma^{\prime}}}
%EndExpansion
\lambda+%
%TCIMACRO{\tint \nolimits_{\gamma}}%
%BeginExpansion
{\textstyle\int\nolimits_{\gamma}}
%EndExpansion
\lambda-\chi(\phi^{\prime}\cdot\phi\cdot(\phi^{\prime})^{-1},\phi\cdot
\gamma^{\prime})-\chi(\phi,\gamma)\\
&  =\alpha_{\phi}(x)+\alpha_{\phi^{\prime}}(\phi(x))
\end{align*}

If $x,x^{\prime}\in M$ and $\zeta$ is a curve on $M$ with $\zeta(0)=x$,
$\zeta(1)=x^{\prime}$ and $\gamma\in\mathcal{C}_{x}^{\phi}(M)$\ then
$\overleftarrow{\zeta}\ast\gamma\ast(\phi\cdot\zeta)\in\mathcal{C}_{x^{\prime
}}^{\phi}(M)$ and by property ii) we have
\begin{align*}
\alpha_{\phi}(x^{\prime}) &  =%
%TCIMACRO{\tint \nolimits_{\overleftarrow{\zeta}\ast\gamma\ast(\phi\cdot\zeta
%)}}%
%BeginExpansion
{\textstyle\int\nolimits_{\overleftarrow{\zeta}\ast\gamma\ast(\phi\cdot\zeta
)}}
%EndExpansion
\lambda-\chi(\phi,\overleftarrow{\zeta}\ast\gamma\ast(\phi\cdot\zeta))=%
%TCIMACRO{\tint \nolimits_{\phi\cdot\zeta}}%
%BeginExpansion
{\textstyle\int\nolimits_{\phi\cdot\zeta}}
%EndExpansion
\lambda+%
%TCIMACRO{\tint \nolimits_{\gamma}}%
%BeginExpansion
{\textstyle\int\nolimits_{\gamma}}
%EndExpansion
\lambda-%
%TCIMACRO{\tint \nolimits_{\zeta}}%
%BeginExpansion
{\textstyle\int\nolimits_{\zeta}}
%EndExpansion
\lambda-\chi(\phi,\gamma)\\
&  =%
%TCIMACRO{\tint \nolimits_{\zeta}}%
%BeginExpansion
{\textstyle\int\nolimits_{\zeta}}
%EndExpansion
\phi_{M}^{\ast}\lambda+%
%TCIMACRO{\tint \nolimits_{\gamma}}%
%BeginExpansion
{\textstyle\int\nolimits_{\gamma}}
%EndExpansion
\lambda-%
%TCIMACRO{\tint \nolimits_{\zeta}}%
%BeginExpansion
{\textstyle\int\nolimits_{\zeta}}
%EndExpansion
\lambda-\chi(\phi,\gamma)=%
%TCIMACRO{\tint \nolimits_{\zeta}}%
%BeginExpansion
{\textstyle\int\nolimits_{\zeta}}
%EndExpansion
(\phi_{M}^{\ast}\lambda-\lambda)+\alpha_{\phi}(x).
\end{align*}

It follows from this condition that $\alpha_{\phi}(x)$ is differentiable with
respect $x$ and that
\begin{equation}
d\alpha_{\phi}=\phi_{M}^{\ast}\lambda-\lambda.\label{dalfa}%
\end{equation}

The differentiability of $\alpha$ with respect to $\phi$ follows from
condition iv) in the definition of equivariant differential character.

We define the connection form $\Theta=\vartheta-2\pi i\lambda\in\Omega
^{1}(M\times U(1),i\mathbb{R})$. For every $\phi\in\mathcal{G}$, using
equation (\ref{dalfa}) we obtain
\begin{align*}
\phi_{M\times U(1)}^{\ast}\Theta & =\phi_{M\times U(1)}^{\ast}\vartheta-2\pi
i\phi_{M}^{\ast}\lambda=(\vartheta+2\pi id\alpha_{\phi})-2\pi i\phi_{M}^{\ast
}\lambda\\
& =(\vartheta+2\pi i(\phi_{M}^{\ast}\lambda-\lambda))-2\pi i\phi_{M}^{\ast
}\lambda=\Theta.
\end{align*}

Hence $\Theta$ is $G$-invariant and from Proposition \ref{holAlfa} it
follows\ that $\mathrm{hol}_{G}^{\Theta}=\chi$.
\end{proof}

From the preceding theorem we conclude the following

\begin{corollary}
\label{ExistenceBundle}If $M$ is contractible, then for any $G$-equivariant
differential character $\chi\in\hat{H}_{G}^{2}(M)$ there exists a
$G$-equivariant $U(1)$-bundle with connection $(\mathcal{U},\Theta)$ such that
$\chi(\phi,\gamma)=\mathrm{hol}_{\phi}^{\Theta}(\gamma)$.
\end{corollary}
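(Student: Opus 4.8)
The plan is to obtain the corollary as an immediate consequence of Theorem \ref{PropAction}. That theorem already produces a $G$-equivariant $U(1)$-bundle with connection realizing a prescribed character $\chi$, but it takes as an extra input a $1$-form $\lambda\in\Omega^{1}(M)$ with $d\lambda=\mathrm{curv}(\chi)$. So the entire content of the corollary reduces to supplying such a $\lambda$, which is exactly where the contractibility hypothesis enters.

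First I would observe that, by Definition \ref{Defequi}, the character $\chi\in\hat{H}_{G}^{2}(M)$ comes equipped with a closed equivariant $2$-form $\mathrm{curv}_{G}(\chi)=\mathrm{curv}(\chi)+\mu^{\chi}\in\Omega_{G}^{2}(M)$. From the description of the Cartan differential on $\Omega_{G}^{2}(M)$ recalled earlier, the equation $D\,\mathrm{curv}_{G}(\chi)=0$ forces in particular $d\,\mathrm{curv}(\chi)=0$, so the base component $\mathrm{curv}(\chi)\in\Omega^{2}(M)$ is an ordinary closed $2$-form. Next, since $M$ is contractible the Poincaré lemma applies, and this closed $2$-form is exact: there exists $\lambda\in\Omega^{1}(M)$ with $d\lambda=\mathrm{curv}(\chi)$. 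This single step is the only genuine use of the contractibility assumption.

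Finally I would feed this $\lambda$ into Theorem \ref{PropAction}, which yields a lift of the $G$-action to $M\times U(1)$ by $U(1)$-bundle automorphisms making $\Theta=\vartheta-2\pi i\lambda$ a $G$-invariant connection with $\chi=\mathrm{hol}_{G}^{\Theta}$. Setting $(\mathcal{U},\Theta)=(M\times U(1),\vartheta-2\pi i\lambda)$ and unwinding the notation $\mathrm{hol}_{G}^{\Theta}(\phi,\gamma)=\mathrm{hol}_{\phi}^{\Theta}(\gamma)$ gives exactly $\chi(\phi,\gamma)=\mathrm{hol}_{\phi}^{\Theta}(\gamma)$, as required. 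There is no serious obstacle here; the corollary is essentially a restatement of Theorem \ref{PropAction}, and the only point demanding even a line of care is the verification that $\mathrm{curv}(\chi)$ is closed so that the Poincaré lemma can be applied.
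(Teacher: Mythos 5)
Your proposal is correct and is precisely the argument the paper intends: the corollary is stated immediately after Theorem \ref{PropAction} with the remark \emph{``From the preceding theorem we conclude the following''}, and the only missing step is exactly the one you supply, namely that $\mathrm{curv}(\chi)$ is closed (since $\mathrm{curv}_{G}(\chi)$ is a closed equivariant form) and hence exact on a contractible $M$ by the Poincar\'e lemma, so Theorem \ref{PropAction} applies with the resulting $\lambda$.
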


If we denote by $\mathrm{Bund}_{U(1)}^{\nabla,G}(M)$\ the space of
$G$-equivariant $U(1)$-bundles with connection over $M$ modulo isomorphisms
(covering the identity on $M$), then from Corollary \ref{ExistenceBundle} and
Theorem \ref{anomaly} a) we obtain

\begin{corollary}
\label{Cor iso}If $M$ is contractible, then the map that assigns to a
$G$-equivariant $U(1)$-bundle with connection its $G$-equivariant log-holonomy
determines a bijection $\mathrm{Bund}_{U(1)}^{\nabla,G}(M)\simeq\hat{H}%
_{G}^{2}(M).$
\end{corollary}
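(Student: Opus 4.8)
The plan is to assemble Corollary \ref{Cor iso} from the two ingredients that have already been established, so the proof will be short and largely bookkeeping. The statement to prove is that, for contractible $M$, the assignment $(\mathcal{U},\Theta)\mapsto\mathrm{hol}_{G}^{\Theta}$ descends to a well-defined bijection $\mathrm{Bund}_{U(1)}^{\nabla,G}(M)\simeq\hat{H}_{G}^{2}(M)$. I would first note that the map is \emph{well defined} on isomorphism classes: if $(\mathcal{U},\Theta)\simeq_{G}(\mathcal{U}^{\prime},\Theta^{\prime})$, then by Theorem \ref{anomaly} a) their equivariant log-holonomies agree, so $\mathrm{hol}_{G}^{\Theta}$ depends only on the class of $(\mathcal{U},\Theta)$ in $\mathrm{Bund}_{U(1)}^{\nabla,G}(M)$. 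One should also check that $\mathrm{hol}_{G}^{\Theta}$ is genuinely an element of $\hat{H}_{G}^{2}(M)$, i.e.\ that it satisfies conditions i)--iv) of Definition \ref{Defequi}; but this is exactly the content of the remark following that definition together with Propositions \ref{PropHol}, \ref{HolonomyInt} and \ref{InfHolonomy}, so it may simply be cited.

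Next I would establish \emph{injectivity}. Suppose two equivariant bundles with connection have the same equivariant log-holonomy, $\mathrm{hol}_{G}^{\Theta}=\mathrm{hol}_{G}^{\Theta^{\prime}}$ as elements of $\hat{H}_{G}^{2}(M)$. This means $\mathrm{hol}_{\phi}^{\Theta}(\gamma)=\mathrm{hol}_{\phi}^{\Theta^{\prime}}(\gamma)$ for every $\phi\in G$ and every $\gamma\in\mathcal{C}^{\phi}(M)$, which is precisely the hypothesis of the ``if'' direction of Theorem \ref{anomaly} a). Hence $(\mathcal{U},\Theta)\simeq_{G}(\mathcal{U}^{\prime},\Theta^{\prime})$, so the two bundles represent the same class in $\mathrm{Bund}_{U(1)}^{\nabla,G}(M)$ and the map is injective. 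Note that this step does not even require contractibility; it is the surjectivity that forces the hypothesis on $M$.

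Finally I would prove \emph{surjectivity}, and this is where contractibility enters and where the only real content lies. Given an arbitrary $\chi\in\hat{H}_{G}^{2}(M)$, I must exhibit a $G$-equivariant $U(1)$-bundle with connection whose equivariant log-holonomy is $\chi$. This is exactly Corollary \ref{ExistenceBundle}: since $M$ is contractible, the $2$-form $\mathrm{curv}(\chi)$ is exact (being closed on a contractible manifold), so one may choose $\lambda\in\Omega^{1}(M)$ with $d\lambda=\mathrm{curv}(\chi)$ and invoke Theorem \ref{PropAction} to obtain a concrete lift of the $G$-action to $M\times U(1)$ together with the invariant connection $\Theta=\vartheta-2\pi i\lambda$ satisfying $\mathrm{hol}_{G}^{\Theta}=\chi$. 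Thus every character is hit, and combined with injectivity the map is a bijection. The step I expect to carry the weight is surjectivity, but because Theorem \ref{PropAction} and Corollary \ref{ExistenceBundle} already package all of its difficulty—the construction of the cocycle $\alpha_{\phi}$, the verification that it is independent of the chosen curve, and the cocycle identity—the present corollary reduces to citing those results and pasting the two implications of Theorem \ref{anomaly} a) together.
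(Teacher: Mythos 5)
Your proof is correct and follows essentially the same route as the paper, which obtains the corollary directly by combining Theorem \ref{anomaly} a) (well-definedness and injectivity) with Corollary \ref{ExistenceBundle} (surjectivity, where contractibility is used). Your write-up merely makes this assembly explicit, including the observation that injectivity does not need contractibility, which is consistent with the paper's remark that the general case is handled in Section \ref{Other def}.
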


This result is generalized to arbitrary manifolds in Section \ref{Other def}.

\begin{remark}
We conclude that a $G$-equivariant differential character determines a
$G$-equivariant $U(1)$-bundle with connection modulo an isomorphism. However,
in order to determine a concrete bundle with connection, it is necessary to
give additional information. In the case of contractible manifolds it is
enough to give a form $\lambda\in\Omega^{1}(M)$ such that $d\lambda
=\mathrm{curv}(\chi)$.
\end{remark}

Theorem \ref{anomaly} b) can be reinterpreted in terms of differential
characters. Precisely we have the following

\begin{proposition}
\label{anomalyC}If $M$ is contractible then the $G$-equivariant $U(1)$-bundle
determined (up to an isomorphism) by $\chi\in\hat{H}_{G}^{2}(M)$ is trivial if
and only if there exists $\beta\in\Omega^{1}(M)^{G}$ such that $\chi
=\varsigma(\beta)$.
\end{proposition}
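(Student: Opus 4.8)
The plan is to deduce the proposition directly from the correspondence between equivariant differential characters and equivariant $U(1)$-bundles with connection, combined with the triviality criterion of Theorem \ref{anomaly} b) and the description of $\varsigma(\beta)$ in Example \ref{beta}. First I would invoke Corollary \ref{ExistenceBundle} to fix a $G$-equivariant $U(1)$-bundle with connection $(\mathcal{U},\Theta)$ representing $\chi$, that is, satisfying $\mathrm{hol}_{\phi}^{\Theta}(\gamma)=\chi(\phi,\gamma)$ for every $(\phi,\gamma)\in\mathcal{C}^{G}(M)$. By Corollary \ref{Cor iso} this $(\mathcal{U},\Theta)$ is unique up to $G$-equivariant isomorphism, so the phrase ``the $G$-equivariant $U(1)$-bundle determined up to isomorphism by $\chi$ is trivial'' is well-posed: it refers to the bundle $\mathcal{U}$ of any such representative.

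Next I would rewrite the triviality criterion entirely in terms of $\chi$. Theorem \ref{anomaly} b) asserts that $\mathcal{U}$ is a trivial $G$-equivariant $U(1)$-bundle if and only if there exists $\beta\in\Omega^{1}(M)^{G}$ with $\mathrm{hol}_{\phi}^{\Theta}(\gamma)=\int_{\gamma}\beta\operatorname{mod}\mathbb{Z}$ for all $\phi\in G$ and $\gamma\in\mathcal{C}^{\phi}(M)$. Since $\mathrm{hol}_{\phi}^{\Theta}(\gamma)=\chi(\phi,\gamma)$ by the choice of $(\mathcal{U},\Theta)$, and since by Example \ref{beta} the right-hand side equals $\varsigma(\beta)(\phi,\gamma)$, the stated condition ``$\mathrm{hol}_{\phi}^{\Theta}(\gamma)=\int_{\gamma}\beta\operatorname{mod}\mathbb{Z}$ for all $(\phi,\gamma)$'' is literally the equality of maps $\chi=\varsigma(\beta)$ on $\mathcal{C}^{G}(M)$. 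Combining the two equivalences yields both implications at once: $\mathcal{U}$ is trivial iff such a $\beta$ exists iff $\chi=\varsigma(\beta)$ for some $\beta\in\Omega^{1}(M)^{G}$.

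I do not expect a serious obstacle, since the proposition is essentially a translation of Theorem \ref{anomaly} b) through the dictionary of Corollary \ref{Cor iso}. The one point requiring care is the well-posedness noted above: triviality \emph{as a $G$-equivariant bundle} is invariant under $G$-equivariant bundle isomorphisms, and any two representatives of $\chi$ are $G$-equivariantly isomorphic (indeed as bundles with connection), so the property does not depend on the chosen representative. A further point worth flagging to avoid confusion is that it is the \emph{bundle} $\mathcal{U}$, not the pair $(\mathcal{U},\Theta)$, that is being asserted trivial; accordingly one should not expect $\mathrm{curv}(\chi)$ to vanish, and indeed $\mathrm{curv}_{G}(\varsigma(\beta))=D\beta$ by Example \ref{beta}, whose ordinary-curvature part $d\beta$ is in general nonzero. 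This is consistent with $G$-triviality holding precisely when the holonomy is computed by integrating a single $G$-invariant $1$-form, which is exactly the content extracted above.
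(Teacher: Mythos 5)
Your proof is correct and coincides with the paper's own reading of this statement: the paper presents Proposition \ref{anomalyC} precisely as a reinterpretation of Theorem \ref{anomaly} b) through the holonomy correspondence of Corollaries \ref{ExistenceBundle} and \ref{Cor iso}, which is exactly the translation you carry out. Your added remarks on well-posedness under choice of representative and on the non-vanishing of $\mathrm{curv}(\chi)$ are accurate and harmless.
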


\subsection{Projectable Differential Characters\label{projectable}}

Suppose that $M/G$ is a manifold, the projection $\pi\colon M\rightarrow M/G $
is smooth. We say that a differential character $\chi\in\hat{H}^{2}(M)$ is
$\pi$-projectable if there exists $\underline{\chi}\in\hat{H}^{2}(M/G)$ such
that $\chi=\pi_{G}^{\ast}\underline{\chi}$. A necessary condition for $\chi$
to be $\pi$-projectable is $\mu^{\chi}=0$. For free actions, this condition is
also sufficient

\begin{proposition}
\label{quot}If $G$ acts freely on a contractible manifold $M\ $and $\pi\colon
M\rightarrow M/G$ is a principal $G$-bundle, then $\chi\in\hat{H}_{G}^{2}(M)$
is $\pi$-projectable if and only if $\mu^{\chi}=0$.
\end{proposition}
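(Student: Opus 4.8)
The plan is to treat the two implications separately, with the forward one immediate and the reverse one the substance. For the forward direction, suppose $\chi=\pi_{G}^{\ast}\underline{\chi}$ for some $\underline{\chi}\in\hat{H}^{2}(M/G)$. By Example \ref{Exquotient} we then have $\mathrm{curv}_{G}(\chi)=\pi^{\ast}(\mathrm{curv}(\underline{\chi}))$, and the right-hand side is an ordinary (pulled-back) $2$-form whose polynomial part vanishes; comparing with $\mathrm{curv}_{G}(\chi)=\mathrm{curv}(\chi)+\mu^{\chi}$ forces $\mu^{\chi}=0$. For the converse I would first record what $\mu^{\chi}=0$ buys infinitesimally. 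Since $\mathrm{curv}_{G}(\chi)=\mathrm{curv}(\chi)+\mu^{\chi}$ is $D$-closed, the discussion in Section 2 gives $\iota_{X_{M}}\mathrm{curv}(\chi)=d\mu_{X}^{\chi}$ for all $X\in\mathfrak{g}$; with $\mu^{\chi}=0$ this reads $\iota_{X_{M}}\mathrm{curv}(\chi)=0$. As $\mathrm{curv}(\chi)\in\Omega^{2}(M)^{G}$ is $G$-invariant and horizontal for the free principal action $\pi\colon M\to M/G$, it is basic and descends to a closed $2$-form $\underline{\omega}\in\Omega^{2}(M/G)$ with $\pi^{\ast}\underline{\omega}=\mathrm{curv}(\chi)$.

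Next I would produce the descended character by descending the whole bundle. Using Theorem \ref{PropAction}, fix $\lambda\in\Omega^{1}(M)$ with $d\lambda=\mathrm{curv}(\chi)$ and let $(\mathcal{U}=M\times U(1),\Theta=\vartheta-2\pi i\lambda)$ be the resulting $G$-equivariant bundle with $\chi=\mathrm{hol}_{G}^{\Theta}$. The hypothesis $\mu^{\chi}=0$ means precisely $\mu_{X}^{\Theta}=-\tfrac{i}{2\pi}\Theta(X_{\mathcal{U}})=0$, i.e. every fundamental vector field $X_{\mathcal{U}}$ is $\Theta$-horizontal; together with the $G$-invariance of $\Theta$ this says $\Theta$ is basic for the $G$-action on $\mathcal{U}$. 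Because $G$ acts freely and $\pi$ is a principal $G$-bundle, the commuting free actions of $G$ and $U(1)$ on $\mathcal{U}$ make $\mathcal{U}/G\to M/G$ a principal $U(1)$-bundle, and the basic form $\Theta$ descends to a connection $\underline{\Theta}$ on it. I then set $\underline{\chi}:=\mathrm{hol}^{\underline{\Theta}}\in\hat{H}^{2}(M/G)$, which is automatically a differential character with $\mathrm{curv}(\underline{\chi})=\underline{\omega}$.

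Finally I would check $\chi=\pi_{G}^{\ast}\underline{\chi}$, i.e. $\chi(\phi,\gamma)=\mathrm{hol}^{\underline{\Theta}}(\pi\circ\gamma)$ for every $(\phi,\gamma)\in\mathcal{C}^{G}(M)$. Let $\overline{\gamma}$ be a $\Theta$-horizontal lift of $\gamma$ to $\mathcal{U}$; its image under the quotient map $q\colon\mathcal{U}\to\mathcal{U}/G$ is an $\underline{\Theta}$-horizontal lift of the loop $\pi\circ\gamma$, and applying $q$ to $\overline{\gamma}(1)=\phi_{\mathcal{U}}(\overline{\gamma}(0))\cdot\mathrm{Hol}_{\phi}^{\Theta}(\gamma)$, together with $q\circ\phi_{\mathcal{U}}=q$, identifies the ordinary holonomy of $\pi\circ\gamma$ with $\mathrm{Hol}_{\phi}^{\Theta}(\gamma)$. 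Hence $\mathrm{hol}^{\underline{\Theta}}(\pi\circ\gamma)=\mathrm{hol}_{\phi}^{\Theta}(\gamma)=\chi(\phi,\gamma)$, which is exactly $\pi_{G}^{\ast}\underline{\chi}$ by Example \ref{Exquotient}, completing the argument.

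The step I expect to be the main obstacle is the descent in the second paragraph: verifying that $\mathcal{U}/G\to M/G$ is genuinely a smooth principal $U(1)$-bundle and that the basic form $\Theta$ descends to a smooth connection $\underline{\Theta}$, rather than a mere set-theoretic quotient. In finite dimensions this is the standard theory of quotients by free proper actions, but here $M$ (a space of connections or metrics) and $G$ are infinite-dimensional, so one must lean on the standing hypothesis that $\pi\colon M\to M/G$ is an honest principal bundle. An alternative, more hands-on route that avoids bundle quotients is to define $\underline{\chi}(\overline{\gamma}):=\chi(\phi,\gamma)$ directly on lifts; there the crux becomes independence of the chosen lift, which reduces via property i) to showing that $\chi$ vanishes on curves contained in a single fibre — and this is precisely where $\mu^{\chi}=0$ enters, through condition iv).
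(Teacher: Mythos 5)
Your proposal is correct and follows essentially the same route as the paper: apply Theorem \ref{PropAction} to realize $\chi$ as $\mathrm{hol}_{G}^{\Theta}$ for $\Theta=\vartheta-2\pi i\lambda$ on $M\times U(1)$, observe that $\mu^{\chi}=0$ makes $\Theta$ basic so it descends to a connection $\underline{\Theta}$ on $(M\times U(1))/G\rightarrow M/G$, and identify $\chi=\pi_{G}^{\ast}\mathrm{hol}^{\underline{\Theta}}$. The only differences are elaborations: you spell out the forward implication (which the paper disposes of in the sentence preceding the proposition) and the horizontal-lift verification that the paper leaves as ``easily seen.''
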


\begin{proof}
Let $\lambda\in\Omega^{1}(M)$ be a $1$-form such that $d\lambda=\mathrm{curv}%
(\chi)$ and let $\Theta=\vartheta-2\pi i\lambda$ be the corresponding
connection. For any $X\in\mathfrak{g}$ we have $\iota_{X_{M\times U(1)}}%
\Theta=2\pi i\mu_{X}^{\chi}=0$, and as $\Theta$ is $G$-invariant, it projects
onto a connection $\underline{\Theta}\ $on $(M\times U(1))/G\rightarrow M/G$.
It is easily seen that $\chi=\pi_{G}^{\ast}\mathrm{hol}_{G}^{\underline
{\Theta}}$.
\end{proof}

We need also the following generalization of the preceding result, that can be
proved in a similar way

\begin{proposition}
\label{quot2}Let $H\subset G$ be a Lie subgroup of $G$ that acts freely on a
contractible manifold $M\ $and such that $\pi\colon M\rightarrow M/H$ is a
principal $H$-bundle. Then $G$ acts on $M/H$, and if $\phi\in G$ and
$\gamma\in\mathcal{C}^{\phi}(M)$ then $\pi\circ\gamma\in\mathcal{C}^{\phi
}(M/H)$. If $\chi\in\hat{H}_{G}^{2}(M)$ then there exists $\underline{\chi}%
\in\hat{H}_{G}^{2}(M/H)$\ such that $\chi(\phi,\gamma)=\underline{\chi}%
(\pi\circ\gamma)$\ if and only if $\mu^{\chi}|_{\mathfrak{h}}=0$.
\end{proposition}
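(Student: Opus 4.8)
The plan is to follow the strategy used for Proposition \ref{quot}: realize $\chi$ by an explicit connection on the trivial bundle via Theorem \ref{PropAction}, rewrite the hypothesis $\mu^{\chi}|_{\mathfrak{h}}=0$ as a horizontality condition on that connection, and then descend both the bundle and the connection along $\pi$. Concretely, since $M$ is contractible and $\mathrm{curv}(\chi)$ is closed, I would first choose $\lambda\in\Omega^{1}(M)$ with $d\lambda=\mathrm{curv}(\chi)$ and invoke Theorem \ref{PropAction} to obtain the unique lift of the $G$-action to $\mathcal{U}=M\times U(1)$ for which $\Theta=\vartheta-2\pi i\lambda$ is $G$-invariant and $\chi=\mathrm{hol}_{G}^{\Theta}$. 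Before descending I would record the two structural facts asserted in the statement: because $H$ is normal in $G$ and acts freely with $\pi$ a principal $H$-bundle, $G$ acts on $M/H$ with $H$ acting trivially there, and for $\gamma\in\mathcal{C}^{\phi}(M)$ the projection satisfies $\pi(\gamma(1))=\pi(\phi_{M}(\gamma(0)))=\phi_{M/H}(\pi(\gamma(0)))$, so $\pi\circ\gamma\in\mathcal{C}^{\phi}(M/H)$.

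For the sufficiency direction I would argue exactly as in Proposition \ref{quot}, but only along the subgroup $\mathfrak{h}$. For $X\in\mathfrak{h}$ one has $\iota_{X_{\mathcal{U}}}\Theta=2\pi i\mu_{X}^{\chi}$, so the hypothesis $\mu^{\chi}|_{\mathfrak{h}}=0$ says precisely that $\Theta$ annihilates every fundamental vector field of the $H$-action. Since $H$ acts freely on $M$ and commutes with the principal $U(1)$-action, it acts freely on $\mathcal{U}$ and $\mathcal{U}\rightarrow\mathcal{U}/H$ is a principal $H$-bundle; the $G$-invariant, $H$-horizontal form $\Theta$ therefore descends to a connection $\underline{\Theta}$ on the $U(1)$-bundle $\mathcal{U}/H\rightarrow M/H$. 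Using that $H$ is normal in $G$, the $G$-action on $\mathcal{U}$ descends to $\mathcal{U}/H$ by bundle automorphisms and $\underline{\Theta}$ is $G$-invariant, so I may set $\underline{\chi}=\mathrm{hol}_{G}^{\underline{\Theta}}\in\hat{H}_{G}^{2}(M/H)$. The projection $p\colon\mathcal{U}\rightarrow\mathcal{U}/H$ is a $G$-equivariant $U(1)$-bundle morphism covering $\pi$ with $p^{\ast}\underline{\Theta}=\Theta$, whence Proposition \ref{PropHol}f) (with $\rho=\mathrm{id}_{G}$) gives $\chi(\phi,\gamma)=\mathrm{hol}_{\phi}^{\Theta}(\gamma)=\mathrm{hol}_{\phi}^{\underline{\Theta}}(\pi\circ\gamma)=\underline{\chi}(\phi,\pi\circ\gamma)$, the desired factorization.

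For the necessity direction I would start from an assumed factorization, i.e. $\chi=(\pi,\mathrm{id}_{G})^{\ast}\underline{\chi}$ with $\underline{\chi}\in\hat{H}_{G}^{2}(M/H)$. By the naturality of the equivariant curvature in Proposition \ref{pullback} one has $\mathrm{curv}_{G}(\chi)=(\pi,\mathrm{id}_{G})^{\ast}\mathrm{curv}_{G}(\underline{\chi})$, and in particular $\mu_{X}^{\chi}=\pi^{\ast}\mu_{X}^{\underline{\chi}}$ for every $X\in\mathfrak{g}$. It then remains to show $\mu^{\underline{\chi}}|_{\mathfrak{h}}=0$, which I would extract from the fact that $H$ acts trivially on $M/H$: realizing $\underline{\chi}$ by a $G$-equivariant bundle with connection $(\underline{\mathcal{U}},\underline{\Theta})$ whose total space carries a vertically trivial $H$-action gives $X_{\underline{\mathcal{U}}}=0$ and hence $\mu_{X}^{\underline{\chi}}=-\tfrac{i}{2\pi}\underline{\Theta}(X_{\underline{\mathcal{U}}})=0$ for $X\in\mathfrak{h}$; combined with $\mu_{X}^{\chi}=\pi^{\ast}\mu_{X}^{\underline{\chi}}$ this yields $\mu^{\chi}|_{\mathfrak{h}}=0$.

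I expect this last point to be the main obstacle and the step demanding the most care. The identity $X_{M/H}=0$ for $X\in\mathfrak{h}$, together with $D$-closedness of $\mathrm{curv}_{G}(\underline{\chi})$, only gives $d\mu_{X}^{\underline{\chi}}=\iota_{X_{M/H}}\mathrm{curv}(\underline{\chi})=0$, i.e. that $\mu_{X}^{\underline{\chi}}$ is locally constant; the genuine content is to upgrade local constancy to vanishing, and this is exactly where the freeness of the $H$-action and the principal-bundle hypothesis on $\pi$ must be invoked, paralleling the role of the condition $\mu^{\chi}=0$ in Proposition \ref{quot}. The cleanest way I would secure it is to pin down $\underline{\chi}$ as the canonical descent produced in the sufficiency argument (using the uniqueness up to isomorphism of Theorem \ref{anomaly}a), for which $H$ acts trivially on the total space $\mathcal{U}/H$ by construction, so that its $\mathfrak{h}$-momentum is forced to be zero.
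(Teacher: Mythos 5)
Your sufficiency direction ($\mu^{\chi}|_{\mathfrak{h}}=0$ implies projectability) is correct and is exactly the paper's route: the paper proves Proposition \ref{quot} by precisely this descent argument (choose $\lambda$ with $d\lambda=\mathrm{curv}(\chi)$, apply Theorem \ref{PropAction}, observe that the vanishing of the momentum makes $\Theta$ horizontal along the $H$-fundamental vector fields, and project), and it disposes of Proposition \ref{quot2} with the words ``can be proved in a similar way''; your use of Proposition \ref{PropHol} f) to verify the factorization is a clean way to finish. Note that, like the paper, you implicitly need $H$ to be normal in $G$ (or at least that the $G$-action normalizes the $H$-action) for $G$ to act on $M/H$ and on $\mathcal{U}/H$; this is not among the stated hypotheses, so you are silently adding it, but it is indispensable for the statement to make sense.

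The necessity direction, however, contains a genuine gap, which you located but did not close: your proposed fix is circular. The ``canonical descent produced in the sufficiency argument'' exists only when $\mu^{\chi}|_{\mathfrak{h}}=0$, which is exactly the conclusion you are trying to establish, and the claim that $\underline{\chi}$ can be realized by a bundle on whose total space $H$ acts trivially is equivalent to that conclusion: $H$ acts trivially on the base $M/H$, but it may still act on the fibers of $\underline{\mathcal{U}}$ through a character of $H$, and this is precisely the obstruction. Indeed, under your reading of the factorization, $\chi=(\pi,\mathrm{id}_{G})^{\ast}\underline{\chi}$ with $\underline{\chi}\in\hat{H}_{G}^{2}(M/H)$ arbitrary, the ``only if'' claim is simply false: take $G=H=\mathbb{R}$ acting on $M=\mathbb{R}$ by translations, so that $M/H$ is a point, and let $\underline{\chi}(\phi,\cdot)=\xi(\phi)$ with $\xi(t)=t\operatorname{mod}\mathbb{Z}$, as in the paper's example of characters coming from $\mathrm{Hom}(G,\mathbb{R}/\mathbb{Z})$; here $\underline{\chi}$ is the unique character factorizing its pullback $\chi(\phi,\gamma)=\xi(\phi)$, so no appeal to Theorem \ref{anomaly} a) can rescue the argument, and yet $\mu^{\chi}|_{\mathfrak{h}}=d\xi\neq 0$. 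The statement is true under the reading that the paper's notation $\underline{\chi}(\pi\circ\gamma)$ actually encodes, namely that the value $\chi(\phi,\gamma)$ depends only on the projected curve $\pi\circ\gamma$ (this is also the property your descended holonomy automatically has, since $H$ acts trivially on $\mathcal{U}/H$). With that reading, necessity requires no curvature computation at all: for $X\in\mathfrak{h}$ the curve $\tau^{x,X}$ and the constant curve $c_{x}$ have the same projection $c_{\pi(x)}$, so conditions iv) and i) of Definition \ref{Defequi} give
\begin{equation*}
\mu_{X}^{\chi}(x)=\chi(\exp(X),\tau^{x,X})=\chi(e,c_{x})=0.
\end{equation*}
So: keep your sufficiency proof, state the factorization as factoring through $\pi\circ\gamma$, discard the curvature-naturality argument for necessity, and replace it by the one-line argument above.
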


\begin{remark}
\label{quot20}If $H\subset G$ is normal closed subgroup, we can also consider
the action of the quotient group $G/H$ on $M/H$ and if $\mu^{\chi
}|_{\mathfrak{h}}=0$ we obtain an element of $\hat{H}_{G/H}^{2}(M/H)$.
\end{remark}

\subsection{Other definitions and arbitrary manifolds}\label{Other def}%
\footnote{The results of this section are not necessary for the study of the
Chern-Simons bundles.}

The Definition \ref{Dennoequi} of differential character is a direct
generalization of the concept of holonomy, but it is not the usual definition.
Usually the differential characters are defined on $Z_{1}(M)$ and not on
$\mathcal{C}(M)$ (e.g. see Section \ref{CCS}). In this Section we show that
both definitions are equivalent, and we study their generalization\ to the
equivariant case. First we need the following technical result

\begin{definition}
A chain $\upsilon\in C_{2}(M)$ is a thin chain if $\int_{\upsilon}\alpha=0 $
for any $\alpha\in\Omega^{2}(M)$.
\end{definition}

\begin{lemma}
\label{Aditivo} If $\gamma_{1},\gamma_{2}\ $are curves on $M$ with $\gamma
_{2}(0)=\gamma_{1}(1)$, then $\gamma_{1}\ast\gamma_{2}=\gamma_{1}+\gamma
_{2}+\partial\upsilon$, with $\upsilon\in C_{2}(M)$ a thin chain.
\end{lemma}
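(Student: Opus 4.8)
The plan is to exhibit the thin chain $\upsilon$ explicitly as (minus) a single singular $2$-simplex that factors through a $1$-dimensional space. Thinness will then be automatic, and the entire content of the lemma reduces to computing one boundary.

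The observation I would build on is that any singular $2$-simplex $\sigma\colon\Delta^{2}\rightarrow M$ which factors as $\sigma=c\circ\pi$ through a curve $c\colon I\rightarrow M$ and a map $\pi\colon\Delta^{2}\rightarrow I$ is automatically thin: for any $\alpha\in\Omega^{2}(M)$ we have $\sigma^{\ast}\alpha=\pi^{\ast}(c^{\ast}\alpha)$, and $c^{\ast}\alpha\in\Omega^{2}(I)=0$ because $I$ is $1$-dimensional, so $\int_{\sigma}\alpha=\int_{\Delta^{2}}\sigma^{\ast}\alpha=0$. Thus producing the required thin chain amounts to finding a suitable $\pi$.

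Concretely, I would take $c=\gamma_{1}\ast\gamma_{2}$ and let $\pi\colon\Delta^{2}\rightarrow I$ be the affine map determined on the vertices $v_{0},v_{1},v_{2}$ of $\Delta^{2}$ by $\pi(v_{0})=0$, $\pi(v_{1})=\tfrac{1}{2}$, $\pi(v_{2})=1$; this lands in $I$ since $\pi$ is a convex combination of these values. Set $\sigma=c\circ\pi$. Restricting $\pi$ to each edge gives the affine interpolation of the corresponding vertex values, so a direct check against the definition of $\ast$ identifies the three faces: $\sigma|_{[v_{0},v_{2}]}=\gamma_{1}\ast\gamma_{2}$ (here $\pi=s$ and $c(s)=(\gamma_{1}\ast\gamma_{2})(s)$); $\sigma|_{[v_{0},v_{1}]}=\gamma_{1}$ (here $\pi=s/2$ and $c(s/2)=\gamma_{1}(s)$); and $\sigma|_{[v_{1},v_{2}]}=\gamma_{2}$ (here $\pi=\tfrac{1}{2}+s/2$ and $c(\tfrac{1}{2}+s/2)=\gamma_{2}(s)$). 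Using the alternating-face boundary convention $\partial\sigma=\sigma|_{[v_{1},v_{2}]}-\sigma|_{[v_{0},v_{2}]}+\sigma|_{[v_{0},v_{1}]}$, this yields $\partial\sigma=\gamma_{2}-\gamma_{1}\ast\gamma_{2}+\gamma_{1}$, so that $\upsilon=-\sigma$ satisfies $\gamma_{1}\ast\gamma_{2}=\gamma_{1}+\gamma_{2}+\partial\upsilon$, with $\upsilon$ thin by the observation above.

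The steps are routine; the only points that require attention are the sign bookkeeping in $\partial\sigma$ and the fact that $c$ is merely piecewise smooth at the midpoint, so that $\sigma$ is a piecewise-smooth singular simplex — which is exactly the regularity allowed by the chain complex $C_{\bullet}(M)$ used throughout. I do not anticipate a genuine obstacle: the lemma is essentially the statement that a simplex collapsing onto a curve is invisible to the integration pairing, packaged so that concatenation and formal addition of curves agree up to thin chains.
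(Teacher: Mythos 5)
Your proposal is correct and is essentially the paper's own argument: the paper's ``proof'' is just a pointer to \cite[page 64]{Greenberg} for the explicit chain, and the collapsed simplex $\sigma=c\circ\pi$ you build (with $\upsilon=-\sigma$, thin because $\sigma^{\ast}\alpha=\pi^{\ast}(c^{\ast}\alpha)$ and $\Omega^{2}(I)=0$) is precisely that standard construction, together with the thinness verification the paper's definition requires. Your closing caveat about $c$ being only piecewise smooth at $t=\tfrac{1}{2}$ is the right technical point and is harmless, since the chains used throughout the paper are exactly of this piecewise-smooth type and the integral vanishes on each smooth piece.
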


The explicit form of the chain $\upsilon$ can be found for example in
\cite[page 64]{Greenberg}.

If $\chi$ is a differential character as defined in Definition \ref{Dennoequi}%
, then we can extend $\chi$ to $Z_{1}(M)$ in the following way: if $z\in
Z_{1}(M)$ then we have $z=\gamma+\partial\sigma$, with $\gamma\in
\mathcal{C}(M)$ and $\sigma\in C_{2}(M)$\footnote{This fact follows from the
surjectivity of the map $\pi_{1}(M)\rightarrow H_{1}(M)$.}. Then we define
$\chi(z)=\chi(\gamma)+\int_{\sigma}\mathrm{curv}(\chi)\operatorname{mod}%
\mathbb{Z}$. That $\chi$ is well defined and that it is a group homomorphism
$\chi\colon Z_{1}(M)\rightarrow\mathbb{R}/\mathbb{Z}$ easily follows from
Lemma \ref{Aditivo}. We conclude that a definition equivalent to Definition
\ref{Dennoequi} is the following

\begin{definition}
A differential character of degree $2$ is a group homomorphism $\chi\colon
Z_{1}(M)\rightarrow\mathbb{R}/\mathbb{Z}$ such that there exists a closed $2
$-form $\mathrm{curv}(\chi)\in\Omega^{2}(M)$ satisfying $\chi(\partial
\sigma)=\int_{\sigma}\mathrm{curv}(\chi)\operatorname{mod}\mathbb{Z}$ for any
$\sigma\in C_{2}(M)$.
\end{definition}

In \cite{LermanMalkin} a definition of $G$-equivariant differential character
similar to the preceding one is introduced. Let $C_{1,G}(M)$ be the free
abelian group generated by pairs $(\phi,\gamma),$with $\gamma$ a curve on $M$
and $\phi\in G$. We define $\partial_{G}\colon C_{1,G}(M)\rightarrow C_{0}(M)$
by setting $\partial_{G}(\phi,\gamma)=\phi(\gamma(1))-\gamma(0)$ and
$Z_{1,G}(M)=\ker\partial_{G}$. Note that $Z_{1,G}(M)$ is generated by chains
of the form $(\phi_{1},\gamma_{1})+\ldots+(\phi_{n},\gamma_{n})$ that satisfy
the following condition%
\begin{equation}
\phi_{i}(\gamma_{i}(1))=\gamma_{i+1}(0)\text{, }i=1,\ldots,n-1\text{ and }%
\phi_{n}(\gamma_{n}(1))=\gamma_{1}(0)\label{condition}%
\end{equation}
In particular, if $\gamma\in\mathcal{C}^{\phi}(M)$ then $(\phi^{-1},\gamma)\in
Z_{1,G}(M)$.

\begin{definition}
\label{LermanMalkin def}A Lerman-Malkin $G$-equivariant differential character
is a group homomorphism $\eta\colon Z_{1,G}(M)\rightarrow\mathbb{R}%
/\mathbb{Z}$ such that there exists a closed $G$-equivariant $2$-form
$\mathrm{curv}_{G}(\eta)=\mathrm{curv}(\eta)+\mu^{\eta}\in\Omega_{G}^{2}(M)$
satisfying the following conditions

a) $\eta((\phi,\xi)+(\phi^{-1},\phi\cdot\overleftarrow{\xi}))=0$ for any curve
$\xi$ on $M$ and $\phi\in G$.

b) If $\sigma\in C_{2}(M)$ then $\eta(e,\partial\sigma)=\int_{\sigma
}\mathrm{curv}(\eta)\operatorname{mod}\mathbb{Z}$.

c) For any $X\in\mathfrak{g}$ and $x\in M$ we have $\eta(\exp(-X),\tau
^{x,X})=\mu_{X}^{\Theta}(x)$ where $\tau^{x,X}(s)=\exp(sX)_{M}(x)$

d) For any $x\in M$ the map $\phi\mapsto\eta(\phi,c_{x})$ determines a group
homomorphism $G_{x}\rightarrow\mathbb{R}/\mathbb{Z}$, where\ $G_{x}$ is the
isotropy group of $x$, and $c_{x}$ the constant curve with value $x$.
\end{definition}

Next we show that if $M$ is connected, then Definition \ref{LermanMalkin def}
is equivalent to Definition \ref{Defequi}. Let $\chi$ be an equivariant
differential character and let $z=%
%TCIMACRO{\tsum _{i=1}^{n}}%
%BeginExpansion
{\textstyle\sum_{i=1}^{n}}
%EndExpansion
(\phi_{i},\gamma_{i})$ be a cycle that satisfies the condition
(\ref{condition}). We chose curves $\tau_{i}\in\mathcal{C}_{\gamma_{i}%
(1)}^{\phi_{i}}(M)$ on $M$ and we define $\eta(z)=%
%TCIMACRO{\tsum _{i=1}^{n}}%
%BeginExpansion
{\textstyle\sum_{i=1}^{n}}
%EndExpansion
\chi(e,\gamma_{1}\ast\tau_{1}\ast\ldots\ast\gamma_{n}\ast\tau_{n})-%
%TCIMACRO{\tsum _{i=1}^{n}}%
%BeginExpansion
{\textstyle\sum_{i=1}^{n}}
%EndExpansion
\chi(\phi_{i},\tau_{i})$. It is easily seen that this definition is
independent of the curves $\tau_{i}$ chosen.

Clearly $\eta$ determines a group homomorphism $\eta\colon Z_{1,G}%
(M)\rightarrow\mathbb{R}/\mathbb{Z}$\ that satisfies the conditions b), c) and
d) in Definition \ref{LermanMalkin def}.\ Next we prove that condition\ a)\ is
basically equivalent to condition ii) in Definition \ref{Defequi}. If $\xi$ is
a curve on $M$ and $\phi\in G$, we choose $\tau_{1}\in\mathcal{C}_{\xi
(1)}^{\phi}(M)$ and $\tau_{2}\in\mathcal{C}_{\phi(\xi(0))}^{\phi^{-1}}(M)$.
Then by the definition of $\eta$ and the conditions i) and ii) in Definition
\ref{Defequi} we have%
\begin{multline*}
\eta((\phi,\xi)+(\phi^{-1},\phi\cdot\overleftarrow{\xi}))=\chi(e,\xi\ast
\tau_{1}\ast(\phi\cdot\overleftarrow{\xi})\ast\tau_{2})-\chi(\phi,\tau
_{1})-\chi(\phi^{-1},\tau_{2})\\
=\chi(\phi,\xi\ast\tau_{1}\ast(\phi\cdot\overleftarrow{\xi}))+\chi(\phi
^{-1},\tau_{2})-\chi(\phi,\tau_{1})-\chi(\phi^{-1},\tau_{2})=0\text{.}%
\end{multline*}

Conversely, If $\eta$ satisfies the conditions of Definition
\ref{LermanMalkin def}, then we can define $\chi\colon\mathcal{C}%
^{G}(M)\rightarrow\mathbb{R}/\mathbb{Z}$ by setting $\chi(\phi,\gamma
)=\eta(\phi^{-1},\gamma)$ for $\gamma\in\mathcal{C}^{\phi}(M)$. Clearly
conditions iii) and iv) are satisfied. We prove condition ii). If $\phi\in G$,
we consider the mapping torus $\pi\colon M\times I\rightarrow M_{\phi}$. For
any curve $\gamma$ on $M$, we define $\underline{\gamma}=\pi\circ(\gamma
\times\mathrm{id}_{I})$. The Lerman-Malkin $G$-equivariant differential
character $\eta$ projects onto a differential character $\underline{\eta}%
\in\hat{H}^{2}(M_{\phi})$ such that $\underline{\eta}(\underline{\gamma}%
)=\eta(\phi^{-1},\gamma)$ for any $\gamma\in\mathcal{C}^{\phi}(M)$. Then we
have%
\begin{multline*}
\chi(\phi,\overleftarrow{\zeta}\ast\gamma\ast(\phi\cdot\zeta))=\eta(\phi
^{-1},\overleftarrow{\zeta}\ast\gamma\ast(\phi\cdot\zeta))=\underline{\eta
}(\underline{\overleftarrow{\zeta}\ast\gamma\ast(\phi\cdot\zeta)})\\
=\underline{\eta}(\underline{\overleftarrow{\zeta}}\ast\underline{\gamma}%
\ast\underline{(\phi\cdot\zeta)})=\underline{\eta}(\underline{\overleftarrow
{\zeta}}\ast\underline{\gamma}\ast\underline{\zeta})=\underline{\eta
}(\underline{\gamma})=\eta(\phi^{-1},\gamma)=\chi(\phi,\gamma).
\end{multline*}

The condition i) can be proved in a similar way, by replacing $M_{\phi}$ with
$(M\times E)/H$, where $H$ is the subgroup generated by $\phi$ and
$\phi^{\prime}$, $E$ is a manifold in which $H$ acts freely, and
$\mathrm{id}_{I}$ is replaced by two curves $\tau\in\mathcal{C}^{\phi}(E)$,
$\tau^{\prime}\in\mathcal{C}^{\phi^{\prime}}(E)$.

It is proved in \cite[Theorem 4.3.1]{LermanMalkin} that there exists a
bijection between the space of Lerman-Malkin $G$-equivariant differential
characters and $\mathrm{Bund}_{U(1)}^{\nabla,G}(M)$. This result generalizes
Corollary \ref{Cor iso} to the case of an arbitrary manifold $M$.

\section{Integrated equivariant Cheeger-Chern-Simons differential characters
\label{EDC}}

In this section we apply the preceding constructions to the case of the space
of connections $\mathcal{A}_{P}$ on a principal bundle and the action of the
group of automorphisms. We show that the Cheeger-Chern-Simons construction
determines in a natural way an equivariant differential character $\Xi
_{P}^{\vec{p}}$ on $\mathcal{A}_{P}$. In Section \ref{SectEquiHolCS} we show
that in the case of a trivial bundle $\Xi_{P}^{\vec{p}}$ coincides (up to a
sign) with the equivariant holonomy of the Chern-Simons line bundle. For
arbitrary bundles the Chern-Simons bundle can be defined by applying Theorem
\ref{PropAction}\ to $\Xi_{P}^{\vec{p}}$. We also show that this bundle is
isomorphic to the bundle defined in \cite{CSconnections}. In the later
sections we study how other constructions in Chern-Simons theory can be
derived form the equivariant differential character $\Xi_{P}^{\vec{p}}$.

\subsection{Cheeger-Chern-Simons differential characters and the Chern-Simons
action\label{CCS}}

We recall the properties of the Cheeger-Chern-Simons\ differential characters
introduced in \cite{Cheeger}. If $M$ is an oriented manifold then a
differential character of degree $k$\ is a homomorphism $\chi\colon
Z_{k-1}(M)\rightarrow\mathbb{R}/\mathbb{Z}$ such that there exists $\omega
\in\Omega^{k}(M)$ (called the curvature of $\chi)$ satisfying $\chi(\partial
u)=\int_{u}\omega\operatorname{mod}\mathbb{Z}$ for any cycle $u\in Z_{k}(M)$.

Let $G$ be a Lie group with a finite number of connected components. A
characteristic pair of degree $r$ for the group $G$ is a pair $\vec
{p}=(p,\Upsilon)$, where $p\in I^{r}(G)$ is a Weil polynomial of degree $r$,
$\Upsilon\in H^{2r}(\mathbf{B}G,\mathbb{Z})$ a characteristic class, and they
are compatible in the sense that they determine the same real characteristic
class on $H^{2r}(\mathbf{B}G,\mathbb{R})$. We denote by $I_{\mathbb{Z}}%
^{r}(G)$ the subset of elements $p\in I^{r}(G)$ that are compatible with a
characteristic class $\Upsilon\in H^{2r}(\mathbf{B}G,\mathbb{Z})$.

For any principal $G$-bundle $P\rightarrow M$ with connection $A$, the pair
$\vec{p}$ determines in a natural way a differential character $\xi_{A}%
^{\vec{p}}\in\hat{H}^{2r}(M)$ with curvature $p(F_{A})\in\Omega^{2r}(M)$. In
particular, for any $2r$-dimensional chain $u\in C_{2r}(M)$ we have $\xi
_{A}^{\vec{p}}(\partial u)=\int_{u}p(F_{A})\operatorname{mod}\mathbb{Z}$. We
recall that natural means that for any principal $G$-bundle $P^{\prime
}\rightarrow N^{\prime}$ and any $G$-bundle map $F\colon P^{\prime}\rightarrow
P$ we have
\begin{equation}
\chi_{F^{\ast}A}=f^{\ast}(\chi_{A}),\label{naturality}%
\end{equation}
where $f\colon N^{\prime}\rightarrow N$ is the map induced by $F$. If
$A^{\prime}$ is another connection on $P$, then for any $u\in Z_{2r-1}(M)$ we
have
\begin{equation}
\chi_{A^{\prime}}(u)=\chi_{A}(u)+%
%TCIMACRO{\tint _{u}}%
%BeginExpansion
{\textstyle\int_{u}}
%EndExpansion
Tp(A^{\prime},A),\label{AAprima}%
\end{equation}
where $Tp(A,A^{\prime})=r\int_{0}^{1}p(a,F_{t},\overset{(r-1)}{\ldots}%
,F_{t})dt$ is the Chern-Simons transgression form, with $a=A-A^{\prime}%
\in\Omega^{1}(M,\mathrm{ad}P)$ and $F_{t}$ the curvature of the connection
$A_{t}=tA+(1-t)A^{\prime}$. Furthermore, we have the following result (see
e.g. \cite[Proposition 2.9]{Cheeger})

\begin{lemma}
\label{LieCS}If $A_{t}$ is a smooth $1$-parametric family of connections on
$P$ with $\dot{A}_{0}=a\in\Omega^{1}(M,\mathrm{ad}P)$, then $\left.  \frac
{d}{dt}\right\vert _{t=0}\chi_{A_{t}}(u)=r\int_{u}p(a,F_{0},\overset
{(r-1)}{\ldots},F_{0})$ for every $u\in Z_{2r-1}(M)$.
\end{lemma}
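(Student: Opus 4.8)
The plan is to recognize Lemma \ref{LieCS} as the infinitesimal version of the transgression formula (\ref{AAprima}) and to obtain it by linearizing that formula at $t=0$. First I would apply (\ref{AAprima}) with $A_{0}$ as base connection and $A_{t}$ as the second connection, giving
\[
\chi_{A_{t}}(u)=\chi_{A_{0}}(u)+\int_{u}Tp(A_{t},A_{0})
\]
for every $u\in Z_{2r-1}(M)$. The first summand is independent of $t$, and the second is a genuine real number; thus $t\mapsto\int_{u}Tp(A_{t},A_{0})$ is a smooth real-valued lift of the difference $\chi_{A_{t}}(u)-\chi_{A_{0}}(u)\in\mathbb{R}/\mathbb{Z}$, so the derivative on the left-hand side is well defined and equals $\left.\frac{d}{dt}\right\vert_{t=0}\int_{u}Tp(A_{t},A_{0})$. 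This reduces the statement to computing the $t$-derivative at $t=0$ of the transgression form.

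Next I would differentiate the explicit expression for the transgression. Writing $b_{t}=A_{t}-A_{0}\in\Omega^{1}(M,\mathrm{ad}P)$ and $A_{s,t}=A_{0}+sb_{t}$ with curvature $F_{s,t}$, the definition of $Tp$ gives
\[
Tp(A_{t},A_{0})=r\int_{0}^{1}p(b_{t},F_{s,t},\overset{(r-1)}{\ldots},F_{s,t})\,ds.
\]
Since everything is smooth in $t$ and the $s$-integral is over a compact interval, I may differentiate under both the $s$-integral and the chain integral $\int_{u}$. Applying the Leibniz rule to the symmetric multilinear $p$ produces one term in which $\left.\frac{d}{dt}\right\vert_{t=0}b_{t}=a$ occupies the first slot, and $r-1$ terms in which the derivative falls on an $F_{s,t}$ slot. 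The key observation is that at $t=0$ we have $b_{0}=0$, hence $A_{s,0}=A_{0}$ and $F_{s,0}=F_{0}$ for all $s$; consequently every term carrying the derivative on an $F$-slot still has $b_{0}=0$ in its first argument and vanishes. Only the first term survives, yielding $p(a,F_{0},\overset{(r-1)}{\ldots},F_{0})$, independent of $s$.

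Carrying out the trivial $s$-integration and then integrating over $u$ gives
\[
\left.\frac{d}{dt}\right\vert_{t=0}\chi_{A_{t}}(u)=r\int_{u}p(a,F_{0},\overset{(r-1)}{\ldots},F_{0}),
\]
as claimed. I expect the only genuine subtlety to be the bookkeeping that justifies computing the derivative of the $\mathbb{R}/\mathbb{Z}$-valued quantity via the real lift supplied by (\ref{AAprima}), together with the interchange of $\frac{d}{dt}$ with the two integrations; both are routine given the smoothness of the family $A_{t}$ and the compactness of the cycle $u$ and of $[0,1]$. The heart of the argument is the product-rule computation, where the vanishing of $b_{0}$ eliminates all but the first-slot contribution.
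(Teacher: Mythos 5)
Your proof is correct. Note, however, that the paper does not actually prove Lemma \ref{LieCS}: it quotes the statement directly from Cheeger--Simons (Proposition 2.9 of \cite{Cheeger}), so there is no internal argument to compare against. What you have done is give a self-contained derivation of the infinitesimal variation formula from the finite variation formula (\ref{AAprima}), which the paper does state as a property of the Cheeger--Chern--Simons characters. The reduction is sound: since $\chi_{A_{0}}(u)$ is constant in $t$, the real-valued function $t\mapsto\int_{u}Tp(A_{t},A_{0})$ is a smooth lift of the $\mathbb{R}/\mathbb{Z}$-valued difference, so the derivative is legitimately computed on the lift; and your Leibniz-rule computation is right, the key point being that $b_{0}=A_{0}-A_{0}=0$ kills every term in which the $t$-derivative falls on a curvature slot, while $F_{s,0}=F_{0}$ makes the surviving term $p(a,F_{0},\overset{(r-1)}{\ldots},F_{0})$ independent of $s$, so the $s$-integration is trivial. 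The trade-off between the two routes: the paper's citation implicitly relies on the original Cheeger--Simons construction of $\xi_{A}^{\vec{p}}$, whereas your argument needs only the transgression identity (\ref{AAprima}) as a black box, making the lemma a formal consequence of material already displayed in the paper; the cost is that (\ref{AAprima}) is itself a quoted property of the same construction, so logically you have traded one citation for another, but in a way that makes the infinitesimal statement transparent rather than opaque.
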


If $M$ is compact and without boundary of dimension $\dim M=2r-1$ and
$P\rightarrow M\,\ $is a principal $G$-bundle, then the Chern-Simons action
$\mathrm{CS}^{\vec{p}}\colon\mathcal{A}_{P}\rightarrow\mathbb{R}/\mathbb{Z}$
is defined (e.g. see \cite{DW})\ by setting $\mathrm{CS}^{\vec{p}}(A)=\xi
_{A}^{\vec{p}}(M)$. It follows from equation (\ref{AAprima}) that if
$A,A^{\prime}$ are two connections on $P$, then $\mathrm{CS}^{\vec{p}%
}(A)=\mathrm{CS}^{\vec{p}}(A^{\prime})+\int_{M}Tp(A,A^{\prime})$. Moreover,
form the naturality condition (\ref{naturality}) we conclude that if $(P,A)$
is isomorphic to $(P^{\prime},A^{\prime})$ then $\mathrm{CS}^{\vec{p}%
}(A)=\mathrm{CS}^{\vec{p}}(A^{\prime})$. It also follows from
(\ref{naturality}) that if $P$ is a\ trivializable bundle, and $A_{0}$ the
connection associated to a trivialization then $\mathrm{CS}^{\vec{p}}%
(A_{0})=0$,\ and hence $\mathrm{CS}^{\vec{p}}(A)=\int_{M}Tp(A,A_{0})$. In the
particular case in which $r=2$, $G=SU(2)$, $P$ is a trivializable bundle with
a section $S\colon M\rightarrow P$ and\emph{\ }$p(X)=\frac{1}{8\pi^{2}%
}\mathrm{tr}(X^{2})$, then\emph{\ }we have$\ \mathrm{CS}^{\vec{p}}(A)=\frac
{1}{8\pi^{2}}\int_{M}\mathrm{tr}(\alpha\wedge d\alpha+\frac{2}{3}\alpha
\wedge\alpha\wedge\alpha)$, where $\alpha=S^{\ast}A$. Hence in this case
$\mathrm{CS}^{\vec{p}}$ which coincides with the classical Chern-Simons action
(e.g. see \cite{Freed1}).

\begin{remark}
For trivializable bundles the Chern-Simons action $\mathrm{CS}^{\vec{p}}$ only
depends on the polynomial $p\in I_{\mathbb{Z}}^{r}(G)$ and it is independent
of the characteristic class $\Upsilon$. In this case we denote the
Chern-Simons action by $\mathrm{CS}^{p}$.
\end{remark}

\subsection{Geometry of the space of connections\label{SectGeoConn}}

Let $P\rightarrow M$ be a principal $G$-bundle, and let $\mathcal{A}_{P}$
be\ the space of principal connections on this bundle, considered as an
infinite dimensional Fr\'{e}chet manifold. As $\mathcal{A}_{P}$ is an affine
space modeled on $\Omega^{1}(M,\mathrm{ad}P)$, we have canonical isomorphisms
$T_{A}\mathcal{A}_{P}\simeq\Omega^{1}(M,\mathrm{ad}P)$ for any $A\in
\mathcal{A}_{P}$. We denote by $\mathrm{Aut}P$ the group of $G$-bundle
automorphism of $P$, and by $\mathrm{Gau}P$ the subgroup of bundle
automorphism covering the identity on $M$. The Lie algebra of $\mathrm{Aut}P$
is the space of $G$-invariant vector fields on $P$, $\mathrm{aut}%
P\subset\mathfrak{X}(P)$, and the Lie algebra of $\mathrm{Gau}P$ is the
subspace $\mathrm{gau}P$ of vertical $G$-invariant vector fields (see
\cite{Abati}, \cite{Wockel} for more details on the Lie group structure of
$\mathrm{Aut}P$).\ The group $\mathrm{Aut}P$ acts in a natural way on
$\mathcal{A}_{P}$. If $M$ is oriented, we denote by $\mathrm{Aut}^{+}P$ the
group of $G$-bundle automorphism of $P$ preserving the orientation on $M$. We
also recall that if $\mathrm{Gau}^{\ast}P$ is the subgroup of gauge
transformations fixing a point of $P$, then $\mathrm{Gau}^{\ast}P$ acts freely
on $\mathcal{A}_{P}$ and $\mathcal{A}_{P}\rightarrow\mathcal{A}_{P}%
/\mathrm{Gau}^{\ast}P$ is a principal $\mathrm{Gau}^{\ast}P$-bundle (e.g. see
\cite{Donaldson}).

The principal $G$-bundle $\mathbb{P}=P\times\mathcal{A}_{P}\rightarrow
M\times\mathcal{A}_{P}$ has a tautological connection $\mathbb{A}\in\Omega
^{1}(P\times\mathcal{A}_{P},\mathfrak{g})$ defined by $\mathbb{A}%
_{(x,A)}(X,Y)=A_{x}(X)$ for $(x,A)\in P\times\mathcal{A}_{P}$, $X$ $\in
T_{x}P$, $Y\in T_{A}\mathcal{A}_{P}$. This connection is universal in the
sense that for any $A\in\mathcal{A}_{P}$ we have $A=t_{A}^{\ast}(\mathbb{A)}%
,$where $t_{A}\colon P\rightarrow\mathbb{P}$ is defined by $t_{A}(y)=(y,A)$
for any $y\in P$. We denote by $\mathbb{F}$ the curvature of $\mathbb{A}$. The
group $\mathrm{Aut}P$ acts on $\mathbb{P}$ by automorphisms and $\mathbb{A}$
is a $\mathrm{Aut}P$-invariant connection.

\begin{remark}
\label{Finite}As it is usual in Gauge theories (e.g. see \cite[Section
5.1.1.]{Donaldson}), in place of working in $\mathcal{A}_{P}$ and with the
group $\mathrm{Aut}^{+}P$, it is possible to formulate our results in terms of
families of connections. In our case, we need to consider a variation of this
concept, that we call equivariant families. Precisely, let $P\rightarrow M$ be
a principal $G$-bundle and let $\mathfrak{G}$ be a Lie group acting on a
manifold $T$, and also on $P\rightarrow M$ by automorphisms preseving the
orientation on $M$ (i.e., we have a Lie group homorphism $\rho\colon
\mathfrak{G}\rightarrow\mathrm{Aut}^{+}P$). A $\mathfrak{G}$-equivariant
family of connections parametrized by $T$ is a $\mathfrak{G}$-invariant
connection $B$ on the product $P\times T\rightarrow M\times T$. It defines a
$\rho$-equivariant map $b\colon T\rightarrow\mathcal{A}_{P}$, where
$b(t)=B|_{P\times\{t\}\rightarrow M\times\{t\}}$. All of the following results
and proofs are valid if we replace $(\mathrm{Aut}^{+}P,\mathcal{A}%
_{P},\mathbb{A)}$ by $(\mathfrak{G},T,B)$ (see also Remarks \ref{BundleConn}
and \ref{families final}).
\end{remark}

As the connection $\mathbb{A}$ is $\mathrm{Aut}P$-invariant, for any Weil
polynomial $p\in I^{r}(G)$ we can define the $\mathrm{Aut}P$-equivariant
characteristic form $p_{\mathrm{Aut}P}^{\mathbb{A}}\in\Omega_{\mathrm{Aut}%
P}^{2r}(M\times\mathcal{A}_{P})$ by $p_{\mathrm{Aut}P}^{\mathbb{A}%
}(X)=p(\mathbb{F}-v_{\mathbb{A}}(X))$ for $X\in\mathrm{aut}P$. If $M$ is a
compact oriented manifold of dimension $n$ without boundary the equivariant
form $p_{\mathrm{Aut}^{+}P}^{\mathbb{A}}\in\Omega_{\mathrm{Aut}^{+}P}%
^{2r}(M\times\mathcal{A}_{P})$ can be integrated over $M$ to obtain $\int
_{M}p_{\mathrm{Aut}^{+}P}^{\mathbb{A}}\in\Omega_{\mathrm{Aut}^{+}P}%
^{2r-n}(\mathcal{A}_{P})$. In particular, if $\dim M=2r-2$, we have
$\varpi_{P}^{p}=\int_{M}p_{\mathrm{Aut}^{+}P}^{\mathbb{A}}\in\Omega
_{\mathrm{Aut}^{+}P}^{2}(\mathcal{A}_{P})$ that can be written $\varpi_{P}%
^{p}=\omega_{P}^{p}+\mu_{P}^{p}$, with $\mu_{P}^{p}$ a comoment map for
$\omega_{P}^{p}$. The explicit expressions of these forms are
\begin{align}
(\omega_{P}^{p})_{A}(a,b)  & =r(r-1)%
%TCIMACRO{\tint \nolimits_{M}}%
%BeginExpansion
{\textstyle\int\nolimits_{M}}
%EndExpansion
p(a,b,F_{A},\overset{(r-2)}{\ldots},F_{A}),\nonumber\\
(\mu_{P}^{p})_{A}(X)  & =-r%
%TCIMACRO{\tint \nolimits_{M}}%
%BeginExpansion
{\textstyle\int\nolimits_{M}}
%EndExpansion
p(v_{A}(X),F_{A},\overset{(r-1)}{\ldots},F_{A})\label{exmu}%
\end{align}
for $A\in\mathcal{A}_{P}$, $a,b\in T_{A}\mathcal{A}_{P}\simeq\Omega
^{1}(M,\mathrm{ad}P)$ and $X\in\mathrm{aut}P$.

Let $A_{0}$ be a connection on $P\rightarrow M$ and let $\mathrm{pr}_{1}\colon
P\times\mathcal{A}_{P}\rightarrow P$ denote the projection. Then $\mathbb{A}$
and $\overline{A}_{0}=\mathrm{pr}_{1}^{\ast}A_{0}$ are connections on the same
bundle $P\times\mathcal{A}_{P}\rightarrow M\times\mathcal{A}_{P}$, and hence
we can define $Tp(\mathbb{A},\overline{A}_{0})\in\Omega^{2r-1}(M\times
\mathcal{A}_{P})$. We have $p(\mathbb{F})=dTp(\mathbb{A},\overline{A}%
_{0})+\mathrm{pr}_{1}^{\ast}p(F_{0})$. In particular, if $2r>n$ then
$p(\mathbb{F})=dTp(\mathbb{A},\overline{A}_{0})$ and hence $%
%TCIMACRO{\tint _{M}}%
%BeginExpansion
{\textstyle\int_{M}}
%EndExpansion
p(\mathbb{F})=d%
%TCIMACRO{\tint _{M}}%
%BeginExpansion
{\textstyle\int_{M}}
%EndExpansion
Tp(\mathbb{A},\overline{A}_{0})$.

\subsection{The bundle of connections}

The preceding constructions have a finite dimensional analog in terms of the
(finite dimensional) bundle of connections. We recall that given a principal
$G$-bundle $\pi\colon P\rightarrow M$, there exists a bundle $q\colon
C(P)\rightarrow M$ (called the bundle of connections) such that we have a
natural identification $\mathcal{A}_{P}\simeq\Gamma(M,C(P))$. For example we
can take $C(P)=(J^{1}P)/G$ where $J^{1}P$ is the first jet bundle of $P$. We
refer to \cite{geoconn} for more details on the geometry of $C(P)$. If
$A\in\mathcal{A}_{P}$, we denote by $\sigma_{A}$ the corresponding section of
$C(P)$. The pull-back bundle $\mathbf{P}=q^{\ast}P\rightarrow C(P)$ admits a
tautological connection defined by $\mathbf{A}_{(x,c)}(X,Y)=A_{x}(X) $ for
$(x,c)\in P\times C(P)$, $X$ $\in T_{x}P$, $Y\in T_{c}C(P)$ and where $A$ is
any connection such that $\sigma_{A}(x)=c$. This connection $\mathbf{A}$ has a
the following universal property: for any $A\in\mathcal{A}_{P}$ we have%
\begin{equation}
\bar{\sigma}_{A}^{\ast}(\mathbf{A})=A\label{universalA2}%
\end{equation}
where $\bar{\sigma}_{A}\colon P\rightarrow q^{\ast}P$ is defined by
$\bar{\sigma}_{A}(y)=(y,\sigma_{A}(\pi(y))$ for any $y\in P$. The\ group
$\mathrm{Aut}P$ acts on $C(P)$ in a natural way and the connection $\mathbf{A}
$ is $\mathrm{Aut}P$-invariant. Furthermore, the evaluation map $\mathrm{ev}%
\colon M\times\mathcal{A}_{P}\rightarrow C(P)\,$defined by $\mathrm{ev}%
(x,A)=s_{A}(x)$ is $\mathrm{Aut}P$-invariant$\ $and we have $\mathrm{ev}%
^{\ast}\mathbf{A}=\mathbb{A}$.

\subsection{Integrated equivariant Cheeger-Chern-Simons differential
characters\label{SectEquiCS}}

Let $G$ be a Lie group with a finite number of connected components, $\vec
{p}=(p,\Upsilon)$ a characteristic pair of degree $r$ and let $\pi\colon
P\rightarrow M$ be a principal $G$-bundle over a compact oriented manifold
without boundary of dimension $n=2r-2$.

If $(\phi,\gamma)\in\mathcal{C}^{\mathrm{Aut}^{+}P}(\mathcal{A}_{P})$ then
$\gamma$ can be extended to a curve $\tilde{\gamma}\colon\mathbb{R}%
\rightarrow\mathcal{A}_{P}$ by setting $\tilde{\gamma}(t)=(\phi_{\mathcal{A}%
_{P}})^{n}(\gamma(s))$ if $t=n+s$ for $n\in\mathbb{Z}$ and $s\in\lbrack0,1)$.
We define an action of $\mathbb{Z}$ on $P\times\mathbb{R}$ by setting
$n\cdot(y,t)=(\phi^{n}(y),t+n)$ for $n\in\mathbb{Z}$ and $(y,t)\in
P\times\mathbb{R}$, and a similar action on $M\times\mathbb{R}$. A connection
$A^{\gamma}\in\Omega^{1}(P\times\mathbb{R},\mathfrak{g})$ is defined by
$A^{\gamma}(X,h)=\tilde{\gamma}(t)(X)$ for $X\in TP$, $h\in T_{t}\mathbb{R}$,
$t\in\mathbb{R}$. Then $A^{\gamma}$ is a $\mathbb{Z}$-invariant connection
form on the principal $G$-bundle $P\times\mathbb{R}\rightarrow M\times
\mathbb{R}$. Hence $A^{\gamma}$ projects onto a connection $A_{\phi}^{\gamma}$
on the quotient bundle $(P\times\mathbb{R})/\mathbb{Z}\rightarrow
(M\times\mathbb{R)}/\mathbb{Z}$, and this bundle coincides with the mapping
torus bundle $P_{\phi}\rightarrow M_{\phi}$. We define the integrated
Cheeger-Chern-Simons equivariant differential character$\ $by setting
\begin{equation}
\Xi_{P}^{\vec{p}}(\phi,\gamma)=\mathrm{CS}^{\vec{p}}(A_{\phi}^{\gamma}%
)\in\mathbb{R}/\mathbb{Z}.\label{definition}%
\end{equation}

\begin{remark}
Strictly speaking, if $\gamma$ is smooth then $A^{\gamma}$ and $A_{\phi
}^{\gamma}$ are continuous but not differentiable in the $t$ direction. This
problem can be solved by considering a smooth non decreasing function
$\upsilon\colon I\rightarrow I$ such that $\upsilon$ has constant value $0$ in
$[0,\varepsilon]$ and $1$ on $[1-\varepsilon,1]$, and replacing $\gamma$ with
the reparametrization $\gamma\circ\upsilon$. The bundles with connection
$(P_{\phi},A_{\phi}^{\gamma\circ\upsilon})$ corresponding to different
$\upsilon$ are isomorphic and hence $\mathrm{CS}^{\vec{p}}(A_{\phi}%
^{\gamma\circ\upsilon})$ does not depend on the function $\upsilon$ chosen.

In a similar way, if $\gamma$ is only piecewise smooth, we can consider a
smooth reparametrization of $\gamma$ in order to obtain a smooth connection on
$P_{\phi}$.
\end{remark}

Next we prove that $\Xi_{P}^{\vec{p}}$ is an equivariant differential
character with equivariant curvature $\varpi_{P}^{p}$. To do it, we need to
consider a second equivalent definition of $\Xi_{P}^{\vec{p}}$.

\begin{remark}
\label{BundleConn}In the second definition we use the bundle of connections
$C(P)$ because it allows us to obtain the results by applying the
Cheeger-Chern-Simons constructions only for finite dimensional bundles. It is
also possible to obtain the same results by replacing the bundle
$\mathbf{P}\rightarrow C(P)$ with the infinite dimensional bundle
$P\times\mathcal{A}_{P}\rightarrow M\times\mathcal{A}_{P}$. However, in this
case, this requires the application of the generalization of the original
Cheeger-Chern-Simons construction to infinite dimensional bundles.
Furthermore, if we work with \emph{finite dimensional} equivariant families of
connections, then it is not necessary to use $C(P)$. In this case we can
replace $\mathbf{P}\rightarrow C(P)$\ by the bundle $P\times T\rightarrow
M\times T$, the connection $\mathbf{A}$ by the connection $B$ and the group
$\mathrm{Aut}^{+}P$ by $\mathfrak{G}$.
\end{remark}

As commented above, $\phi\in\mathrm{Aut}^{+}P$ defines an action of
$\mathbb{Z}$ on $P$ and also on $C(P)$ and $\mathbf{P}$. If $\mathrm{pr}%
\colon\mathbf{P}\times\mathbb{R}\rightarrow\mathbf{P}\,\ $is the projection,
the form $\mathrm{pr}^{\ast}\mathbf{A}$ is $\mathcal{\phi}$-invariant and
projects onto a connection $\mathbf{A}_{\phi}$ on the quotient bundle
$(\mathbf{P})_{\phi}\rightarrow(C(P))_{\phi}$. For any $\gamma\in
\mathcal{C}_{A}^{\phi}(\mathcal{A}_{P})$ we define $f_{\phi}^{\gamma}\colon
M_{\phi}\rightarrow(C(P))_{\phi}$ by $f_{\phi}^{\gamma}([(x,s)]_{\sim_{\phi}%
})=[(x,\sigma_{\gamma(s)}(x),s)]_{\sim_{\mathcal{\phi}}}$. It follows form
equation (\ref{universalA2}) that $(f_{\phi}^{\gamma})^{\ast}\mathbf{A}_{\phi
}=A_{\phi}^{\gamma}$ and by the naturality of the Cheeger-Chern-Simons
character (equation (\ref{naturality})) we obtain
\begin{equation}
\Xi_{P}^{\vec{p}}(\phi,\gamma)=\xi_{\mathbf{A}_{\phi}}^{\vec{p}}(f_{\phi
}^{\gamma}).\label{def2}%
\end{equation}

More generally, let $\mathcal{G}$ be a discrete group that acts on $P$ by
elements of $\mathrm{Aut}^{+}P\ $and let$\ E$ be a connected manifold in which
$\mathcal{G}$ acts freely. If $\mathrm{pr}\colon\mathbf{P}\times
E\rightarrow\mathbf{P}\,\ $is the projection, the form $\mathrm{pr}^{\ast
}\mathbf{A}$ is $\mathcal{G}$-invariant and projects onto a connection
$\mathbf{A}_{\mathcal{G}}^{E}$ on the quotient bundle $(\mathbf{P}\times
E)/\mathcal{G}\rightarrow(C(P)\times E)/\mathcal{G}$. Given $\phi
\in\mathcal{G}$, we choose a point $y\in E$ and a curve $\upsilon
\in\mathcal{C}_{y}^{\phi}(E)$. For any $\gamma\in\mathcal{C}_{A}^{\phi
}(\mathcal{A}_{P})$ we define $F_{\phi}^{\gamma,y,\upsilon}\colon M_{\phi
}\rightarrow(C(P)\times E)/\mathcal{G}$ by $F_{\phi}^{\gamma,y,\upsilon
}([(x,s)]_{\sim\phi})=[(x,\sigma_{\gamma(s)}(x),\upsilon(s))]_{\mathcal{G}}$.
Note that in the particular case in which $\mathcal{G}=\mathbb{Z}$, and
$E=\mathbb{R}$, $y=0$ and $\rho\in\mathcal{C}_{0}^{\phi}(\mathbb{R})$ is the
inclusion map $\rho\colon I\rightarrow\mathbb{R}$, we have $F_{\phi}%
^{\gamma,y,\upsilon}=f_{\phi}^{\gamma}$.

\begin{lemma}
\label{LemmaE}For any $\phi\in\mathcal{G}$, $\gamma\in\mathcal{C}_{A}^{\phi
}(\mathcal{A}_{P})$, $y\in E$ and $\upsilon\in\mathcal{C}_{y}^{\phi}(E)$ we
have $\Xi_{P}^{\vec{p}}(\phi,\gamma)=\xi_{\mathbf{A}_{\mathcal{G}}^{E}}%
^{\vec{p}}(F_{\phi}^{\gamma,y,\upsilon})$.
\end{lemma}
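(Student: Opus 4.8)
The plan is to reduce the statement to the special case already recorded in equation (\ref{def2}), namely $\Xi_{P}^{\vec{p}}(\phi,\gamma)=\xi_{\mathbf{A}_{\phi}}^{\vec{p}}(f_{\phi}^{\gamma})$, by exhibiting a $G$-bundle map between the two relevant pullback bundles and invoking the naturality (\ref{naturality}) of the Cheeger-Chern-Simons character. The conceptual point driving the argument is that the auxiliary data $E$, $y$, $\upsilon$ serve only to build a free quotient and carry \emph{no} connection information: both $\mathbf{A}_{\phi}$ and $\mathbf{A}_{\mathcal{G}}^{E}$ are by construction descents of $\mathrm{pr}^{\ast}\mathbf{A}$ for projections onto the same $\mathbf{P}$, so a map compatible with those projections will automatically match the connections.

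First I would extend $\upsilon$ to an equivariant curve $\tilde{\upsilon}\colon\mathbb{R}\to E$ by $\tilde{\upsilon}(n+s)=\phi_{E}^{n}(\upsilon(s))$ for $n\in\mathbb{Z}$, $s\in[0,1)$, which is continuous and satisfies $\tilde{\upsilon}(t+1)=\phi_{E}(\tilde{\upsilon}(t))$. This lets me define $\hat{G}\colon\mathbf{P}\times\mathbb{R}\to\mathbf{P}\times E$ by $\hat{G}(p,t)=(p,\tilde{\upsilon}(t))$, together with the parallel map $\hat{g}\colon C(P)\times\mathbb{R}\to C(P)\times E$ on the bases. Since $\hat{G}$ is the identity on the $\mathbf{P}$-factor, a direct check using $\tilde{\upsilon}(t+n)=\phi_{E}^{n}(\tilde{\upsilon}(t))$ shows that $\hat{G}$ intertwines the $\mathbb{Z}=\langle\phi\rangle$-action defining $(\mathbf{P})_{\phi}$ with the restriction to $\langle\phi\rangle$ of the $\mathcal{G}$-action on $\mathbf{P}\times E$; hence $\hat{G}$ descends to a $G$-bundle map $\bar{G}\colon(\mathbf{P})_{\phi}\to(\mathbf{P}\times E)/\mathcal{G}$ covering $\bar{g}\colon(C(P))_{\phi}\to(C(P)\times E)/\mathcal{G}$.

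Next I would verify the two compatibilities that feed into naturality. On cycles, evaluating $\bar{g}$ on $f_{\phi}^{\gamma}([(x,s)])=[(\sigma_{\gamma(s)}(x),s)]$ gives $[(\sigma_{\gamma(s)}(x),\tilde{\upsilon}(s))]$; since $\tilde{\upsilon}(s)=\upsilon(s)$ for $s\in[0,1)$, this is precisely $F_{\phi}^{\gamma,y,\upsilon}([(x,s)])$, so that $F_{\phi}^{\gamma,y,\upsilon}=\bar{g}\circ f_{\phi}^{\gamma}$. On connections, both $\mathbf{A}_{\phi}$ and $\mathbf{A}_{\mathcal{G}}^{E}$ are descents of $\mathrm{pr}^{\ast}\mathbf{A}$ for the respective projections onto $\mathbf{P}$; because $\mathrm{pr}\circ\hat{G}$ is again the projection $\mathbf{P}\times\mathbb{R}\to\mathbf{P}$, one gets $\hat{G}^{\ast}\mathrm{pr}^{\ast}\mathbf{A}=\mathrm{pr}^{\ast}\mathbf{A}$ before descent, and therefore $\bar{G}^{\ast}\mathbf{A}_{\mathcal{G}}^{E}=\mathbf{A}_{\phi}$. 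Naturality (\ref{naturality}) applied to $\bar{G}$ then gives $\xi_{\mathbf{A}_{\mathcal{G}}^{E}}^{\vec{p}}(\bar{g}_{\ast}u)=\xi_{\bar{G}^{\ast}\mathbf{A}_{\mathcal{G}}^{E}}^{\vec{p}}(u)=\xi_{\mathbf{A}_{\phi}}^{\vec{p}}(u)$ for every cycle $u$ on $(C(P))_{\phi}$; taking $u=f_{\phi}^{\gamma}$ and using the cycle identity yields $\xi_{\mathbf{A}_{\mathcal{G}}^{E}}^{\vec{p}}(F_{\phi}^{\gamma,y,\upsilon})=\xi_{\mathbf{A}_{\phi}}^{\vec{p}}(f_{\phi}^{\gamma})$, which equals $\Xi_{P}^{\vec{p}}(\phi,\gamma)$ by (\ref{def2}).

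I expect the main obstacle to be the bookkeeping in the descent and equivariance step rather than any deep difficulty: one must be careful that $\hat{G}$ is only equivariant for the subgroup $\langle\phi\rangle\subset\mathcal{G}$ (so the target remains the quotient by the full $\mathcal{G}$), and that the $\langle\phi\rangle$-actions on source and target agree on the $\mathbf{P}$-factor, which here is automatic because $\hat{G}$ fixes that factor. Once this is in place, the fact that $E$ carries no connection data makes the matching $\bar{G}^{\ast}\mathbf{A}_{\mathcal{G}}^{E}=\mathbf{A}_{\phi}$ and the final appeal to naturality essentially immediate. I note that a more direct route, proving $\bar{F}^{\ast}\mathbf{A}_{\mathcal{G}}^{E}=A_{\phi}^{\gamma}$ without passing through (\ref{def2}), is also available but would additionally require the transformation rule $\sigma_{\phi_{\mathcal{A}_{P}}(A)}\circ\phi_{M}=\phi_{C(P)}\circ\sigma_{A}$ for sections of $C(P)$; building on (\ref{def2}) avoids this.
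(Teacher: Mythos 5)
Your argument is correct, and it takes a genuinely different route from the paper's. The paper never constructs a bundle map from the mapping torus into the $E$-quotient; instead it forms the auxiliary space $(\mathbf{P}\times E\times\mathbb{R})/\mathbb{Z}$ with its two canonical, smooth, curve-independent projections $q_{E}$ and $q_{\mathbb{R}}$ onto $(\mathbf{P}\times E)/\mathcal{G}$ and $(\mathbf{P}\times\mathbb{R})/\mathbb{Z}$, observes that both tautological connections pull back to the common connection $\mathbf{A}_{\mathbb{Z}}^{E\times\mathbb{R}}$ upstairs, and then pushes the single cycle $F_{\phi}^{\gamma,(y,0),(\upsilon,\rho)}$ down both legs; the curves $\gamma,\upsilon$ enter only as arguments of the differential characters. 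Your map $\bar{G}$ is, in effect, this roof collapsed by a section: the assignment $[(p,t)]\mapsto[(p,\tilde{\upsilon}(t),t)]$ is a section of $q_{\mathbb{R}}$ built from $\tilde{\upsilon}$, and composing it with $q_{E}$ recovers your $\bar{G}$. What your route buys is economy (one explicit bundle map, one application of naturality (\ref{naturality}), then (\ref{def2})); what the paper's route buys is that all bundle maps involved are canonical and smooth, with the curve data confined to cycles. Also, your descent step is fine even when $\phi$ has finite order, since what you really use is equivariance of $\hat{G}$ along the homomorphism $\mathbb{Z}\rightarrow\mathcal{G}$, $n\mapsto\phi^{n}$, not an isomorphism $\mathbb{Z}\simeq\langle\phi\rangle$.

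The one point you should patch is regularity: $\hat{G}$ is only as smooth as $\upsilon$, i.e.\ piecewise smooth (corners of $\upsilon$, and possible failure of smoothness of $\tilde{\upsilon}$ at integer values of $t$), whereas the naturality property (\ref{naturality}) is stated for smooth $G$-bundle maps. This is not a fatal objection, because the exact identity $\mathrm{pr}\circ\hat{G}=\mathrm{pr}$ makes the pulled-back connection smooth, and the issue is removed by the same reparametrization device the paper invokes after (\ref{definition}): replace $\gamma$ and $\upsilon$ by $\gamma\circ\kappa$ and $\upsilon\circ\kappa$ with $\kappa$ smooth, nondecreasing, and flat at $0$, at $1$, and at the preimages of the corners of $\upsilon$; then $\tilde{\upsilon}$, hence $\hat{G}$ and $\bar{G}$, are smooth. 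The values of both sides are unaffected, since the modified cycles differ from the original ones by boundaries of thin chains (as in Lemma \ref{Aditivo} and Remark \ref{reparam}), over which the curvature integrates to zero. With that one remark added, your proof is complete.
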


\begin{proof}
The\ element $\phi\in\mathcal{G}$ induces an action of $\mathbb{Z}$ on $E$. We
have natural maps
\[
(\mathbf{P}\times E\times\mathbb{R})/\mathbb{Z}%
\begin{array}
[c]{c}%
\overset{q_{E}}{\nearrow}\\
\underset{q^{\mathbb{R}}}{\searrow}%
\end{array}%
\begin{array}
[c]{c}%
(\mathbf{P}\times E)/\mathcal{G}\\
\\
(\mathbf{P}\times\mathbb{R})/\mathbb{Z}%
\end{array}
\]
and $q_{E}^{\ast}\mathbf{A}_{\mathcal{G}}^{E}=\mathbf{A}_{\mathbb{Z}}%
^{E\times\mathbb{R}}=q_{\mathbb{R}}^{\ast}\mathbf{A}$. Hence we have
$q_{E}^{\ast}(\xi_{\mathbf{A}_{\mathcal{G}}^{E}}^{\vec{p}})=\xi_{\mathbf{A}%
_{\mathbb{Z}}^{E\times\mathbb{R}}}^{\vec{p}}=q_{\mathbb{R}}^{\ast}%
(\xi_{\mathbf{A}_{\mathbb{Z}}^{\mathbb{R}}}^{\vec{p}})$. If $\rho\colon
I\rightarrow\mathbb{R}$ is the inclusion, then by applying equations
(\ref{naturality}) and (\ref{def2}) we obtain
\begin{align*}
\xi_{\mathbf{A}_{\mathcal{G}}^{E}}^{\vec{p}}(F_{\phi}^{\gamma,y,\upsilon})  &
=\xi_{\mathbf{A}_{\mathcal{G}}^{E}}^{\vec{p}}(q_{E}\circ F_{\phi}%
^{\gamma,(y,0),(\upsilon,\rho)})=\xi_{\mathbf{A}_{\mathbb{Z}}^{E\times
\mathbb{R}}}^{\vec{p}}(F_{\phi}^{\gamma,(y,0),(\upsilon,\rho)})\\
& =\xi_{\mathbf{A}_{\mathbb{Z}}^{\mathbb{R}}}^{\vec{p}}(q_{\mathbb{R}}\circ
F_{\phi}^{\gamma,(y,0),(\upsilon,\rho)})=\xi_{\mathbf{A}_{\mathbb{Z}%
}^{\mathbb{R}}}^{\vec{p}}(F_{\phi}^{\gamma,y,\rho})=\xi_{\mathbf{A}%
_{\mathbb{\phi}}}^{\vec{p}}(f_{\phi}^{\gamma})=\Xi_{P}^{\vec{p}}(\phi,\gamma).
\end{align*}

\end{proof}

\begin{proposition}
i) If $\phi,\phi^{\prime}\in\mathcal{G}$ and $\gamma\in\mathcal{C}^{\phi
}(\mathcal{A}_{P}),\gamma^{\prime}\in\mathcal{C}_{\gamma(1)}^{\phi^{\prime}%
}(\mathcal{A}_{P})$ then we have $\Xi_{P}^{\vec{p}}(\phi^{\prime}\cdot
\phi,\gamma^{\prime}\ast\gamma)=\Xi_{P}^{\vec{p}}(\phi^{\prime},\gamma
^{\prime})+\Xi_{P}^{\vec{p}}(\phi,\gamma)$.

ii) If $\zeta$ is a curve on $M$ such that $\zeta(0)=\gamma(0)$, and
$\gamma\in\mathcal{C}^{\phi}(M)$\ then $\overleftarrow{\zeta}\ast\gamma
\ast(\phi\cdot\zeta)\in\mathcal{C}^{\phi}(M)$ and $\Xi_{P}^{\vec{p}}%
(\phi,\overleftarrow{\zeta}\ast\gamma\ast(\phi\cdot\zeta))=\Xi_{P}^{\vec{p}%
}(\phi,\gamma)$.
\end{proposition}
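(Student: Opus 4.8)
The plan is to reduce both identities to the additivity of a \emph{single} Cheeger--Chern--Simons character by means of Lemma \ref{LemmaE}. Fix a discrete subgroup $H\subset\mathcal{G}$ containing the automorphisms in question, a connected manifold $E$ on which $H$ acts freely, and set $Y=(C(P)\times E)/H$. By Lemma \ref{LemmaE}, for every $\psi\in H$, every $\delta\in\mathcal{C}^{\psi}(\mathcal{A}_{P})$, and every choice of basepoint $y\in E$ and path $\upsilon\in\mathcal{C}_{y}^{\psi}(E)$ one has $\Xi_{P}^{\vec{p}}(\psi,\delta)=\xi_{\mathbf{A}_{H}^{E}}^{\vec{p}}\big((F_{\psi}^{\delta,y,\upsilon})_{\ast}[M_{\psi}]\big)$, where $[M_{\psi}]$ is the fundamental cycle of the mapping torus. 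Since $\xi_{\mathbf{A}_{H}^{E}}^{\vec{p}}$ is a group homomorphism $Z_{2r-1}(Y)\to\mathbb{R}/\mathbb{Z}$ with curvature $p(F_{\mathbf{A}_{H}^{E}})$, and since its value is independent of the auxiliary data $(y,\upsilon)$, it suffices in each case to exhibit a relation among the pushed-forward fundamental cycles in $Y$ that holds up to the boundary of a chain on which $p(F_{\mathbf{A}_{H}^{E}})$ integrates to zero.

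For part i) I would take $H=\langle\phi,\phi'\rangle$, fix $y\in E$, and choose $\upsilon_{1}\in\mathcal{C}_{y}^{\phi}(E)$ and $\upsilon_{2}\in\mathcal{C}_{\phi_{E}(y)}^{\phi'}(E)$, so that $\upsilon_{1}\ast\upsilon_{2}\in\mathcal{C}_{y}^{\phi'\cdot\phi}(E)$. Cutting the mapping torus of $\phi'\cdot\phi$ built from the concatenation $\gamma\ast\gamma'$ along the slice at the midpoint splits its image cycle into two cylinder chains $c_{1}$ and $c_{2}$, the first carrying the data $(\gamma,\upsilon_{1})$ and the second $(\gamma',\upsilon_{2})$. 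The relevant end slices of $c_{1}$ and $c_{2}$ are identified in $Y$ by the deck transformations $\phi$ and $\phi'\cdot\phi$ precisely because we quotient by all of $H$; gluing each cylinder's ends therefore recovers $(F_{\phi}^{\gamma,y,\upsilon_{1}})_{\ast}[M_{\phi}]$ and $(F_{\phi'}^{\gamma',\phi_{E}(y),\upsilon_{2}})_{\ast}[M_{\phi'}]$. Hence $(F_{\phi'\cdot\phi}^{\gamma\ast\gamma'})_{\ast}[M_{\phi'\cdot\phi}]=(F_{\phi}^{\gamma,y,\upsilon_{1}})_{\ast}[M_{\phi}]+(F_{\phi'}^{\gamma',\phi_{E}(y),\upsilon_{2}})_{\ast}[M_{\phi'}]+\partial u$, where $u$ collects the thin reparametrization chains produced by the gluings (cf.\ Lemma \ref{Aditivo}), which are degenerate and satisfy $\int_{u}p(F_{\mathbf{A}_{H}^{E}})=0$. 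Applying $\xi_{\mathbf{A}_{H}^{E}}^{\vec{p}}$ and Lemma \ref{LemmaE} then yields the additivity statement.

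For part ii) I would take $H=\langle\phi\rangle$ and exploit the freedom in the auxiliary path. Writing $\upsilon=\overleftarrow{\eta}\ast\upsilon_{0}\ast(\phi\cdot\eta)$ with $\upsilon_{0}\in\mathcal{C}_{y}^{\phi}(E)$ and $\eta$ a path in $E$ starting at $y$, the map $F_{\phi}^{\overleftarrow{\zeta}\ast\gamma\ast(\phi\cdot\zeta),\,y,\,\upsilon}$ splits over the three subintervals into an ``outgoing'' cylinder carrying $(\overleftarrow{\zeta},\overleftarrow{\eta})$, the $\gamma$-cylinder carrying $(\gamma,\upsilon_{0})$, and an ``incoming'' cylinder carrying $(\phi\cdot\zeta,\phi\cdot\eta)$. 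Applying the deck transformation $\phi\in H$ shows that the incoming cylinder has exactly the same image in $Y$ as the outgoing one but with the opposite orientation, so the two cancel as singular chains up to a thin chain. What remains is the $\gamma$-cylinder, whose ends close up in $Y$ as for $F_{\phi}^{\gamma,y,\upsilon_{0}}$; thus the two fundamental cycles agree up to $\partial(\mathrm{thin})$ and $\xi_{\mathbf{A}_{H}^{E}}^{\vec{p}}$ assigns them the same value, giving $\Xi_{P}^{\vec{p}}(\phi,\overleftarrow{\zeta}\ast\gamma\ast(\phi\cdot\zeta))=\Xi_{P}^{\vec{p}}(\phi,\gamma)$.

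I expect the main obstacle in both parts to be the same cycle-level bookkeeping: verifying rigorously that the difference of the pushed-forward fundamental cycles is the boundary of a chain on which $p(F_{\mathbf{A}_{H}^{E}})$ integrates to an integer (in fact to zero). This demands care with the orientations of the cut cylinders, with the slice identifications imposed by the several deck transformations in $Y$, and with the standard thin chains relating a concatenation $a\ast b$ to the sum $a+b$ (Lemma \ref{Aditivo}). Once the cycle identity is in place, the conclusions follow formally from the homomorphism property of $\xi_{\mathbf{A}_{H}^{E}}^{\vec{p}}$ together with the choice-independence in Lemma \ref{LemmaE}; I would also note, as in equation (\ref{def2}), that the case $H=\mathbb{Z}$, $E=\mathbb{R}$ recovers the original definition (\ref{definition}), so nothing is lost by passing to the auxiliary manifold $E$.
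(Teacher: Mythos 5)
Your proposal is correct and takes essentially the same route as the paper: both parts are reduced, via Lemma \ref{LemmaE}, to identities between pushed-forward fundamental cycles in $(\mathbf{P}\times E)/H$ (with $H=\langle\phi,\phi'\rangle$ acting freely on $E$ for part i), and the cancellation of the two $\zeta$-cylinders under the deck transformation for part ii)), after which the homomorphism property of the Cheeger--Chern--Simons character finishes the argument. The only difference is cosmetic: the paper asserts the cycle identities outright (and in part ii) uses the constant path $c_{0}$, i.e.\ your $\eta$ taken constant, with $E=\mathbb{R}$), whereas you spell out the cut-and-glue, orientation, and thin-chain bookkeeping behind them.
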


\begin{proof}
i) Let $\mathcal{G}$ be the subgroup of $\mathrm{Aut}^{+}P$ generated by
$\phi$ and $\phi^{\prime}$ and $E$ a connected manifold in which $\mathcal{G}$
acts freely. We chose $y\in E$, $\upsilon\in\mathcal{C}_{y}^{\phi}(E)$,
$\upsilon^{\prime}\in\mathcal{C}_{y}^{\phi^{\prime}}(E)$ and we have
$F_{\phi^{\prime}\cdot\phi}^{\gamma^{\prime}\ast\gamma,y,\upsilon^{\prime}%
\ast\upsilon}=F_{\phi}^{\gamma,y,\upsilon}+F_{\phi^{\prime}}^{\gamma^{\prime
},y^{\prime},\upsilon^{\prime}}$ on $Z_{2k-1}((\mathbf{P}\times E)/\mathcal{G}%
)$, and the result follows from Lemma \ref{LemmaE}.

ii) We denote by $c_{0}$ the constant curve with value $0\in\mathbb{R}$ and we
define $\gamma_{1}=\overleftarrow{\zeta}\ast\gamma\ast(\phi\cdot\zeta)$ and
$\upsilon_{1}=c_{0}\ast\rho\ast(\phi\cdot c_{0})\in\mathcal{C}_{y}^{\phi}(E)$.
Then $F_{\phi}^{\gamma,0,\rho}=F_{\phi}^{\gamma_{1},0,\upsilon_{1}}$ on
$Z_{2k-1}((\mathbf{P})_{\phi})$ and the result follows.
\end{proof}

Next we compute the equivariant curvature of $\Xi_{P}^{\vec{p}}$ and we show
that the conditions iii) and iv) in the definition of equivariant differential
character are satisfied.

\begin{proposition}
If $\gamma\in\mathcal{C}^{e}(\mathcal{A}_{P})$ and $\gamma=\partial\nu$ for
$\nu\in C_{2}(M)$ then $\Xi_{P}^{\vec{p}}(\phi,\gamma)=\int_{\nu}\omega
_{P}^{\vec{p}}.$
\end{proposition}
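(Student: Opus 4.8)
The statement to be proved is precisely condition iii) of Definition \ref{Defequi} for the character $\Xi_P^{\vec{p}}$ (here $\phi=e$ since $\gamma\in\mathcal{C}^e(\mathcal{A}_P)$, and $\nu$ is a $2$-chain in $\mathcal{A}_P$ bounding the loop $\gamma$), and the plan is to deduce it from the defining property $\xi_A^{\vec{p}}(\partial u)=\int_u p(F_A)\operatorname{mod}\mathbb{Z}$ of the Cheeger-Chern-Simons character, using the hypothesis that $\gamma$ bounds in order to realise the relevant cycle as a boundary. Since $\phi=e$, the mapping torus bundle is the product $P\times S^1\rightarrow M\times S^1$ and $\mathbf{A}_e=\mathrm{pr}^\ast\mathbf{A}$. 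Starting from the second description (\ref{def2}) of $\Xi_P^{\vec{p}}$, and using the naturality (\ref{naturality}) of $\xi^{\vec{p}}$ together with $\mathrm{ev}^\ast\mathbf{A}=\mathbb{A}$, I would first rewrite
\[
\Xi_P^{\vec{p}}(e,\gamma)=\xi_{\mathbf{A}}^{\vec{p}}\big(\mathrm{ev}_\ast([M]\times\gamma)\big),
\]
where $\mathrm{ev}\colon M\times\mathcal{A}_P\rightarrow C(P)$ is the evaluation map, $[M]$ is the fundamental cycle of $M$, and $[M]\times\gamma\in Z_{2r-1}(M\times\mathcal{A}_P)$ is the cross product of $[M]$ with the loop $\gamma$.

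Next I would use the filling. Writing $\gamma=\partial\nu$ and crossing with the closed fundamental cycle $[M]$ gives, since $\partial[M]=0$,
\[
[M]\times\gamma=\partial\big([M]\times\nu\big)\quad\text{in }C_{2r-1}(M\times\mathcal{A}_P).
\]
Pushing forward by $\mathrm{ev}$ and invoking the defining property of the Cheeger-Chern-Simons character then yields
\[
\Xi_P^{\vec{p}}(e,\gamma)=\xi_{\mathbf{A}}^{\vec{p}}\big(\partial\,\mathrm{ev}_\ast([M]\times\nu)\big)=\int_{\mathrm{ev}_\ast([M]\times\nu)}p(F_{\mathbf{A}})=\int_{[M]\times\nu}\mathrm{ev}^\ast p(F_{\mathbf{A}})\operatorname{mod}\mathbb{Z}.
\]
Finally, since $\mathrm{ev}^\ast F_{\mathbf{A}}=\mathbb{F}$ and the iterated integration identity $\int_N\int_M=\int_{M\times N}$ applies to the $2$-chain $\nu$, I would conclude
\[
\int_{[M]\times\nu}p(\mathbb{F})=\int_\nu\Big(\int_M p(\mathbb{F})\Big)=\int_\nu\omega_P^p,
\]
using that $\omega_P^p=\int_M p(\mathbb{F})$ is exactly the $2$-form (non-equivariant) part of $\varpi_P^p$ recorded in (\ref{exmu}); this is the claimed equality, and it simultaneously confirms $\mathrm{curv}(\Xi_P^{\vec{p}})=\omega_P^p$.

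The conceptual content is short, so the work is mostly bookkeeping. The step I expect to require the most care is the passage to the boundary relation at the level of the chains on which $\xi_{\mathbf{A}}^{\vec{p}}$ is evaluated: one must fix orientation conventions for the cross product $[M]\times(-)$, and check that $\Xi_P^{\vec{p}}(e,\gamma)$ really equals $\xi_{\mathbf{A}}^{\vec{p}}$ evaluated on $\mathrm{ev}_\ast([M]\times\gamma)$ rather than on a chain differing from it by a thin correction (cf. Lemma \ref{Aditivo}), and that the filling $\nu$ is taken smooth so that $\mathrm{ev}_\ast([M]\times\nu)$ is an admissible smooth chain. A secondary point is the mild non-smoothness of $A_e^\gamma$ in the $S^1$-direction, which is handled by the reparametrisation device discussed after (\ref{definition}); working throughout with the finite-dimensional bundle $\mathbf{A}\rightarrow C(P)$ rather than the infinite-dimensional $\mathbb{A}$ keeps the Cheeger-Chern-Simons formalism strictly applicable.
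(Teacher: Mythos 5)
Your proof is correct, and its engine is the same as the paper's: express $\Xi_P^{\vec p}(e,\gamma)$ as a Cheeger--Chern--Simons character evaluated on a cycle that becomes a boundary after crossing $\nu$ with $M$, apply the defining property $\xi_A^{\vec p}(\partial u)=\int_u p(F_A)\operatorname{mod}\mathbb{Z}$, and finish with the Fubini identity $\int_{[M]\times\nu}p(\mathbb{F})=\int_\nu\int_M p(\mathbb{F})=\int_\nu\omega_P^p$. The difference is in the packaging, and yours is somewhat cleaner and more general. The paper first invokes contractibility of $\mathcal{A}_P$ to replace $\nu$ by a disk map $D^2\to\mathcal{A}_P$, and then fills upstairs: it constructs $F^\nu\colon M\times D^2\to C(P)\times D^2$ with $\partial F^\nu=f_e^\gamma$ and extends the connection $\mathbf{A}_e$ over $\mathbf{P}\times D^2$ before applying the curvature property there. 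You instead collapse the circle factor at the outset, using naturality along $\mathrm{ev}\circ(\mathrm{id}_M\times\gamma)$ so that everything takes place for the fixed connection $\mathbf{A}$ on the finite-dimensional $C(P)$; the filling is then the purely chain-level identity $[M]\times\gamma=\partial([M]\times\nu)$ (with sign $+1$ because $\dim M=2r-2$ is even), which works for an arbitrary smooth $2$-chain $\nu$, with no contractibility assumption, no disk realization, and no extension of connections. The caveats you flag are the right ones and are indeed harmless: cycles differing by (boundaries of) thin chains, as in Lemma \ref{Aditivo}, give equal values of any CCS character, since the discrepancy is the integral of $p(F)$ over a thin chain, which vanishes; and the non-smoothness of $A_e^\gamma$ in the $t$-direction is handled by the reparametrization device the paper records after equation (\ref{definition}).
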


\begin{proof}
As $\mathcal{A}_{P}$ is contractible, we can assume that $\nu$ is a map
$\nu\colon D^{2}\rightarrow M$, where $D^{2}\subset\mathbb{R}^{2}$ is a disk.
We recall that for $\phi=e$ the mapping torus is simply $M_{e}=M\times S^{1}$,
and we have a map $f_{e}^{\gamma}\colon M\times S^{1}\rightarrow C(P)\times
S^{1}$ such that $\Xi_{P}^{\vec{p}}(e,\gamma)=\xi_{\mathbf{A}_{\phi}}^{\vec
{p}}(f_{e}^{\gamma})$. We define $F^{\nu}\colon M\times D^{2}\rightarrow
C(P)\times D^{2}$ by $F^{\nu}(x,y)=(x,\sigma_{\nu(y)}(x),y)$ and we have
$\partial F^{\nu}=f_{e}^{\gamma}$. The connection $\mathbf{A}_{e}$ has an
obvious extension $\overline{\mathbf{A}}_{e}$\ to $\mathbf{P}\times D^{2}$ and
using equation (\ref{universalA2}) we have%
\[
\Xi_{P}^{\vec{p}}(\phi,\gamma)=\xi_{\mathbf{A}_{e}}^{\vec{p}}(f_{e}^{\gamma
})=\xi_{\overline{\mathbf{A}}_{e}}^{\vec{p}}(\partial F^{\nu})=%
%TCIMACRO{\tint \nolimits_{M\times D^{2}}}%
%BeginExpansion
{\textstyle\int\nolimits_{M\times D^{2}}}
%EndExpansion
(F^{\nu})^{\ast}p(\overline{\mathbf{F}}_{e})=%
%TCIMACRO{\tint \nolimits_{\nu}}%
%BeginExpansion
{\textstyle\int\nolimits_{\nu}}
%EndExpansion%
%TCIMACRO{\tint \nolimits_{M}}%
%BeginExpansion
{\textstyle\int\nolimits_{M}}
%EndExpansion
p(\mathbb{F}).
\]
We conclude that $\mathrm{curv}(\Xi_{P}^{\vec{p}})=\int_{M}p(\mathbb{F})$ and
that condition iii) is satisfied.
\end{proof}

Finally we prove condition vi).

\begin{proposition}
Let $X\in\mathrm{aut}P\,$and $A\in\mathcal{A}_{P}$. If $\nu_{t}^{x,X}%
(s)=\exp(tsX)\cdot A$ then we have $\left.  \frac{d}{dt}\right\vert _{t=0}%
\Xi_{P}^{\vec{p}}(\exp(tX),\nu_{t}^{x,X})=\mu_{P}^{p}(A)$.
\end{proposition}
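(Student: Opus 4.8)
The goal is to verify condition iv') from the earlier proposition, namely that $\left.\frac{d}{dt}\right\vert_{t=0}\Xi_P^{\vec p}(\exp(tX),\nu_t^{x,X})=\mu_P^p(A)$, where $\nu_t^{x,X}(s)=\exp(tsX)\cdot A$ is the curve traced out by flowing the initial connection $A$ along the fundamental vector field generated by $X\in\mathrm{aut}P$. Since conditions i) and iii) have already been established, verifying iv') suffices to conclude that $\Xi_P^{\vec p}$ satisfies condition iv), and hence is a genuine equivariant differential character with equivariant curvature $\varpi_P^p$. The strategy is to reduce the computation of the $t$-derivative of the Chern-Simons action on the mapping torus to the infinitesimal variation formula for Cheeger-Chern-Simons characters, Lemma \ref{LieCS}, and then to recognize the resulting integral as the explicit comoment map $\mu_P^p$ given in equation (\ref{exmu}).

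First I would fix $X$ and $A$ and set $\phi_t=\exp(tX)\in\mathrm{Aut}^+P$, so that $\Xi_P^{\vec p}(\phi_t,\nu_t^{x,X})=\mathrm{CS}^{\vec p}(A_{\phi_t}^{\nu_t})=\xi_{A_{\phi_t}^{\nu_t}}^{\vec p}(M_{\phi_t})$ is the Chern-Simons action of the connection $A_{\phi_t}^{\nu_t}$ on the mapping torus bundle $P_{\phi_t}\to M_{\phi_t}$. The difficulty is that both the connection \emph{and} the underlying mapping-torus bundle depend on $t$, so one cannot directly differentiate inside a fixed $\xi^{\vec p}_{(\cdot)}$. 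The plan to circumvent this is to use the second description of $\Xi_P^{\vec p}$ from Lemma \ref{LemmaE}, working on a \emph{single} fixed ambient object: take $E$ to be a contractible manifold on which the one-parameter group $\exp(tX)$ acts (for instance $E=\mathbb{R}$ with the translation action coming from $X$, realizing the flow of $X$), together with the connection $\mathbf{A}_{\mathbb{Z}}^{\mathbb R}$ (equivalently $\mathbf A_\phi$) on the quotient. Then $\Xi_P^{\vec p}(\phi_t,\nu_t^{x,X})=\xi_{\mathbf A_{\phi_t}}^{\vec p}(f_{\phi_t}^{\nu_t})$, and the $t$-dependence is repackaged into a smooth family of maps $f_{\phi_t}^{\nu_t}\colon M_{\phi_t}\to (C(P))_{\phi_t}$ pulling back a fixed tautological connection. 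The key simplification is that at $t=0$ the mapping torus degenerates to $M\times S^1$ with the product (trivial-monodromy) structure, and the derivative in $t$ picks out exactly the infinitesimal deformation of the connection in the $\mathrm{aut}P$-direction determined by $X$.

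The main technical step is to identify this infinitesimal deformation as the adjoint-type field $v_A(X)\in\Omega^0(M,\mathrm{ad}P)$. Here I would invoke the description recalled in the section on equivariant characteristic classes: for an invariant connection the function $A(X_P)$ defines $v_A(X)$, and the $X$-direction variation of the tautological connection $\mathbb A$ (or $\mathbf A$) along the flow is governed precisely by $v_A(X)$. Applying Lemma \ref{LieCS} with $a=v_A(X)$ (up to the appropriate sign, which must be tracked carefully through the conventions $X_M(x)=\left.\frac{d}{dt}\right\vert_{t=0}\exp(-tX)_M(x)$ and the definition of the momentum $\mu^\Theta_X=-\tfrac{i}{2\pi}\Theta(X_{\mathcal U})$), the derivative becomes $r\int_M p\big(v_A(X),F_A,\overset{(r-1)}{\ldots},F_A\big)$ integrated appropriately over the circle/fiber direction, which is exactly $-(\mu_P^p)_A(X)$ up to sign by equation (\ref{exmu}).

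The hard part, and the step where I expect to spend the most care, is the \textbf{sign and normalization bookkeeping}: reconciling (a) the sign convention in the fundamental vector field $X_M$, (b) the adiabatic/integration orientation on the mapping torus versus the product $M\times S^1$, and (c) the transgression normalization in $Tp$ and in Lemma \ref{LieCS}, so that the final answer comes out as $+\mu_P^p(A)$ rather than its negative. Once the geometric reduction to Lemma \ref{LieCS} is in place, the remaining computation is the routine one of matching $r\int_M p(v_A(X),F_A,\ldots,F_A)$ against the stated formula for $\mu_P^p$; the conceptual content is entirely in setting up the fixed ambient bundle so that differentiation in $t$ legitimately commutes with the Chern-Simons evaluation.
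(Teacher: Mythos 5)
Your overall strategy --- reduce the $t$-derivative of the Chern-Simons action on the mapping torus to Lemma \ref{LieCS} and match the result against the explicit formula (\ref{exmu}) for $\mu_{P}^{p}$ --- is the same as the paper's, and you correctly identify the central difficulty: for $t\neq0$ the connection $A_{\phi_{t}}^{\nu_{t}}$ lives on a bundle $P_{\phi_{t}}\rightarrow M_{\phi_{t}}$ that itself varies with $t$, so Lemma \ref{LieCS}, which concerns a family of connections on a \emph{fixed} bundle, does not apply directly. But the mechanism you propose to resolve this does not work. Lemma \ref{LemmaE} with $E=\mathbb{R}$ does not produce a fixed ambient object: the quotients $(C(P)\times E)/\mathcal{G}$ and $(\mathbf{P}\times E)/\mathcal{G}$ are taken with respect to the $\mathbb{Z}$-action generated by $\phi_{t}$, which varies with $t$, so the target of $F_{\phi_{t}}^{\nu_{t},y,\upsilon}$ (and its domain $M_{\phi_{t}}$) still moves with $t$. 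Your subsequent assertion that ``the derivative in $t$ picks out exactly the infinitesimal deformation of the connection in the $\mathrm{aut}P$-direction'' is precisely the step that needs proof, and nothing in your outline supplies it.

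The missing idea in the paper is a trivialization of the whole family of mapping tori: the maps $W_{t}^{M}(x,s)=(\phi_{st}(x),s)$ descend to diffeomorphisms $w_{t}^{M}\colon M\times S^{1}\rightarrow M_{\phi_{t}}$ (and similarly $w_{t}^{C(P)}$, $w_{t}^{\mathbf{P}}$), under which $(w_{t}^{C(P)})^{-1}\circ f_{\phi_{t}}^{\nu_{t}}\circ w_{t}^{M}$ becomes the $t$-\emph{independent} map $f_{e}^{c_{A}}(x,s)=(x,\sigma_{A}(x),s)$, while all the $t$-dependence is concentrated in the family of connections $\mathbf{B}_{t}=(w_{t}^{\mathbf{P}})^{\ast}\mathbf{A}_{\phi_{t}}$ on the fixed bundle $\mathbf{P}\times S^{1}\rightarrow C(P)\times S^{1}$. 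Only then does Lemma \ref{LieCS} apply, and its input $\dot{\mathbf{B}}_{0}$ must still be computed; the paper does this by a Lie derivative argument (the flow of $Y(y,s)=(-sX_{\mathbf{P}}(y),0)$ is $W_{t}^{\mathbf{P}}$, and invariance of $\mathbf{A}$ kills the term in $L_{\overline{X}}$), giving $\dot{\mathbf{B}}_{0}=-\mathrm{pr}^{\ast}(v_{\mathbf{A}}(X))\,ds$. Note this is a $1$-form carrying a $ds$ factor, not the bare section $v_{A}(X)$ as in your sketch: Lemma \ref{LieCS} requires $a\in\Omega^{1}$, and it is the integration of $ds$ over $S^{1}$ that collapses $\int_{M\times S^{1}}$ to $\int_{M}$ and yields $-r\int_{M}p(v_{A}(X),F_{A},\ldots,F_{A})=\mu_{P}^{p}(X)$ with the correct sign. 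So the sign question you flag is settled automatically by this computation, but the computation itself, and the trivialization that makes it meaningful, are absent from your proposal; as written, the argument has a genuine gap at its crucial step.
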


\begin{proof}
The map $W_{t}^{M}\colon M\times\mathbb{R}\rightarrow M\times\mathbb{R}$,
$w_{t}^{M}(x,s)=(\phi_{st}(x),s)$ satisfies $W_{t}^{M}(e\cdot(x,s))=\phi
_{t}\cdot W_{t}^{M}(x,s)$ and hence it projects onto a diffeomorphism
$w_{t}^{M}\colon M\times S^{1}\rightarrow M_{\phi_{t}}$, and we have similar
maps for $\mathbf{P}$ and $C(P)$. The\ composition $(w_{t}^{C(P)})^{-1}\circ
f_{\phi}^{\gamma}\circ w_{t}^{M}$%
\[
M\times S^{1}\overset{w_{t}^{M}}{\longrightarrow}M_{\phi_{t}}\overset
{f_{\phi_{t}}^{\sigma_{t}}}{\longrightarrow}(C(P)\mathcal{)}_{\phi_{t}%
}\overset{(w_{t}^{C(P)})^{-1}}{\longrightarrow}C(P)\times S^{1}%
\]
is the $t$-independent map $f_{e}^{c_{A}}(x,s)=(x,\sigma_{A}(x),s)$. If
$\mathbf{B}_{t}=(w_{t}^{\mathbf{P}})^{\ast}\mathbf{A}_{\phi_{t}}$ then
$\Xi_{P}^{\vec{p}}(\phi_{t},\sigma_{t})=\xi_{\mathbf{B}_{t}}^{\vec{p}}%
(f_{e}^{c_{A}})$ and by Lemma \ref{LieCS} we obtain%
\begin{align}
\left.  \tfrac{d}{dt}\right\vert _{t=0}\Xi_{P}^{\vec{p}}(\phi_{t},\sigma_{t})
& =\left.  \tfrac{d}{dt}\right\vert _{t=0}\xi_{\mathbf{B}_{t}}^{\vec{p}}%
(f_{e}^{c_{A}})=r%
%TCIMACRO{\tint \nolimits_{M\times S^{1}}}%
%BeginExpansion
{\textstyle\int\nolimits_{M\times S^{1}}}
%EndExpansion
(f_{e}^{c_{A}})^{\ast}p(\mathbf{\dot{B}}_{0},\mathbb{F}_{e},\overset
{(r-1)}{\ldots},\mathbb{F}_{e})\nonumber\\
& =r%
%TCIMACRO{\tint \nolimits_{M\times S^{1}}}%
%BeginExpansion
{\textstyle\int\nolimits_{M\times S^{1}}}
%EndExpansion
p((f_{e}^{c_{A}})^{\ast}\mathbf{\dot{B}}_{0},F_{A},\overset{(r-1)}{\ldots
},F_{A}).\label{der}%
\end{align}
\bigskip

The connection $\mathbf{B}_{t}$ is the projection of the connection
$\mathbf{C}_{t}=(W_{t}^{\mathbf{P}})^{\ast}\mathrm{pr}^{\ast}\mathbf{A}$ to
$\mathbf{P}\times S^{1}$. Hence $\mathbf{\dot{B}}_{0}$ is the projection of
$\mathbf{\dot{C}}_{0}$. The vector field vector $Y\in\mathfrak{X}%
(\mathbf{P}\times\mathbb{R})$ given by\footnote{The minus sign appears by our
sign convention in the definition of the fundamental vector field}
$Y(y,s)=(-sX_{\mathbf{P}}(y),0)$ has $W_{t}^{\mathbf{P}}$\ as\ its flow. If we
define the vector $\overline{X}(y,s)=(X_{\mathbf{P}}(y),0)$ then by the
$\mathrm{Aut}^{+}P$-invariance of $\mathbf{A}$ we have $L_{\overline{X}%
}(\mathrm{pr}^{\ast}\mathbf{A})=0$ and
\begin{align*}
\mathbf{\dot{C}}_{0}  & =\left.  \tfrac{d}{dt}\right\vert _{t=0}%
(W_{t}^{\mathbf{P}})^{\ast}(\mathrm{pr}^{\ast}\mathbf{A)}=L_{Y}(\mathrm{pr}%
^{\ast}\mathbf{A)})=-sL_{\overline{X}}(\mathrm{pr}^{\ast}\mathbf{A)}%
)-(\mathbb{\iota}_{\overline{X}}\mathrm{pr}^{\ast}\mathbf{A)})ds\\
& =-\mathrm{pr}^{\ast}(v_{\mathbf{A}}(X))ds.
\end{align*}
Using equations (\ref{universalA2}) and (\ref{der}) we conclude that%
\begin{align*}
\left.  \frac{d}{dt}\right\vert _{t=0}\Xi_{P}^{\vec{p}}(\phi_{t},\sigma_{t})
& =-r%
%TCIMACRO{\tint \nolimits_{M\times S^{1}}}%
%BeginExpansion
{\textstyle\int\nolimits_{M\times S^{1}}}
%EndExpansion
p(v_{A}(X)ds,F,\overset{(r-1)}{\ldots},F)\\
& =-r%
%TCIMACRO{\tint \nolimits_{M}}%
%BeginExpansion
{\textstyle\int\nolimits_{M}}
%EndExpansion
p(v_{A}(X),F,\overset{(r-1)}{\ldots},F)\cdot%
%TCIMACRO{\tint \nolimits_{S^{1}}}%
%BeginExpansion
{\textstyle\int\nolimits_{S^{1}}}
%EndExpansion
ds\\
& =-r%
%TCIMACRO{\tint \nolimits_{M}}%
%BeginExpansion
{\textstyle\int\nolimits_{M}}
%EndExpansion
p(v_{A}(X),F,\overset{(r-1)}{\ldots},F)=\mu_{P}^{p}(X).
\end{align*}

\end{proof}

We conclude from the preceding results our main result:

\begin{theorem}
\label{final}$\Xi_{P}^{\vec{p}}$ is a $\mathrm{Aut}^{+}P$-equivariant
differential character on $\mathcal{A}_{P}$ with equivariant curvature
$\varpi_{P}^{p}$.
\end{theorem}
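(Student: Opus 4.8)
The plan is to verify that $\Xi_P^{\vec{p}}$ meets the four defining conditions i)--iv) of a $\mathrm{Aut}^{+}P$-equivariant differential character in Definition \ref{Defequi}, taking for its equivariant curvature the form $\varpi_P^p = \omega_P^p + \mu_P^p \in \Omega_{\mathrm{Aut}^{+}P}^{2}(\mathcal{A}_P)$ built in Section \ref{SectGeoConn}. Since the three propositions immediately preceding this statement already establish these four conditions, together with the identifications $\mathrm{curv}(\Xi_P^{\vec{p}}) = \int_M p(\mathbb{F}) = \omega_P^p$ and momentum $\mu^{\Xi_P^{\vec{p}}} = \mu_P^p$, the theorem itself is obtained by assembling them; below I indicate how I would organize that assembly and where the substance lies.

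The essential device throughout is the finite-dimensional reformulation (\ref{def2}), $\Xi_P^{\vec{p}}(\phi,\gamma) = \xi_{\mathbf{A}_\phi}^{\vec{p}}(f_\phi^\gamma)$, together with its generalization in Lemma \ref{LemmaE} on an auxiliary manifold $E$ carrying a free action of the relevant discrete subgroup $\mathcal{G} \subset \mathrm{Aut}^{+}P$. This transports each claim about $\Xi_P^{\vec{p}}$ into a statement about the ordinary Cheeger-Chern-Simons character $\xi$ of the tautological connection $\mathbf{A}$ on the bundle of connections $C(P)$, where naturality (\ref{naturality}) and the variation formula (Lemma \ref{LieCS}) apply. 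For conditions i) and ii) I would exhibit chain-level identities of the maps $F_\phi^{\gamma,y,\upsilon}$ into $(\mathbf{P}\times E)/\mathcal{G}$: taking $\mathcal{G} = \langle\phi,\phi'\rangle$ gives $F_{\phi'\cdot\phi}^{\gamma'\ast\gamma,\,y,\,\upsilon'\ast\upsilon} = F_\phi^{\gamma,y,\upsilon} + F_{\phi'}^{\gamma',y',\upsilon'}$ on cycles, and the additivity of $\xi$ yields i); a similar identity reducing $\overleftarrow{\zeta}\ast\gamma\ast(\phi\cdot\zeta)$ to $\gamma$ yields ii). For condition iii), with $\phi = e$ the mapping torus is $M\times S^{1}$ and $f_e^\gamma$ bounds $F^\nu \colon M\times D^{2} \to C(P)\times D^{2}$; naturality then gives $\Xi_P^{\vec{p}}(e,\gamma) = \int_{M\times D^{2}}(F^\nu)^{\ast}p(\mathbb{F}) = \int_\nu \int_M p(\mathbb{F})$, so that $\mathrm{curv}(\Xi_P^{\vec{p}}) = \omega_P^p$ as required.

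The real work, and the step I expect to be the main obstacle, is condition iv); I would verify it through its equivalent form iv'), that is, by differentiating $\Xi_P^{\vec{p}}(\exp(tX),\nu_t^{x,X})$ at $t=0$. The trick is to use the family of diffeomorphisms $w_t$ that trivialize the mapping tori $M_{\phi_t}$ back to $M\times S^{1}$, so that the maps into $C(P)\times S^{1}$ collapse to a single $t$-independent map $f_e^{c_A}$ while the connection becomes a $t$-dependent pull-back $\mathbf{B}_t$; Lemma \ref{LieCS} then writes the derivative as $r\int_{M\times S^{1}} p((f_e^{c_A})^{\ast}\dot{\mathbf{B}}_0, F_A, \ldots, F_A)$. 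The crux is to identify $\dot{\mathbf{B}}_0$: the generating vector field of $w_t^{\mathbf{P}}$ has the form $-sX_{\mathbf{P}}$, so by the $\mathrm{Aut}^{+}P$-invariance of $\mathbf{A}$ the Lie derivative collapses to $\dot{\mathbf{C}}_0 = -\mathrm{pr}^{\ast}(v_{\mathbf{A}}(X))\,ds$, and integrating the $ds$-factor over $S^{1}$ recovers exactly $-r\int_M p(v_A(X),F_A,\ldots,F_A) = \mu_P^p(X)$ from (\ref{exmu}). Getting this Lie-derivative bookkeeping and the sign conventions right is where the genuine content of the proof resides.
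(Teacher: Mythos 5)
Your proposal is correct and follows essentially the same route as the paper's own proof: reduction to the finite-dimensional tautological connection on $C(P)$ via (\ref{def2}) and Lemma \ref{LemmaE}, chain-level identities of the maps $F_{\phi}^{\gamma,y,\upsilon}$ for conditions i) and ii), the bounding map $F^{\nu}\colon M\times D^{2}\rightarrow C(P)\times D^{2}$ for condition iii), and the mapping-torus trivialization $w_{t}$ together with Lemma \ref{LieCS} and the computation $\dot{\mathbf{C}}_{0}=-\mathrm{pr}^{\ast}(v_{\mathbf{A}}(X))\,ds$ for condition iv). You also correctly identify condition iv) as the step carrying the genuine analytic content, which is exactly where the paper invests its effort.
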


\begin{remark}
\label{families final}If we work with $\mathfrak{G}$-equivariant families of
connections as in Remark \ref{Finite}, then we obtain the following result:
For any $\mathfrak{G}$-equivariant family $(\mathfrak{G}$,$T,B)$ of
connections on $P\ $the map $\Xi_{P,B}^{\vec{p}}\colon\mathcal{C}%
^{\mathfrak{G}}(T)\rightarrow\mathbb{R}/\mathbb{Z}$ defined by $\Xi
_{P,B}^{\vec{p}}(\phi,\gamma)=\mathrm{CS}^{\vec{p}}(A_{\rho(\phi)}%
^{b\circ\gamma})$ (see Section \ref{SectEquiCS} for the notation) is a
$\mathfrak{G} $-equivariant differential character on $T$, i.e., $\Xi
_{P,B}^{\vec{p}}\in\hat{H}_{\mathfrak{G}}^{2}(T)$. In the particular case of
the family $(\mathrm{Aut}^{+}P$,$\mathcal{A}_{P},\mathbb{A})$ we obtain
Theorem \ref{final}. Conversely, given $\Xi_{P}^{\vec{p}}$, we have $\Xi
_{P,B}^{\vec{p}}=(b,\rho)^{\ast}\Xi_{P}^{\vec{p}}$. Hence this result is an
equivalent formulation of Theorem \ref{final} that does not involve the
infinite dimensional strucutures of $\mathcal{A}_{P}$ and $\mathrm{Aut}^{+}P$.
\end{remark}

We can define the Chern-Simons bundle in terms of the equivariant differential
character $\Xi_{P}^{\vec{p}}$ by choosing a background connection $A_{0}%
\in\mathcal{A}_{P}$ and by applying Theorem \ref{PropAction} to the form
$\lambda=\int_{M}Tp(\mathbb{A},\overline{A}_{0})\in\Omega^{1}(\mathcal{A}%
_{P})$. Precisely, the Chern-Simons bundle is the $\mathrm{Aut}^{+}%
P$-equivariant $U(1)$-bundle given by the trivial bundle $\mathcal{A}%
_{P}\times U(1)\rightarrow\mathcal{A}_{P}$ with the action defined by the
cocycle $\alpha\colon\mathrm{Aut}^{+}P\times\mathcal{A}_{P}\rightarrow
\mathbb{R}/\mathbb{Z}$ where $\alpha_{\phi}(A)=\int_{\gamma}\lambda-\Xi
_{P}^{\vec{p}}(\gamma)$ for any $\gamma\in\mathcal{C}_{A}^{\phi}(M)$. We show
in the next section that in the case in which $P$ is a trivial bundle over a
2-manifold,\ this definition coincides with the usual definition of the
Chern-Simons bundle.

\subsection{The equivariant holonomy of the Chern-Simons bundle for trivial
bundles\label{SectEquiHolCS}}

We recall the construction given in \cite{RSW} of the Chern-Simons bundle for
a trivial principal $G$-bundle $P=M\times G\rightarrow M$ over a compact
oriented 2-manifold $M$\ without boundary. In \cite{RSW} it is considered the
case of the group $G=SU(2)\,$but the construction is valid for any trivial
bundle. If $p\in I_{\mathbb{Z}}^{2}(G)$, we define a cocycle $\alpha$ on
$\mathcal{A}_{P}$ for the group $\mathcal{G}=\mathrm{Gau}P\simeq
\mathcal{C}^{\infty}(M,G)$ in the following way. Let $\widetilde{M}$ be a
compact $3$-manifold with boundary $\partial\widetilde{M}=M$ and let
$\widetilde{P}=\widetilde{M}\times G$. We denote by $A_{0}$ and $\widetilde
{A}_{0}$ the connections associated to the product structure on $M\times G$
and $\widetilde{M}\times G$ respectively.

We define a cocycle $\alpha\colon\mathcal{G}\times$ $\mathcal{A}%
_{P}\rightarrow\mathbb{R}/\mathbb{Z}$ by setting for $A\in\mathcal{A}_{P}$ and
$\phi\in\mathcal{G}$
\[
\alpha_{\phi}(A)=\mathrm{CS}^{p}(\widetilde{\phi}(\widetilde{A}))-\mathrm{CS}%
^{p}(\widetilde{A})\operatorname{mod}\mathbb{Z},
\]
where $\widetilde{A}\in\mathcal{A}_{\widetilde{P}}$ and $\widetilde{\phi}%
\in\mathrm{Gau}\widetilde{P}$ are extensions of $A$ and $\phi$ to
$\widetilde{P}$. It is easily seen (see \cite{RSW}) that the condition $p\in
I_{\mathbb{Z}}^{2}(G)$ implies that $\alpha_{\phi}(A)\operatorname{mod}%
\mathbb{Z}$ is independent of the extensions $\widetilde{A}$, $\widetilde
{\phi}$\ chosen, and that $\alpha$ satisfies the cocycle condition. Hence
$\alpha$ defines a $\mathcal{G}$-equivariant $U(1)$-bundle $\mathcal{U}%
^{p}\rightarrow\mathcal{A}_{P}$. Furthermore, the form $\lambda=\int
_{M}Tp(\mathbb{A},\overline{A}_{0})\in\Omega^{1}(\mathcal{A}_{P})$ (see
Section \ref{SectGeoConn})\ determines a $\mathcal{G}$-invariant connection
$\Theta_{P}^{p}=\vartheta+2\pi i\lambda$ on $\mathcal{U}^{p}\rightarrow
\mathcal{A}_{P}$. The $\mathcal{G}$-equivariant $U(1)$-bundle with connection
$(\mathcal{U}_{P}^{p},\Theta_{P}^{p}$) is called the Chern-Simons bundle of $p
$. Next we compute the $\mathcal{G}$-equivariant holonomy of $\Theta_{P}^{p} $
and we show that it coincides (up to a sign) with the $\mathcal{G}
$-equivariant character $\Xi_{P}^{\vec{p}}$.

\begin{proposition}
If $\phi\in\mathcal{G}$ and $\gamma\in\mathcal{C}^{\phi}(\mathcal{A}_{P})$
then we have $\mathrm{hol}_{\phi}^{\Theta_{P}^{p}}(\gamma)=-\mathrm{CS}%
^{p}(A_{\phi}^{\gamma})$.
\end{proposition}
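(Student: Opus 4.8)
The plan is to compute the holonomy directly from the cocycle description of the Chern-Simons bundle and then to identify the result with $-\mathrm{CS}^{p}(A_{\phi}^{\gamma})$ through a gluing argument. Since $(\mathcal{U}^{p},\Theta_{P}^{p})$ is the trivial bundle $\mathcal{A}_{P}\times U(1)$ with action given by the cocycle $\alpha$ and connection $\Theta_{P}^{p}=\vartheta+2\pi i\lambda$, I would first rewrite this as $\Theta_{P}^{p}=\vartheta-2\pi i(-\lambda)$ and apply Proposition \ref{holAlfa} with $-\lambda$ in place of $\lambda$. Writing $A=\gamma(0)$, this gives at once
\[
\mathrm{hol}_{\phi}^{\Theta_{P}^{p}}(\gamma)=-\int_{\gamma}\lambda-\alpha_{\phi}(A).
\]
Hence the whole statement reduces to the gluing identity $\mathrm{CS}^{p}(A_{\phi}^{\gamma})=\int_{\gamma}\lambda+\alpha_{\phi}(A)$, which is what I would spend the rest of the proof establishing.

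The first half of this identity is a fiber-integration computation. Recall $\lambda=\int_{M}Tp(\mathbb{A},\overline{A}_{0})$, and consider the map $\Gamma\colon M\times I\to M\times\mathcal{A}_{P}$, $\Gamma(x,t)=(x,\gamma(t))$. By the universal property of $\mathbb{A}$ one has $\Gamma^{\ast}\mathbb{A}=A^{\gamma}$ and $\Gamma^{\ast}\overline{A}_{0}=\overline{A}_{0}^{I}$, the $t$-independent extension of $A_{0}$ to $M\times I$; by naturality of the transgression, $\Gamma^{\ast}Tp(\mathbb{A},\overline{A}_{0})=Tp(A^{\gamma},\overline{A}_{0}^{I})$. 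Combining the definition of $\int_{\gamma}$ of a fiber-integrated form with the Fubini relation $\int_{I}\int_{M}=\int_{M\times I}$ recalled in Section 2, I obtain
\[
\int_{\gamma}\lambda=\int_{M\times I}Tp(A^{\gamma},\overline{A}_{0}^{I}).
\]

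For the second half I would express $\mathrm{CS}^{p}(A_{\phi}^{\gamma})$ by extending over a bounding manifold. Let $\widetilde{M}_{\phi}$ be the mapping torus of $\widetilde{\phi}$ acting on $\widetilde{P}=\widetilde{M}\times G$; since $\partial\widetilde{M}=M$ we have $\partial\widetilde{M}_{\phi}=M_{\phi}$. I extend $\gamma$ to a path $\widetilde{\gamma}$ of connections on $\widetilde{P}$ with $\widetilde{\gamma}(0)=\widetilde{A}$ and $\widetilde{\gamma}(1)=\widetilde{\phi}(\widetilde{A})$; this produces a connection on $\widetilde{M}_{\phi}$ restricting to $A_{\phi}^{\gamma}$ on the boundary. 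By the defining property of the Cheeger-Chern-Simons character one then has $\mathrm{CS}^{p}(A_{\phi}^{\gamma})=\int_{\widetilde{M}_{\phi}}p(F)\bmod\mathbb{Z}$. Pulling this back along the quotient $\widetilde{M}\times I\to\widetilde{M}_{\phi}$, writing $p(F)=dTp(\,\cdot\,,\widetilde{A}_{0}^{I})$ with $\widetilde{A}_{0}^{I}$ the (flat) $t$-independent product connection on $\widetilde{M}\times I$, and applying Stokes, the boundary $\partial(\widetilde{M}\times I)$ splits into the two end faces $\widetilde{M}\times\{0\},\widetilde{M}\times\{1\}$, which contribute $\mathrm{CS}^{p}(\widetilde{\phi}(\widetilde{A}))-\mathrm{CS}^{p}(\widetilde{A})=\alpha_{\phi}(A)$, and the side face $\partial\widetilde{M}\times I=M\times I$, which contributes $\int_{M\times I}Tp(A^{\gamma},\overline{A}_{0}^{I})=\int_{\gamma}\lambda$ by the previous paragraph. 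This yields $\mathrm{CS}^{p}(A_{\phi}^{\gamma})=\int_{\gamma}\lambda+\alpha_{\phi}(A)$ and, together with the first paragraph, the claim.

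The main obstacle is the bookkeeping in the last step. I must fix orientation conventions on the corner manifold $\widetilde{M}\times I$ so that the two end faces reassemble precisely into $+\alpha_{\phi}(A)$ and the side face into $+\int_{\gamma}\lambda$, matching the $+$ sign in $\Theta_{P}^{p}=\vartheta+2\pi i\lambda$; a sign error here would spoil the final equality. Equally important is the well-definedness of $\mathrm{CS}^{p}(A_{\phi}^{\gamma})\bmod\mathbb{Z}$ under all the auxiliary choices, namely the extensions $\widetilde{M}$, $\widetilde{A}$, $\widetilde{\phi}$, the path $\widetilde{\gamma}$, and the bounding $\widetilde{M}_{\phi}$. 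This is exactly where the hypothesis $p\in I_{\mathbb{Z}}^{r}(G)$ enters: any two choices differ by the integral of $p(F)$ over a closed $2r$-manifold, which is an integer, so the $\mathbb{R}/\mathbb{Z}$-value is unaffected; this is the same integrality already used in \cite{RSW} to make $\alpha$ well defined. Finally, as in the remark following equation (\ref{definition}), I would reparametrize $\gamma$ so that it is constant near $t=0,1$, ensuring that the connection glued across the seam of the mapping torus is genuinely smooth.
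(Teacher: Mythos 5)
Your proposal is correct and follows essentially the same route as the paper's proof: both reduce the statement via Proposition \ref{holAlfa} to the identity $\mathrm{CS}^{p}(A_{\phi}^{\gamma})=\int_{\gamma}\lambda+\alpha_{\phi}(A)$, and both establish it by extending $\phi$, $A$, $\gamma$ to $\widetilde{P}\rightarrow\widetilde{M}$, expressing the Chern-Simons values through transgression forms, applying Stokes, and invoking the defining property of the Cheeger-Chern-Simons character on the bounding mapping torus $\widetilde{M}_{\widetilde{\phi}}$, whose boundary is $M_{\phi}$. The only organizational difference is that you run Stokes directly on the finite-dimensional manifold $\widetilde{M}\times I$, while the paper packages the same computation as the fiber-integration identity $d\beta=\int_{\widetilde{M}}p(\widetilde{\mathbb{F}})-r^{\ast}\lambda$ on $\mathcal{A}_{\widetilde{P}}$ and then integrates it along the lifted curve $\widetilde{\gamma}$; after Fubini these are the same calculation.
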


\begin{proof}
We denote by $\widetilde{\mathbb{A}}$ the tautological connection on
$\widetilde{M}\times\mathcal{A}_{\widetilde{P}}$, by $\widetilde{\mathbb{F}}$
its curvature, we set $\widetilde{\mathbb{\mathcal{G}}}=\mathrm{Gau}%
\widetilde{P}$ and we denote by $r\colon\mathcal{A}_{\widetilde{P}}%
\rightarrow\mathcal{A}_{P}$ the restriction map. Given $\phi\in\mathcal{G}$
and $\gamma\in\mathcal{C}_{A}^{\phi}(\mathcal{A}_{P})$, we can find extensions
$\widetilde{\phi}\in\widetilde{\mathbb{\mathcal{G}}}$, $\widetilde{A}%
\in\mathcal{A}_{\widetilde{P}}$ of $\phi$ and $A$. We consider the mapping
tori bundles $P_{\phi}\rightarrow M_{\phi}$, $\widetilde{P}_{\widetilde{\phi}%
}\rightarrow\widetilde{M}_{\widetilde{\phi}}$ and we have $P_{\phi}%
=\partial\widetilde{P}_{\widetilde{\phi}}$. We choose an extension
$A_{\widetilde{\phi}}^{\widetilde{\gamma}}$ of $A_{\phi}^{\gamma}$ to
$\widetilde{P}_{\widetilde{\phi}}$,\ that corresponds to a curve
$\widetilde{\gamma}\in\mathcal{C}_{\widetilde{A}}^{\widetilde{\phi}%
}(\mathcal{A}_{\widetilde{P}})$.$\,\ $If $\lambda=\int_{M}Tp(\mathbb{A}%
,\overline{A}_{0})\in\Omega^{1}(\mathcal{A}_{P})$ and $\beta=\int
_{\widetilde{M}}Tp(\widetilde{\mathbb{A}},\overline{\widetilde{A}}_{0}%
)\in\Omega^{0}(\mathcal{A}_{P})$ by Stokes Theorem we have $d\beta
=\int_{\widetilde{M}}p(\widetilde{\mathbb{F}})-\int_{M}Tp(\widetilde
{\mathbb{A}},\overline{\widetilde{A}}_{0})=\int_{\widetilde{M}}p(\widetilde
{\mathbb{F}})-r^{\ast}\lambda$. By applying Proposition \ref{holAlfa} we
obtain
\begin{align*}
\mathrm{hol}_{\phi}^{\Theta}(\gamma)  & =-%
%TCIMACRO{\tint \nolimits_{\gamma}}%
%BeginExpansion
{\textstyle\int\nolimits_{\gamma}}
%EndExpansion
\lambda-\alpha_{\phi}(x)\\
& =-%
%TCIMACRO{\tint \nolimits_{\gamma}}%
%BeginExpansion
{\textstyle\int\nolimits_{\gamma}}
%EndExpansion
\lambda-%
%TCIMACRO{\tint \nolimits_{\widetilde{M}}}%
%BeginExpansion
{\textstyle\int\nolimits_{\widetilde{M}}}
%EndExpansion
Tp(\widetilde{\gamma}(1),B_{0})+%
%TCIMACRO{\tint \nolimits_{\widetilde{M}}}%
%BeginExpansion
{\textstyle\int\nolimits_{\widetilde{M}}}
%EndExpansion
Tp(\widetilde{\gamma}(0),B_{0})\\
& =-%
%TCIMACRO{\tint \nolimits_{\widetilde{\gamma}}}%
%BeginExpansion
{\textstyle\int\nolimits_{\widetilde{\gamma}}}
%EndExpansion
r^{\ast}\lambda-\beta(\widetilde{\gamma}(1))+\beta(\widetilde{\gamma}(0))=-%
%TCIMACRO{\tint \nolimits_{\widetilde{\gamma}}}%
%BeginExpansion
{\textstyle\int\nolimits_{\widetilde{\gamma}}}
%EndExpansion
(r^{\ast}\lambda+d\beta)\\
& =-%
%TCIMACRO{\tint \nolimits_{\widetilde{\gamma}}}%
%BeginExpansion
{\textstyle\int\nolimits_{\widetilde{\gamma}}}
%EndExpansion%
%TCIMACRO{\tint \nolimits_{\widetilde{M}}}%
%BeginExpansion
{\textstyle\int\nolimits_{\widetilde{M}}}
%EndExpansion
p(\widetilde{\mathbb{F}})=-%
%TCIMACRO{\tint \nolimits_{\widetilde{M}\times I}}%
%BeginExpansion
{\textstyle\int\nolimits_{\widetilde{M}\times I}}
%EndExpansion
p(F_{A^{\widetilde{\gamma}}})=-%
%TCIMACRO{\tint \nolimits_{\widetilde{M}_{\widetilde{\phi}}}}%
%BeginExpansion
{\textstyle\int\nolimits_{\widetilde{M}_{\widetilde{\phi}}}}
%EndExpansion
p(F_{A_{\widetilde{\phi}}^{\widetilde{\gamma}}})\\
& =-\xi_{A_{\phi}^{\gamma}}^{\vec{p}}(M_{\phi})=-\mathrm{CS}(A_{\phi}^{\gamma
}).
\end{align*}

\end{proof}

In the case of $G=SU(2)$ considered in \cite{RSW} any principal $SU(2)$-bundle
over a manifold of dimension $2$ or $3$ is trivializable, and we can apply the
preceding construction to define the Chern-Simons line bundle. For other
groups (for example $G=U(1)$) there are nontrivial principal $G$-bundles and
this construction cannot\ applied. However, our construction in Section
\ref{SectEquiCS}\ can be applied in this case. Furthermore, our
construction\ is valid in any even dimension $m=2k-2$, for arbitrary group $G
$ and $\Xi_{P}^{\vec{p}}$\ is equivariant with respect of the action of the
group $\mathrm{Aut}^{+}P$ (and not only for gauge transformations).

Finally we relate the character $\Xi_{P}^{\vec{p}}$ with the bundles defined
in \cite{CSconnections}. If $\mathrm{pr}\colon P\times\mathcal{A}_{P}%
\times\mathbb{R}\rightarrow P\times\mathcal{A}_{P}\,\ $is the projection, for
any $\phi\in\mathrm{Aut}^{+}P$ the form $\mathrm{pr}^{\ast}\mathbb{A}_{\phi}$
is $\phi$-invariant and projects onto a connection $\mathbb{A}_{\phi}$ on
$(P\times\mathcal{A}_{P})_{\phi}\rightarrow(M\times\mathcal{A}_{P})_{\phi}$.
The differential character $\xi_{\mathbb{A}_{\phi}}^{\vec{p}}\in\hat{H}%
^{2k}((M\times\mathcal{A}_{P})_{\phi})$ can be integrated over $M$ and we
obtain a differential character $\int_{M}\xi_{\mathbb{A}_{\phi}}^{\vec{p}}%
\in\hat{H}^{2}((\mathcal{A}_{P})_{\phi})$. If $\gamma\in\mathcal{C}_{A}^{\phi
}(\mathcal{A}_{P})$ then we can define a curve $\gamma_{\phi}$ on
$(\mathcal{A}_{P})_{\phi}$ by setting $\gamma_{\phi}(t)=[\gamma(t),t]_{\sim
\phi}$. We define $\Lambda_{P}^{\vec{p}}(\phi,\gamma)=\left(  \int_{M}%
\xi_{\mathbb{A}_{\phi}}^{\vec{p}}\right)  (\gamma_{\phi})$. It is shown in
\cite{CSconnections}\ that if $A_{0}$ is a connection on $P\rightarrow M$,
then the form $\lambda=%
%TCIMACRO{\tint _{M}}%
%BeginExpansion
{\textstyle\int_{M}}
%EndExpansion
Tp(\mathbb{A},\overline{A}_{0})\in\Omega^{1}(\mathcal{A}_{P})$ satisfies
$d\lambda=\omega_{P}^{p}$, the map $\beta_{\phi}(A)=\int_{\gamma}%
\lambda-\Lambda_{P}^{\vec{p}}(\phi,\gamma)$ for $\phi\in\mathrm{Aut}^{+}P$,
$\gamma\in\mathcal{C}_{A}^{\phi}(\mathcal{A}_{P})$ satisfies the cocycle
condition and the connection $\Theta^{\vec{p}}=\vartheta-2\pi i\lambda$ is
invariant under the action of $\mathrm{Aut}^{+}P$ on $\mathcal{A}_{P}\times
U(1)$ induced by the cocycle $\alpha$. It follows from Proposition
\ref{holAlfa}\ that $\Lambda_{P}^{\vec{p}}=\mathrm{hol}_{\mathrm{Aut}^{+}%
P}^{\Theta^{\vec{p}}}$. It can be proved using the definition of the fiber
integral of differential characters \ and equation\ (\ref{def2})\ that
$\Xi_{P}^{\vec{p}}=\Lambda_{P}^{\vec{p}}=\mathrm{hol}_{\mathrm{Aut}^{+}%
P}^{\Theta^{\vec{p}}}$. This result provides an alternative proof of the fact
that $\Xi_{P}^{\vec{p}}\in\hat{H}_{\mathrm{Aut}^{+}P}^{2}(\mathcal{A}_{P})$,
but it needs to use fiber integration of differential characters and also the
Cheeger-Chern-Simons construction applied to infinite dimensional bundles.

In the rest of the paper we show how the results of \cite{CSconnections} can
be obtained form the equivariant differential character $\Xi_{P}^{\vec{p}}$.

\begin{example}
Let $M$ be a Riemann surface, $P=M\times SU(2)$ the trivial principal
$SU(2)$-bundle and $\vec{p}$ the characteristic pair corresponding to the
second Chern class. In this case $\varpi_{P}^{p}$ coincide with the
Atiyah-Bott symplectic structure $\omega\in\Omega^{2}(\mathcal{A}_{P})$ and
moment map given by $\omega_{P}^{p}(a,b)=-\frac{1}{4\pi^{2}}\int
_{M}\mathrm{tr}(a\wedge b)$ and $(\mu_{P}^{p})_{X}(A)=\frac{1}{4\pi^{2}}%
\int_{M}\mathrm{tr}(v_{A}(X)\wedge F)$, for $A\in\mathcal{A}_{P}$,
$a,b\in\Omega^{1}(\Sigma,\mathrm{ad}P)\simeq T_{A}(\mathcal{A}_{P})$ and
$X\in\mathrm{aut}P\,$. We have a $\mathrm{Aut}^{+}P$-equivariant differential
character\ $\Xi_{P}^{\vec{p}}\in\hat{H}_{\mathrm{Aut}^{+}P}^{2}(\mathcal{A}%
_{P})$ that by Corollary \ref{ExistenceBundle}\ determines (up to an
isomorphism) a $\mathrm{Aut}^{+}P$-equivariant $U(1)$-bundle $\mathcal{U}%
^{\vec{p}}\rightarrow\mathcal{A}_{P}$ with connection $\Theta^{\vec{p}}.$

If $\mathcal{F}_{P}\subset\mathcal{A}_{P}$ is the space of flat connections,
then we have $\mathcal{F}_{P}\subset(\mu_{P}^{p})^{-1}(0)$, and by Proposition
\ref{quot2}\ $\Xi_{P}^{\vec{p}}$ projects onto a $\mathrm{Aut}^{+}%
P$-equivariant differential character $\underline{\Xi}_{P}^{\vec
{p},\mathcal{F}}\in\hat{H}_{\mathrm{Aut}^{+}P}^{2}(\mathcal{F}_{P}%
/\mathrm{Gau}^{\ast}P)$ on the moduli space of flat connections. The
connection $\Theta^{\vec{p}}$ projects to the quotient $\mathcal{U}^{\vec{p}%
}/\mathrm{Gau}^{\ast}P\rightarrow\mathcal{F}_{P}/\mathrm{Gau}^{\ast}P$\ and
this bundle is isomorphic to Quillen's determinant line bundle (see
\cite{CSconnections}).
\end{example}

\subsection{Action by Gauge transformations\label{SectGau}}

Now we consider that $M$ is an arbitrary oriented manifold, and $P\rightarrow
M$ a principal $G$-bundle. Let $C$ be a compact oriented manifold of dimension
$2r-2$ and let $c\colon C\rightarrow M$ be a smooth map (for example $C$ can
be a submanifold of $M$). Then we have a group homomorphism $\rho
\colon\mathrm{Gau}P\rightarrow\mathrm{Aut}^{+}(c^{\ast}P)$\ and a $\rho
$-equivariant map $f\colon\mathcal{A}_{P}\rightarrow\mathcal{A}_{c^{\ast}P}$.
By Proposition \ref{pullback} we have an equivariant differential character
$(f,\rho)^{\ast}\Xi_{c^{\ast}P}^{\vec{p}}\in\hat{H}_{\mathrm{Gau}P}%
^{2}(\mathcal{A}_{P})$.

\subsection{Manifolds with boundary}

Let $M$ be an oriented manifold with compact boundary $\partial M$. We denote
by $\upsilon\colon\partial M\rightarrow M$ the inclusion map. If $P\rightarrow
M$ is a principal $G$-bundle, then we denote by $\partial P$ the bundle
$\upsilon^{\ast}P\rightarrow\partial M$. We have a group homomorphism
$\rho\colon\mathrm{Aut}^{+}P\rightarrow\mathrm{Aut}^{+}\partial P$ and a
$\rho$-equivariant map $f\colon\mathcal{A}_{P}\rightarrow\mathcal{A}_{\partial
P}$. If $\vec{p}=(p,\Upsilon)$ is a characteristic pair of degree $r$ and
$\dim M=2r-1$, then we have the Chern-Simons character $\Xi_{\partial P}%
^{\vec{p}}\in\hat{H}^{2}(\mathcal{A}_{\partial P})$. By Proposition
\ref{pullback}\ these data determine a differential character $(f,\rho)^{\ast
}\Xi_{\partial P}^{\vec{p}}\in\hat{H}_{\mathrm{Aut}^{+}P}^{2}(\mathcal{A}%
_{P})$. As commented in the Introduction, the equivariant bundles associated
to the character $\Xi_{\partial P}^{\vec{p}}$ are called the Chern-Simons
bundles because the Chern-Simons action on $M$ determines a $\mathrm{Aut}%
^{+}P$-equivariant section of $(f,\rho)^{\ast}\Xi_{\partial P}^{\vec{p}}$ (see
\cite{CSconnections}). This fact depends on the background connection $A_{0}$
chosen in the definition of the bundle and of the Chern-Simons action. We
present an intrinsic version of this result.

If $q\in I^{r}(M)$ is a Weil polynomial of degree $r$ and $M$ is a manifold
with compact boundary of dimension $2r-1$, then $\int_{M}q(\mathbb{F}%
)\in\Omega^{1}(\mathcal{A}_{P})^{\mathrm{Aut}^{+}P}$.

\begin{proposition}
If $q\in I^{r}(M)$ is a Weil polynomial of degree $r$ and $M$ is a compact
oriented manifold of dimension $2r-1$, then $\int_{M}q(\mathbb{F})\in
\Omega^{1}(\mathcal{A}_{P})^{\mathrm{Aut}^{+}P}$ and $\varsigma(\int
_{M}q(\mathbb{F}))(\phi,\gamma)=\int_{M_{\phi}}q(F_{A_{\phi}^{\gamma}})$.
\end{proposition}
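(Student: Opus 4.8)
The plan is to establish the two assertions separately: first that $\int_M q(\mathbb{F})$ is a $\mathrm{Aut}^+P$-invariant $1$-form, which is exactly what permits one to form $\varsigma(\int_M q(\mathbb{F}))\in\hat{H}_{\mathrm{Aut}^+P}^2(\mathcal{A}_P)$ via Example \ref{beta}, and second the explicit evaluation of this character on a pair $(\phi,\gamma)$.

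For the invariance I would start from the fact recorded in Section \ref{SectGeoConn} that the tautological connection $\mathbb{A}$ on $\mathbb{P}=P\times\mathcal{A}_P\rightarrow M\times\mathcal{A}_P$ is $\mathrm{Aut}P$-invariant; hence so is its curvature $\mathbb{F}$, and since $q\in I^r(G)$ is $\mathrm{Ad}$-invariant the characteristic form $q(\mathbb{F})\in\Omega^{2r}(M\times\mathcal{A}_P)$ is $\mathrm{Aut}P$-invariant. Each $\phi\in\mathrm{Aut}^+P$ acts on $M\times\mathcal{A}_P$ as the product of the induced diffeomorphism of $M$ and $\phi_{\mathcal{A}_P}$; because the induced diffeomorphism preserves the orientation of $M$, fiber integration over $M$ intertwines this action with $\phi_{\mathcal{A}_P}$, i.e. $\phi_{\mathcal{A}_P}^{\ast}(\int_M q(\mathbb{F}))=\int_M(\phi_{M\times\mathcal{A}_P}^{\ast}q(\mathbb{F}))=\int_M q(\mathbb{F})$. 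This yields $\int_M q(\mathbb{F})\in\Omega^1(\mathcal{A}_P)^{\mathrm{Aut}^+P}$, the degree $2r-(2r-1)=1$ being forced by $\dim M=2r-1$.

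For the evaluation, set $\beta=\int_M q(\mathbb{F})$, so that $\varsigma(\beta)(\phi,\gamma)=\int_{\gamma}\beta\bmod\mathbb{Z}$. The key observation is that the connection $A^{\gamma}$ on $P\times I\rightarrow M\times I$ with $A^{\gamma}|_{P\times\{t\}}=\gamma(t)$ is precisely the pullback of $\mathbb{A}$ under the map $\Gamma\colon M\times I\rightarrow M\times\mathcal{A}_P$, $\Gamma(x,t)=(x,\gamma(t))$; consequently $q(F_{A^{\gamma}})=\Gamma^{\ast}q(\mathbb{F})$. Using Fubini for fiber integrals together with the fact that integration over $M$ commutes with base change along $\gamma$, I would compute
\[
\int_{M\times I}q(F_{A^{\gamma}})=\int_I\int_M\Gamma^{\ast}q(\mathbb{F})=\int_I\gamma^{\ast}\Big(\int_M q(\mathbb{F})\Big)=\int_{\gamma}\beta.
\]
Finally, since $A^{\gamma}$ descends to $A_{\phi}^{\gamma}$ under the quotient $\pi\colon M\times I\rightarrow M_{\phi}$ (so $q(F_{A^{\gamma}})=\pi^{\ast}q(F_{A_{\phi}^{\gamma}})$) and $\pi$ restricts to an orientation-preserving diffeomorphism off the measure-zero gluing locus, one has $\int_{M_{\phi}}q(F_{A_{\phi}^{\gamma}})=\int_{M\times I}q(F_{A^{\gamma}})$. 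Combining the two displays gives $\varsigma(\beta)(\phi,\gamma)=\int_{\gamma}\beta=\int_{M_{\phi}}q(F_{A_{\phi}^{\gamma}})$ in $\mathbb{R}/\mathbb{Z}$, as claimed.

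The routine parts are the pullback identity $q(F_{A^{\gamma}})=\Gamma^{\ast}q(\mathbb{F})$ and the descent to the mapping torus. The main point requiring care is the compatibility of fiber integration with the two group actions and with base change, including the orientation bookkeeping: it is exactly the orientation-preservation built into $\mathrm{Aut}^+P$ (and into the quotient map onto $M_{\phi}$) that removes the sign ambiguities in $\int_M$ and makes both the invariance statement and the final equality hold without correction terms.
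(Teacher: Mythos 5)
Your proof is correct and takes essentially the same route as the paper's: the evaluation is the identical chain of identities (unwind $\varsigma$, use Fubini for fiber integrals, identify $q(F_{A^{\gamma}})$ with the pullback of $q(\mathbb{F})$ by naturality, then pass to the mapping torus via the quotient map), which the paper writes in one line as $\int_{\gamma}\int_{M}q(\mathbb{F})=\int_{M\times\gamma}q(\mathbb{F})=\int_{M\times I}q(F_{A^{\gamma}})=\int_{M_{\phi}}q(F_{A_{\phi}^{\gamma}})$. The only difference is that you also spell out the $\mathrm{Aut}^{+}P$-invariance of $\int_{M}q(\mathbb{F})$ (via invariance of $\mathbb{A}$ and compatibility of fiber integration with orientation-preserving automorphisms), a point the paper asserts without proof.
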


\begin{proof}
For any $(\phi,\gamma)\in\mathcal{C}^{\mathrm{Aut}^{+}P}(\mathcal{A}_{P})$ we
have $\varsigma(\int_{M}q(\mathbb{F}))(\phi,\gamma)=\int_{\gamma}\int
_{M}q(\mathbb{F})=\int_{M\times\gamma}q(\mathbb{F})=\int_{M\times
I}q(F_{A^{\gamma}})=\int_{M_{\phi}}q(F_{A_{\phi}^{\gamma}}).$
\end{proof}

\begin{proposition}
\label{CS}We have $(f,\rho)^{\ast}\Xi_{\partial P}^{\vec{p}}=\varsigma
(\int_{M}p(\mathbb{F}))$.
\end{proposition}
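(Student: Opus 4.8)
The plan is to evaluate both sides on an arbitrary pair $(\phi,\gamma)\in\mathcal{C}^{\mathrm{Aut}^{+}P}(\mathcal{A}_{P})$ and to check that they agree in $\mathbb{R}/\mathbb{Z}$; since a $G$-equivariant differential character is by definition a map on $\mathcal{C}^{G}(M)$, equality of the two underlying maps is exactly equality in $\hat{H}_{\mathrm{Aut}^{+}P}^{2}(\mathcal{A}_{P})$. First I would unpack the left-hand side. By Proposition \ref{pullback} together with the definition (\ref{definition}) of the Cheeger-Chern-Simons character,
\[
((f,\rho)^{\ast}\Xi_{\partial P}^{\vec{p}})(\phi,\gamma)=\Xi_{\partial P}^{\vec{p}}(\rho(\phi),f\circ\gamma)=\mathrm{CS}^{\vec{p}}((\partial A)_{\rho(\phi)}^{f\circ\gamma}),
\]
where $(\partial A)_{\rho(\phi)}^{f\circ\gamma}$ is the connection on the mapping torus bundle $(\partial P)_{\rho(\phi)}\rightarrow(\partial M)_{\rho(\phi)}$ attached to the curve $f\circ\gamma\in\mathcal{C}^{\rho(\phi)}(\mathcal{A}_{\partial P})$. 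By the preceding Proposition (applied with $q=p$), the right-hand side is $\varsigma(\int_{M}p(\mathbb{F}))(\phi,\gamma)=\int_{M_{\phi}}p(F_{A_{\phi}^{\gamma}})$. Thus the whole statement reduces to the single identity $\mathrm{CS}^{\vec{p}}((\partial A)_{\rho(\phi)}^{f\circ\gamma})=\int_{M_{\phi}}p(F_{A_{\phi}^{\gamma}})\operatorname{mod}\mathbb{Z}$.

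The key geometric observation is that forming the mapping torus commutes with passing to the boundary. Since $\rho(\phi)=\phi|_{\partial P}$ and $f$ is the restriction map, the $\mathbb{Z}$-action used to build $(\partial M)_{\rho(\phi)}$ is the restriction of the one used to build $M_{\phi}$; hence $(\partial M)_{\rho(\phi)}=\partial(M_{\phi})$ and $(\partial P)_{\rho(\phi)}=\partial(P_{\phi})$, and $\partial(M_{\phi})$ is a closed oriented $(2r-1)$-manifold over which $\mathrm{CS}^{\vec{p}}$ is defined. Moreover, because $A^{f\circ\gamma}$ is by construction the restriction of $A^{\gamma}$ to $\partial P\times\mathbb{R}$, the induced connection $(\partial A)_{\rho(\phi)}^{f\circ\gamma}$ is exactly $A_{\phi}^{\gamma}|_{\partial(P_{\phi})}$. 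I would verify these identifications directly from the definition of $A^{\gamma}$ in Section \ref{SectEquiCS}. This bookkeeping is where I expect the only real care to be needed, since one must track the induced orientation on $\partial(M_{\phi})$ and the gluing along $M\times\partial I$ so that $\partial(M_{\phi})$ really carries the Chern-Simons action as claimed.

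With these identifications in hand the conclusion follows quickly. Writing $\iota\colon\partial(M_{\phi})\hookrightarrow M_{\phi}$ for the inclusion and using $\mathrm{CS}^{\vec{p}}(B)=\xi_{B}^{\vec{p}}(N)$ for a connection $B$ over a closed $(2r-1)$-manifold $N$, the naturality (\ref{naturality}) of the Cheeger-Chern-Simons character gives
\[
\mathrm{CS}^{\vec{p}}((\partial A)_{\rho(\phi)}^{f\circ\gamma})=\xi_{A_{\phi}^{\gamma}|_{\partial M_{\phi}}}^{\vec{p}}(\partial M_{\phi})=\xi_{A_{\phi}^{\gamma}}^{\vec{p}}(\iota_{\ast}[\partial M_{\phi}])=\xi_{A_{\phi}^{\gamma}}^{\vec{p}}(\partial[M_{\phi}]).
\]
Finally, $M_{\phi}$ is a $2r$-chain with $\partial[M_{\phi}]=\iota_{\ast}[\partial M_{\phi}]$, so the defining property $\xi_{A}^{\vec{p}}(\partial u)=\int_{u}p(F_{A})\operatorname{mod}\mathbb{Z}$ yields $\xi_{A_{\phi}^{\gamma}}^{\vec{p}}(\partial[M_{\phi}])=\int_{M_{\phi}}p(F_{A_{\phi}^{\gamma}})\operatorname{mod}\mathbb{Z}$, which matches the right-hand side computed above. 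Equality of the equivariant curvatures $(f,\rho)^{\ast}\varpi_{\partial P}^{p}=D\int_{M}p(\mathbb{F})$ then follows automatically, though it can also be checked directly from $p(\mathbb{F})=dTp(\mathbb{A},\overline{A}_{0})+\mathrm{pr}_{1}^{\ast}p(F_{0})$ and Stokes' theorem.
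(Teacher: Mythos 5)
Your proof is correct and follows essentially the same route as the paper's: unpack the pullback via Proposition \ref{pullback} and the definition of $\Xi_{\partial P}^{\vec{p}}$, identify $(\partial P)_{\rho(\phi)}=\partial(P_{\phi})$ and the restricted connection, invoke naturality of the Cheeger-Chern-Simons character, and conclude with the curvature property $\xi_{A_{\phi}^{\gamma}}^{\vec{p}}(\partial M_{\phi})=\int_{M_{\phi}}p(F_{A_{\phi}^{\gamma}})\operatorname{mod}\mathbb{Z}$ together with the preceding proposition. The only difference is that you spell out the bookkeeping (orientations, restriction of the $\mathbb{Z}$-action) that the paper leaves implicit.
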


\begin{proof}
Let us fix $(\phi,\gamma)\in\mathcal{C}^{\mathrm{Aut}^{+}P}(\mathcal{A}_{P})$
and define $\chi=(f,\rho)^{\ast}\Xi_{\partial P}^{\vec{p}}$. The curve
$\gamma\in\mathcal{C}^{\phi}(\mathcal{A}_{P})$ induces a curve $\partial
\gamma\in\mathcal{C}^{\phi}(\mathcal{A}_{\partial P})$. If $P_{\phi
}\rightarrow M_{\phi}$ is the mapping torus bundle of $P$, then we have
$\partial(P_{\phi})=(\partial P)_{\phi}$. Furthermore, by the properties of
the Cheeger-Chern-Simons characters and the preceding proposition we have
$\chi(\phi,\gamma)=\xi_{A_{\phi}^{\partial\gamma}}^{\vec{p}}((\partial
M)_{\phi})=\xi_{A_{\phi}^{\gamma}}^{\vec{p}}(\partial(M_{\phi}))=\int
_{M_{\phi}}p(F_{A_{\phi}^{\gamma}})=\varsigma(\int_{M}p(\mathbb{F}%
))(\phi,\gamma)$.
\end{proof}

In particular, it follows from Proposition \ref{anomalyC} that the
$\mathrm{Aut}^{+}P$-equivariant $U(1)$-bundle associated to $(f,\rho)^{\ast
}\Xi_{\partial P}^{\vec{p}}$ is trivial and hence it admits a $\mathrm{Aut}%
^{+}P$-invariant section. We hope that our approach using equivariant
differential characters could be used to study Chern-Simons theory for
arbitrary bundles and groups.

\subsection{Riemannian metrics and diffeomorphisms\label{Sec Metrics}}

Let $\vec{p}=(p,\Upsilon)$ a characteristic pair of degree $2k$ for the group
$Gl(4k-2,\mathbb{R})$. For example we can consider the pair corresponding to
the $k$-th Pontryagin class. Let $M$ be a compact oriented manifold without
boundary of dimension $n=4k-2$. We denote by $FM\rightarrow M$ the frame
bundle of $M$, by $\mathcal{M}_{M}$ the space of Riemannian metrics on $M$ and
by $\mathcal{D}_{M}^{+}$ the group of orientation preserving diffeomorphism of
$M$. As $\mathcal{D}_{M}^{+}$ acts in a natural way on $F(M)$ by
automorphisms, we have a natural homomorphism $\rho\colon\mathcal{D}_{M}%
^{+}\rightarrow\mathrm{Aut}^{+}F(M)$. The Levi-Civita map $LC\colon
\mathcal{M}_{M}\rightarrow\mathcal{A}_{FM}$ is $\rho$-equivariant and hence by
Proposition \ref{pullback} we have an equivariant differential character
$\Sigma_{\mathcal{M}_{M}}^{\vec{p}}=(LC,\rho)^{\ast}\Xi_{P}^{\vec{p}}\in
\hat{H}_{\mathcal{D}_{M}^{+}}^{2}(\mathcal{M}_{M})$ with curvature
$\sigma_{\mathcal{M}_{M}}^{p}=(LC,\rho)^{\ast}\varpi_{FM}^{p}$, that can be
written $\sigma_{\mathcal{M}_{M}}^{p}=\omega_{\mathcal{M}_{M}}^{p}%
+\mu_{\mathcal{M}_{M}}^{p}$.

We consider in more detail the case $k=1$. If $M$ is a Riemann surface of
genus $g>1$, and $\mathcal{M}_{M}^{-1}$ is the space of metrics of constant
curvature $-1$, then we have $\mathcal{M}_{M}^{-1}\subset(\mu_{\mathcal{M}%
_{M}}^{p})^{-1}(0)$ (see \cite{WP}). If $\mathcal{D}_{M}^{0}$\ denotes the
connected component with the identity on $\mathcal{D}_{M}^{+}$, then
$\mathcal{D}_{M}^{0}$ acts freely on $\mathcal{M}_{M}^{-1}$ and the
Teichm\"{u}ller space of $M$ is defined by $\mathcal{T}_{M}=\mathcal{M}%
_{M}^{-1}/\mathcal{D}_{M}^{0}$. As it is well known (e.g. see \cite{Tromba}),
$\mathcal{T}(M)$\ is a contractible manifold of real dimension $6g-6$.
Furthermore, it is proved in \cite{WP} that the form obtained on
$\mathcal{T}_{M}$ from $\omega_{\mathcal{M}_{M}}^{p}$ by symplectic reduction
is $\frac{1}{2\pi^{2}}\sigma_{\mathrm{WP}}$, where $\sigma_{\mathrm{WP}}$ is
the symplectic form of the Weil-Petersson metric on $\mathcal{T}_{M}$. By
Remark \ref{quot20} we obtain an equivariant differential character
$\underline{\Sigma}_{\mathcal{M}_{M}}^{\vec{p}}\in\hat{H}_{\Gamma_{M}}%
^{2}(\mathcal{T}_{M})$ with curvature $\frac{1}{2\pi^{2}}\sigma_{\mathrm{WP}}%
$, where $\Gamma_{M}=\mathcal{D}_{M}^{+}\mathbf{/}\mathcal{D}_{M}^{0}$ is the
mapping class group of $M$. By Corollary \ref{ExistenceBundle}\ $\underline
{\Sigma}_{\mathcal{M}_{M}}^{\vec{p}}$ determines (up to an isomorphism) a
$\Gamma_{M}$-equivariant $U(1)$-bundle with connection over $\mathcal{T}_{M}$.

\subsubsection{Manifolds with boundary}

Let $M$ be an oriented manifold of dimension $n=4k-1$ with compact boundary
$\partial M$. We have a homomorphism $\rho\colon\mathcal{D}_{M}^{+}%
\rightarrow\mathrm{Aut}^{+}F(\partial M)$ and a $\rho$-equivariant map
$f\colon\mathcal{M}_{M}\rightarrow\mathcal{A}_{F(\partial M)}$. \ By
Proposition \ref{pullback} we obtain an equivariant differential character
$(f,\rho)^{\ast}\Xi_{F(\partial M)}^{\vec{p}}\in\hat{H}_{\mathcal{D}_{M}^{+}%
}^{2}(\mathcal{M}_{M})$, and by Corollary \ref{ExistenceBundle} a
$\mathcal{D}_{M}^{+}$-equivariant $U(1)$-bundle with connection over
$\mathcal{M}_{M}$. We also have a result analogous to Proposition \ref{CS}.

\end{document}